\documentclass{article}
\usepackage[a4paper,margin=1.25in]{geometry}

\usepackage{graphicx} %
\usepackage{mathtools}
\usepackage{xcolor}
\usepackage{amsmath,amssymb,amsfonts,gensymb,amsthm}
\usepackage{physics}
\usepackage{stmaryrd}
\usepackage{algorithm}
\usepackage{algorithmicx}
\usepackage{algpseudocode}
\usepackage{hyperref}
\usepackage[capitalize]{cleveref}
\definecolor{OliveGreen}{rgb}{0,0.6,0}
\hypersetup{
  colorlinks=true,
  urlcolor=blue,
  linkcolor=blue,
  citecolor=OliveGreen
}

\title{Quantum Catalytic Space}
\author{Harry Buhrman \\ Quantinuum London \& QuSoft \\ \texttt{h.m.buhrman@uva.nl} \and
Marten Folkertsma \thanks{Supported by the Dutch Ministry of Economic Affairs and Climate Policy (EZK), as part of the Quantum Delta NL programme.} \\ CWI \& QuSoft \\ \texttt{mjf@cwi.nl} \and
Ian Mertz \thanks{Supported by the Grant Agency of the Czech Republic under the grant agreement no. 24-10306S and by the Center for Foundations of Contemporary Computer Science
(Charles Univ. project UNCE 24/SCI/008).} \\ Charles University \\ \texttt{iwmertz@iuuk.mff.cuni.cz} \and
Florian Speelman \thanks{Supported by the Dutch Ministry of Economic Affairs and Climate Policy (EZK), as part of the Quantum Delta NL program, and the project Divide and Quantum `D\&Q' NWA.1389.20.241 of the program `NWA-ORC', which is partly funded by the Dutch Research Council (NWO)} \\ University of Amsterdam \& QuSoft \\ \texttt{f.speelman@uva.nl} \and
Sergii Strelchuk \thanks{Supported by a Royal Society University Research Fellowship and the EPSRC  (RoaRQ), Investigation 005 [grantreference EP/W032635/1].} \\ University of Oxford \\ \texttt{sergii.strelchuk@cs.ox.ac.uk} \and
Sathyawageeswar Subramanian \thanks{Supported by a Royal Commission for the Exhibition of 1851 Research Fellowship.} \\ University of Cambridge \\ \texttt{ss2310@cam.ac.uk} \and
Quinten Tupker \thanks{Supported by the Dutch National Growth Fund (NGF), as part of the Quantum Delta NL program.} \\ CWI \& QuSoft \\ \texttt{qmtupker@gmail.com}}
\date{\today}

\renewcommand{\P}{\mathsf{P}}

\newcommand{\coNL}{\mathsf{coNL}}

\newcommand{\DQCOne}{\mathsf{DQC_1}}

\newcommand{\CSPACE}{\mathsf{CSPACE}}
\newcommand{\QCSPACE}{\mathsf{QCSPACE}}
\newcommand{\BQCSPACE}{\mathsf{BQCSPACE}}
\newcommand{\QCSPACEC}{\mathsf{QCSPACEC}}
\newcommand{\BQCSPACEC}{\mathsf{BQCSPACEC}}
\newcommand{\QCLM}{\mathsf{QCLM}}
\newcommand{\BQCLM}{\mathsf{BQCLM}}
\newcommand{\QCSPACEM}{\mathsf{QCSPACEM}}
\newcommand{\BQCSPACEM}{\mathsf{BQCSPACEM}}
\newcommand{\PSPACE}{\mathsf{PSPACE}}
\newcommand{\mO}{\mathcal O}

\newcommand{\QL}{\mathsf{QL}}
\newcommand{\QCL}{\mathsf{QCL}}
\newcommand{\QUCL}{\mathsf{Q_UCL}}
\newcommand{\BQCL}{\mathsf{BQCL}}
\newcommand{\BQCLC}{\mathsf{BQCLC}}
\newcommand{\BQL}{\mathsf{BQL}}

\newcommand{\CL}{\mathsf{CL}}
\newcommand{\SPACE}{\mathsf{SPACE}}

\newcommand{\QNCo}{\mathsf{QNC}^1}

\newcommand{\PL}{\mathsf{PL}}

\newcommand{\EQP}{\mathsf{EQP}}
\newcommand{\BQP}{\mathsf{BQP}}
\newcommand{\QCLC}{\mathsf{QCLC}}

\newcommand{\ZPP}{\mathsf{ZPP}}
\newcommand{\poly}{\mathsf{poly}}

\newcommand{\NL}{\mathsf{NL}}
\newcommand{\BPL}{\mathsf{BPL}}
\newcommand{\TCone}{\mathsf{TC}^1}

\newcommand{\QCC}{\mathsf{QCC}}

\newtheorem{theorem}{Theorem}
\newtheorem{lemma}{Lemma}
\newtheorem{fact}{Fact}
\newtheorem{corollary}{Corollary}
\newtheorem{remark}{Remark}
\newtheorem{definition}{Definition}

\newcommand{\kb}[1]{\ket{#1}\bra{#1}}

\begin{document}

\maketitle

\begin{abstract}
Space complexity is a key field of study in theoretical computer science. In the quantum setting there are clear motivations to understand the power of space-restricted computation, as qubits are an especially precious and limited resource.

Recently, a new branch of space-bounded complexity called catalytic computing has shown 
that reusing space is a very powerful computational resource, especially for subroutines that incur little to no space overhead. While quantum catalysis in an information theoretic context, and the power of ``dirty'' qubits for quantum computation, has been studied over the years, these models are generally not suitable for use in quantum space-bounded algorithms, as they either rely on specific catalytic states or destroy
the memory being borrowed. 

We define the notion of catalytic computing in the quantum setting and show a number of initial results about the model. First, we show that quantum catalytic logspace can always be computed quantumly in polynomial time; the classical analogue of this is the largest open question in catalytic computing. This also allows quantum catalytic space to be defined in an equivalent way with respect to circuits instead of Turing machines. We also prove that quantum catalytic logspace can simulate log-depth threshold circuits, a class which is known to contain (and believed to strictly contain) quantum logspace, thus showcasing the power of quantum catalytic space. Finally we show that both unitary quantum catalytic logspace and classical catalytic logspace can be simulated in the one-clean qubit model.

\end{abstract}

\tableofcontents

\section{Introduction}

Space is one of the cornerstones of theoretical computer science, and the study of space-bounded computations has been crucial in the development of complexity theory. Investigating logspace computations revealed the limits of efficient computation under memory constraints and has led to striking results such as Savitch’s theorem~\cite{savitch1970relationships} and $\NL = \coNL$~\cite{immerman1988nondeterministic,szelepcsenyi1988method}. Logspace reductions are essential in classifying problems as $\NL$-complete or $\P$-complete, and leading to techniques for efficient parallelization and algorithm design.

Many graph and database problems rely on logspace techniques, making them relevant for query optimization, data retrieval, and formal verification. Furthermore, logspace computations have practical applications in streaming algorithms, embedded systems, cryptography, and model checking, where minimizing memory usage is critical.  

The emergence of quantum computing has led to remarkable theoretical speedups over the best known classical algorithms. The promise of exponential computational advantage in using principles of quantum mechanics to process information comes with formidable experimental challenges of building and maintaining quantum computers that can implement long sequences of coherent operations. This led to a renewed interest in the structure of quantum space.

\subsection{Space in quantum computation}

Understanding the true extent of the power of quantum computing in a variety of space-constrained settings is a major challenge. In contrast to the classical setting where adding a reasonable amount of extra memory to support computations is routinely achievable, producing and maintaining multiple qubits is exceptionally difficult due to several fundamental physical, engineering, and scalability issues.
Qubits are fragile and susceptible to decoherence, and maintaining long coherence times becomes significantly harder as the number of qubits increases. Furthermore, quantum error rates scale with the number of qubits, making fault-tolerant quantum computing a major challenge. In the quantum computational setting, space thus comes at a premium, and increasing the amount of space available for computation requires overcoming fundamental challenges to reduce error rates, increase control precision, and maintain entanglement across multiple systems, to name but a few.

The characterization of quantum logspace ($\QL$) and the study of the computational power of bounded-error quantum logarithmic space ($\mathsf{BQL}$) and its relationship to classical complexity classes was first done by Watrous~\cite{watrous1998space}, where it was established that $\mathsf{BQL} \subseteq \mathsf{P}$. This showed that any problem solvable in quantum logspace with bounded error is also solvable in polynomial time by a classical deterministic machine. In later work, Watrous~\cite{watrous2001quantum} showed that $\mathsf{QSPACE}(s) \subseteq \SPACE[O(s^2)]$ for all $s \geq \log n$, even when the quantum machine is allowed to err with probability arbitrarily close to $1/2$; this confirms that quantum logspace computations remain simulable within polynomial space, and is consistent with classical space complexity results such as Savitch’s theorem. His work also established that quantum logspace can efficiently solve certain algebraic problems, including the \textit{group word problem for solvable groups}, which lacks efficient classical logspace algorithms \cite{watrous2001quantum}.

These above obstacles prompted the search for extra ingredients which could lift restricted models of quantum computation (for example -- realized by quantum circuits which are classically efficiently simulatable) to regain the power of universal quantum computation. These extra ingredients (e.g.\ magic state injection) are usually studied in the context of unrestricted space and there has as of yet been no attempt to investigate them under space restrictions.   

On the other hand, there have been several notable results that illuminate various properties of quantum logspace. One of the earliest findings shows that any quantum computation that can be performed with logarithmic space can also be efficiently simulated using matchgate circuits of polynomial width, and vice versa~\cite{jozsa2010matchgate}. Following this characterisation, there have been a series of further results indicating that quantum logspace describes a non-trivial class of computations. Ta-Shma~\cite{ta2013inverting} showed that given a matrix with a bounded condition number, a quantum logspace algorithm can efficiently approximate its inverse or solve linear systems. Girish, Raz, and Zhan~\cite{girish2020quantum} described a quantum logspace algorithm to compute powers of an matrix with bounded norm and prove that deterministic logspace is equal to reversible logspace. Recently, it was shown by the same authors that the class of decision problems solvable by a quantum computer in logspace admits an efficient verification procedure~\cite{girish2024quantum}; moreover, they also show that every language in $\BQL$ has an (information-theoretically secure) streaming proof with a quantum logspace prover and a classical logspace verifier. This hints at a curious interplay between the powers of classical and quantum logspace.

\subsection{Catalysis and space}

Catalysis is a concept well-studied in the context of quantum information and is widely recognized for its counterintuitive abilities to enable (state) transformations that are otherwise infeasible (see survey by Lipka et al.~\cite{lipka2024catalysis}). A related concept, known as catalytic embedding, was recently introduced in the context of circuit synthesis as an alternative to traditional gate approximation methods in quantum circuit design~\cite{amy2023catalytic}. Here the goal is to implement a desired unitary operation {\it more efficiently} (e.g., with fewer gates, lower depth, or using a restricted gate set) than would be possible without assistance. It involves a specific, known, and often small catalyst state that is chosen to aid a particular unitary implementation.

These foregoing lines of work focus on the idea that a specific unitary may be implemented more efficiently if a special state (i.e.\ catalyst) is available, often discussing resource theories, and do not dwell on complexity theoretic implications. 

In this work, we initiate the complexity-theoretic study of the effect of catalytic space in quantum computations. Much like magic state injection is able to promote and increase quantum computational power in the space-unrestricted setting, the presence of a catalyst in the form of an extra register of quantum memory---albeit memory that already contains some stored quantum information---holds a similar promise for space-bounded quantum computations. The notion of catalytic space can be regarded as a theoretical model of qubit reusal.

The first step towards a rigorous study of catalytic logspace quantum computations is to formalize the model and means of interaction with the catalytic space. Identifying new computational capabilities endowed by the presence of a catalyst in the form of additional quantum memory, which however contains an arbitrary unknown quantum state, appears to be a significantly more challenging task due to the nature of quantum information and the inherent limitations of quantum resources. For example, any framework for quantum catalytic space must incorporate the possibility of entanglement and its inherent limitations (e.g.\ monogamy) between the catalytic memory and the rest of the work space. It has to further account for the irreversibile nature of quantum measurement. 

Remarkably, it was recently shown that the addition of a similar notion of catalytic space has major implications even  in the classical logspace setting. Buhrman et al.~\cite{buhrman2014computing} introduced a model of space, called \textit{catalytic computing}, which studies the power of ``imperfect'' memory. In addition to the usual Turing machine work tape, a catalytic machine is equipped with a much larger \textit{catalytic} work tape, which is filled with an arbitrary initial string $\tau$ and which must be reset to the configuration $\tau$ at the end of its computation.

The setting of most interest to us is \textit{catalytic logspace} ($\CL$), wherein a logspace machine is given access to a polynomial size catalytic tape. On the positive side, \cite{buhrman2014computing} showed that such machines have significantly greater power than traditional logspace, capturing the additional power of both non-determinism ($\NL$) and randomness ($\BPL$); in fact, they showed that $\CL$ can simulate the much larger class of \textit{logarithmic-depth threshold circuits} ($\TCone$). On the negative side, they also showed that $\CL$ can be simulated by \textit{(zero-error) randomized polynomial-time machines} ($\ZPP$), which are strongly believed to be much weaker than e.g.\ polynomial space.

Since then, many works have studied classical catalytic space from a variety of angles, including further results on the power of $\CL$~\cite{CookLiMertzPyne25,AgarwalaMertz25,AlekseevFilmusMertzSmalVinciguerra25} augmenting catalytic machines with other resources such as randomness or non-determinism~\cite{BuhrmanKouckyLoffSpeelman18,DattaGuptaJainSharmaTewari20,CookLiMertzPyne25,KouckyMertzPyneSami25}, considering non-uniform models such as catalytic branching programs or catalytic communication complexity~\cite{Potechin17,CookMertz22,PyneSheffieldWang25}, analyzing the robustness of classical catalytic machines to alternate conditions~\cite{BisoyiDineshSarma22,BisoyiDineshRaiSarma24,GuptaJainSharmaTewari24}, and so on. Many properties of catalytic computation have emerged that appear ripe for use in the quantum setting, such as \textit{reversibility}~\cite{Dulek15,CookLiMertzPyne25}, \textit{robustness}~\cite{GuptaJainSharmaTewari24,FolkertsmaMertzSpeelmanTupker25}, and \textit{average-case runtime bounds}~\cite{buhrman2014computing}.

Perhaps most important to motivate our current study, the utility of classical catalytic computation has been strikingly demonstrated in its use as a subroutine in an ordinary space-bounded computation: avoiding linear blowups in space when solving many instances of a problem. The most impactful result is the Tree Evaluation algorithm of Cook and Mertz~\cite{Cook24}, which was the key piece in Williams' recent breakthrough on time and space~\cite{Williams25}. Catalytic subroutines of this kind are even more relevant in the quantum setting, as they may lead to a persistent reduction of the qubit count when executing a quantum algorithm.

\subsection{Summary of results}

In this paper we initiate the systematic study of catalytic
techniques in the quantum setting. To this end
we codify a concrete definition of quantum catalytic space ($\QCSPACE$),
explore the degrees to which the definition is robust,
and establish the relationship of quantum catalytic logspace ($\QCL$)
to various classical and quantum complexity classes.

Our main technical contribution is to show that, somewhat surprisingly,
quantum Turing machines and quantum circuits
are equivalent even in the catalytic space setting:
\begin{theorem} \label{thm:cqtm_cqckt_easy}
    Let $L$ be a language, and let $s := s(n)$ and $c := c(n)$.
    Then $L$ is computable by a quantum catalytic Turing machine
    with work space $O(s)$ and catalytic space $O(c)$
    iff $L$ is computable by a family of quantum catalytic circuits
    with work space $O(s)$ and catalytic space $O(c)$.
\end{theorem}
While this translation is straightforward in other settings, $\QCL$
has no \textit{a priori} polynomial time bound, and so there is no obvious
way to define the length of a catalyic circuit without running into trouble.
However, we prove that the result of Buhrman et al.~\cite{buhrman2014computing}
which shows that $\CL$ takes polynomial time on average can be strengthened
in the quantum case, to show that $\QCL$ \textit{always} takes polynomial time
without any error:
\begin{theorem}
\label{thm:QCL_in_EQP}
  $\QCL \subseteq \mathsf{EQP}$
\end{theorem}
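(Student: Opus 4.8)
The plan is to avoid worst‑case catalytic contents entirely and instead run the given $\QCL$ machine $M$ with its catalytic register initialised to the \emph{maximally mixed} state $I/2^c = 2^{-c}\sum_{\tau\in\{0,1\}^c}\kb{\tau}$. By definition a quantum catalytic machine must, for \emph{every} initial catalytic state, produce the correct answer and restore the catalytic register; linearity then forces the same for $I/2^c$, and for a $\QCL$ decision procedure ``correct'' means the output qubit carries the right bit with certainty. So it suffices for an $\mathsf{EQP}$ machine to prepare $I/2^c$ and simulate $M$ on it: the mixed state is obtained for free, e.g.\ by creating $c$ EPR pairs and discarding one half of each (equivalently, keeping the purification $2^{-c/2}\sum_\tau\ket{\tau}_C\ket{\tau}_R$ and never touching the reference $R$). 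This is exactly the step with no classical counterpart --- a classical catalytic machine cannot be ``run on all tape contents at once'', which is why derandomising the average‑case bound of \cite{buhrman2014computing} into $\CL\subseteq\P$ remains open, whereas quantumly the maximally mixed state supplies that averaging automatically.

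What remains is a polynomial bound on the running time of $M$ on $I/2^c$. In the standard model of space‑bounded quantum computation a logspace classical control manipulates an $O(s)$‑qubit work register and (here) a $\poly(n)$‑qubit catalytic register by issuing gate instructions; when $M$ is \emph{unitary} this control is deterministic and input‑determined, so it is a deterministic walk on the set of classical configurations, of which there are only $n\cdot 2^{O(s)}\cdot\poly(n) = \poly(n)$ (the catalytic‑tape head position costs just $O(\log n)$ bits). A deterministic walk on $\poly(n)$ states either halts within $\poly(n)$ steps or loops forever, and since $M$ is a valid catalytic machine it must halt; hence its running time $T(n)$ is polynomial and the $\mathsf{EQP}$ machine simply executes the corresponding $\poly(n)$‑size circuit on $I/2^c$ and reads the output qubit, exactly. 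The same conclusion follows from the reversibility argument underlying \cite{buhrman2014computing}, which also makes transparent why the bound is worst‑case and not merely average: the catalytic promise forces the configuration reached at halting from basis state $\tau$ to have $\tau$ on the catalytic tape, so distinct $\tau$ halt at distinct configurations; as the configuration graph is functional the $2^c$ trajectories are pairwise disjoint, giving $\sum_\tau t(\tau)\le\#\{\text{configurations}\}=2^c\cdot\poly(n)$, and in the unitary case every $t(\tau)$ equals $T(n)$.

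The main obstacle is intermediate measurements: once $M$'s control may branch on a measured work qubit, its running time becomes a random variable and the configuration graph is no longer functional, so the counting above only bounds the \emph{expected} running time, which on its own would give a Las Vegas rather than an exact simulation. I would handle this by eliminating measurements via deferral --- each measurement of a work qubit becomes a $\CNOT$ into a fresh work ancilla, which is harmless for the catalytic register and its resetting --- but this requires a prior polynomial bound on the number of steps so that the extra ancillas fit in $O(s)$ space. That bound should again come from configuration counting: treat a measurement step as a classical‑control transition into one of the $\poly(n)$ configurations, so the disjoint‑trajectory argument still applies with the configuration set enlarged by a constant factor, yielding a polynomial step count; then defer, obtaining a polynomial‑size unitary catalytic computation to which the $\mathsf{EQP}$ simulation of the previous paragraph applies. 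The two points needing the most care are (i) making the configuration‑counting argument precise when the work register is genuinely quantum --- phrasing it through the classical control together with a fixed basis decomposition and checking that disjointness survives --- and (ii) verifying that deferring the measurements keeps the work space within $O(s)$ and does not disturb the catalytic register.
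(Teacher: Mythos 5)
Your first step is fine: $I/2^c$ is itself an element of $D(\mathcal H_c)$, so the machine answers correctly with certainty on it, and preparing it costs nothing. Your unitary-case time bound is also sound: with no intermediate measurements the classical control is a deterministic walk on $\poly(n)$ classical configurations, so it halts in $\poly(n)$ steps or never; this covers $\QUCL$ (and does not even need catalyticity). The genuine gap is the general case with intermediate measurements, which is what $\QCL$ allows. Your disjoint-trajectory counting---even once made rigorous quantumly, which requires replacing classical configuration disjointness by an orthogonality-preservation argument via contractivity of trace distance (cf.\ Lemma~\ref{lem:orthogonality-average}), since a basis-state catalytic tape does not stay a basis state---only bounds the runtime \emph{summed over} an orthonormal family of initial catalytic states (cf.\ Lemma~\ref{lem:average}), i.e.\ an average-case bound. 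That is exactly the classical situation, where the average-case bound is known but $\CL \subseteq \P$ is open. ``Enlarging the configuration set by a constant factor'' to absorb measurement branching does not upgrade this to a worst-case or probability-$1$ bound: once the control branches on outcomes, nothing in the counting prevents an exponentially small fraction of initializations or branches from running exponentially long. Under your maximally-mixed initialization this becomes an exponentially small but nonzero probability of exceeding any polynomial cutoff, which breaks $\EQP$ exactness; it also undercuts your deferred-measurement step, which presupposes the very worst-case polynomial bound you are trying to establish.

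The missing idea is the paper's proof that the runtime distribution is \emph{identical} for every initial catalytic state. Because the catalyst is reset exactly, a single copy of a catalytic state can be reused to rerun the machine arbitrarily often and estimate its runtime distribution to any precision (Lemma~\ref{lem:approx_T}); by the Helstrom bound, two catalytic states inducing different runtime distributions would then be perfectly distinguishable, forcing $\norm{\rho_1-\rho_2}_1=1$ (Lemma~\ref{lem:orthogonality}); and applying this to the mixture $\tfrac12\rho_1+\tfrac12\rho_2$ gives a contradiction (Theorem~\ref{thm:catalytic_independent_time}). Only this state-independence, combined with the average bound, yields the worst-case $2^{O(s)}$ runtime (Theorem~\ref{thm:QCL_polytime}) that makes an exact polynomial-time simulation---yours on $I/2^c$, or the paper's directly---go through. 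Without some replacement for that step, the proposal does not close.
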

\noindent
We find Theorem~\ref{thm:QCL_in_EQP} intriguing for many reasons.
Naturally it is exciting to be able to solve the ``holy grail'' of
catalytic computing in the quantum setting.
The story of classical catalytic computing has been the ability of
clever algorithms to circumvent the resetting condition of the catalytic tape
and use it for powerful purposes, but Theorem~\ref{thm:QCL_in_EQP} shows that
conversely, the additional power of quantum techniques in such algorithms
does not offset the additional restrictiveness of resetting a quantum state.
Quantum computation is a model fundamentally built on reversible instructions,
with the one exception being the final measurement with which we obtain our answer;
Theorem~\ref{thm:QCL_in_EQP} shows that this measurement is a massive obstruction
to reversibility, as having access to such a huge resource with only the reversible
restriction---something which is taken care of in the intermediate computation
already---gives less power than we initially assumed.

In terms of class containments, we focus on two questions: the relationship
of quantum and classical catalytic space, and the relationship of catalytic space to
the one-clean qubit model ($\DQCOne$), a pre-existing object of study in
quantum complexity which bears a strong resemblance to catalysis. We show
that, while $\CL \subseteq \QCL$ is surprisingly out of reach at the moment,
this can be shown for an important subclass of $\CL$, one which captures the strongest known
classical containment:
\begin{theorem} \label{thm:tcone}
    $\TCone \subseteq \QCL$
\end{theorem}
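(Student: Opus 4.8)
The plan is to mimic the classical proof that $\TCone \subseteq \CL$ from Buhrman et al.~\cite{buhrman2014computing}, adapting it to the quantum catalytic model. Recall the heart of the classical argument: a log-depth threshold circuit can be converted (via the equivalence $\TCone = \mathsf{\#NC}^1$-style characterizations, or directly) into a setting where one evaluates an arithmetic formula over a suitable ring, and the catalytic tape is used to hold the intermediate register values ``in encrypted form.'' The key primitive is \emph{transparent computation}: to compute a value $v$ into a catalytic register $r$, one performs the reversible update $r \mapsto r + v$ (over the ring), does the downstream computation, and then undoes it by $r \mapsto r - v$; crucially, the addition of $v$ can itself be carried out by composing such register-increment operations for the subformulas of $v$, so the whole evaluation becomes a sequence of reversible ring operations on the catalytic registers, with only logarithmically many ``clean'' registers needed to manage recursion. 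Since $\TCone$ reduces (by Barrington-style / $\mathsf{NC}^1$-formula techniques, or by the explicit construction in~\cite{buhrman2014computing}) to evaluating such formulas over $\mathbb{Z}_m$ for a product of small primes, and each ring operation is reversible, the classical catalytic machine resets its tape automatically.

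To lift this to $\QCL$, I would observe that every step of the classical construction is a \emph{reversible} classical computation on the catalytic registers, hence directly a permutation, hence a unitary on the computational basis — and therefore implementable by a quantum catalytic Turing machine that never needs to measure or introduce non-unitarity until the very end. Concretely: (i) take the classical catalytic algorithm for $\TCone$ from~\cite{buhrman2014computing}, which uses $O(\log n)$ clean work tape and $\poly(n)$ catalytic tape and consists entirely of reversible ring arithmetic; (ii) interpret each reversible step as a basis-state permutation and thus a unitary gate acting on $O(\log n)$ qubits at a time; (iii) run this as a unitary quantum computation on the quantum catalytic register initialized to an arbitrary (possibly entangled, mixed) state $\rho$ — since the whole computation is a fixed permutation $\pi$ of basis states that returns the catalytic tape to its input, applying the corresponding unitary $U_\pi$ and then its inverse structure restores $\rho$ exactly, for \emph{every} $\rho$, because a permutation that fixes every computational-basis string on the catalytic part acts as identity there regardless of entanglement; (iv) read off the answer from the clean work register by a final measurement. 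This shows membership in unitary $\QCL$, hence in $\QCL$.

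The main obstacle — and the point that needs the most care — is step (iii): verifying that the \emph{quantum} resetting condition is met for arbitrary, possibly entangled, catalytic states, not just for classical strings $\tau$. The classical guarantee only says ``on every basis string the tape returns to itself''; I must argue this upgrades to ``on every density operator the catalytic register returns to itself.'' This holds because the overall operation, restricted to its action on the catalytic subsystem conditioned on the work subsystem, is the identity permutation on basis states, so its unitary extension is (up to a global/relative phase that must be tracked) the identity on that subsystem, which acts correctly on superpositions and on entanglement with the environment. The delicate bookkeeping is ensuring no \emph{relative phases} are introduced: a reversible classical step is only defined up to phase when promoted to a unitary, so I would fix the canonical choice where each basis-permutation gate is the phaseless permutation matrix, and check that the composition used in the $\TCone$ simulation — being a genuine composition of such phaseless permutations — is itself phaseless, so that the ``uncompute'' stages exactly cancel and $\rho$ is restored with zero error. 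A secondary, more routine obstacle is confirming that the ring operations over $\mathbb{Z}_m$ used in~\cite{buhrman2014computing} can be implemented reversibly within $O(\log n)$ clean qubits in the quantum circuit/TM model (addition and multiplication modulo a $\poly(n)$-bounded modulus are standard reversible-logspace primitives), and that the construction stays within logarithmic clean space throughout the recursion.
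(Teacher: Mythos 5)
Your proposal is correct and follows essentially the same route as the paper: it invokes the clean (transparent) register-program simulation of $\TCone$ from \cite{buhrman2014computing}, notes that this gives a reversible and oblivious (input-determined) sequence of operations, and lifts each step to a phaseless permutation unitary (Toffoli-style gates) so that, by linearity over computational-basis catalytic states, arbitrary mixed or entangled catalytic states are reset exactly. Your phase concern is resolved exactly as you suggest, and matches the paper's choice of Toffoli gates acting branchwise on the superposition.
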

\noindent
As a consequence, we show that $\TCone$ constitutes
a natural class of functions for which catalysis gives additional power to
quantum computation.

We also show that unitary $\QCL$
($\QUCL$) and classical $\CL$ are both contained in $\DQCOne$:
\begin{theorem}
\label{thm:bqucl_in_dqc1}
$\mathsf{BQ_{U}CL} \subseteq \DQCOne$ 
\end{theorem}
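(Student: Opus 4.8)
The plan is to reproduce a $\mathsf{BQ_{U}CL}$ computation almost verbatim inside $\DQCOne$, letting the maximally mixed register of the one-clean-qubit model play the role of the catalytic register. Recall that a language $L \in \mathsf{BQ_{U}CL}$ is decided by a logspace-uniform family of unitary circuits $U_x$ acting on $w = O(\log n)$ clean work qubits and $c = \mathrm{poly}(n)$ catalytic qubits such that, for \emph{every} catalytic state $\rho$, running $U_x$ on $\kb{0^w}\otimes\rho$ both restores the catalytic register to $\rho$ and produces $L(x)$ on a designated output qubit with probability at least $2/3$; moreover, by Theorem~\ref{thm:QCL_in_EQP} (together with the circuit characterization of Theorem~\ref{thm:cqtm_cqckt_easy}) we may take $U_x$ to have polynomial size, which is exactly what makes it eligible to be run in $\DQCOne$. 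The key observation is that this correctness guarantee applies in particular to $\rho = I/2^c$, the state that the one-clean-qubit model supplies for free: the quantity $p_x := \mathrm{Tr}\!\left[(\kb{1}_{\mathrm{out}}\otimes I)\,U_x(\kb{0^w}\otimes \tfrac{I}{2^c})\,U_x^\dagger\right]$ is, by linearity, the average over computational-basis states $\ket{\tau}$ of the catalytic register of the acceptance probabilities on input $\ket{0^w}\ket{\tau}$, each of which is $\geq 2/3$ when $x \in L$ and $\leq 1/3$ when $x \notin L$, so $p_x$ is bounded away from $1/2$ in the correct direction. Estimating $p_x$ to inverse-polynomial precision therefore decides $L$. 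Note that we never invoke the catalytic-restoration condition: unitarity together with ``correct for every $\rho$'' already forces the catalytic register to emerge essentially untouched, but for this simulation only correctness is relevant.

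The one genuine gap is that $U_x$ uses $w = O(\log n)$ clean work qubits, whereas $\DQCOne$ nominally provides a single clean qubit. I would close this with the standard reduction from logarithmically many clean qubits to one. Writing $\kb{0^w} = 2^{-w}\sum_{s\in\{0,1\}^w} Z^s$ on the work register and $\kb{1}_{\mathrm{out}} = \tfrac12\!\left(I - Z_{\mathrm{out}}\right)$, the quantity $p_x$ decomposes into an explicit constant plus $2^w = \mathrm{poly}(n)$ terms, each of the form $2^{-(w+c)}\,\mathrm{Tr}\!\left[Z_{\mathrm{out}}\,U_x(Z^s\otimes I)\,U_x^\dagger\right]$. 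Each such term is the normalized trace of a polynomial-size unitary and is hence estimable to inverse-polynomial precision by the single-clean-qubit model; taking each estimate to a suitable $1/\mathrm{poly}(n)$ accuracy and summing the $\mathrm{poly}(n)$ of them recovers $p_x$ to within, say, $1/12$. Since the indices $s$ are enumerated by an $O(\log n)$-bit counter and the circuits $U_x$ are logspace-uniform, all of this is carried out by logspace-uniform, polynomial-size $\DQCOne$ circuitry with a logspace classical post-processor.

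The main obstacle is really just being scrupulous about what $\DQCOne$ can and cannot do: it cannot post-select, cannot manufacture extra clean qubits, and cannot perform intermediate measurements. The last point is precisely why the statement concerns the \emph{unitary} class $\mathsf{BQ_{U}CL}$ rather than the general $\BQCL$: deferring the intermediate measurements of an arbitrary catalytic computation would require one fresh clean ancilla per measured bit, i.e.\ polynomially many, which $\DQCOne$ does not have. Once the computation is already unitary, the remaining content is the clean-qubit bookkeeping above and a routine verification that the Pauli-expansion reduction together with the classical combination of the trace estimates runs in logspace.
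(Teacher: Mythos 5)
Your first step is exactly the paper's: since a $\mathsf{BQ_{U}CL}$ computation is unitary up to the final measurement and is correct for every catalytic state, running it with the catalytic register in the maximally mixed state is precisely a $\mathsf{DQC}_k$ computation with $k = O(\log n)$ clean qubits and constant bias. The genuine gap is in how you go from $k = O(\log n)$ clean qubits to one. The paper closes this by invoking Lemma~\ref{lem:dqck} ($\mathsf{DQC}_k = \DQCOne$ for $k = O(\log n)$) as a black box. Your replacement for that lemma does not stay inside the class: you expand $\kb{0^w}$ into $2^w = \poly(n)$ Pauli strings, estimate each normalized trace $2^{-(w+c)}\mathrm{Tr}\bigl[Z_{\mathrm{out}}U_x(Z^s\otimes I)U_x^\dagger\bigr]$ to inverse-polynomial precision by the one-clean-qubit model, and then sum the estimates with a classical post-processor. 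Estimating any such trace to $1/\poly(n)$ accuracy requires polynomially many repetitions of a DQC1 circuit, and aggregating many such estimates classically is a ``DQC1 used as a subroutine'' computation, not membership in $\DQCOne$ as defined here: the class requires a \emph{single} logspace-uniform unitary circuit applied once to $\kb{0}\otimes I_n/2^n$, with a single measurement of the designated qubit whose acceptance probability is at least $\tfrac12 + 1/q(n)$ in the correct direction. The paper stresses exactly this obstruction (``such a machine can only reliably run once''), so amplification-by-repetition and multi-circuit post-processing are precisely the moves the definition does not grant you.

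The fix is cheap: either cite Lemma~\ref{lem:dqck} directly (which makes your argument coincide with the paper's), or make your Pauli-expansion reduction single-shot along the lines of the Shor--Jordan proof, e.g.\ by borrowing $w$ further maximally mixed qubits as a selector register and Hadamard-testing, with the one clean qubit as control, a single unitary in which $Z^s$ is applied controlled on the selector; the mixed selector then performs the average over $s$ coherently inside one circuit, at the cost of attenuating the constant bias by a factor $2^{-w} = 1/\poly(n)$, which the $1/q(n)$ threshold in the definition tolerates. As written, however, the estimate-each-term-and-sum step is a real gap relative to the class you are targeting, even though your first half and your identification of the clean-qubit bookkeeping as the remaining issue match the paper's route.
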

\begin{theorem}
\label{thm:cl_in_dqc1}
    $\CL \subseteq \DQCOne$
\end{theorem}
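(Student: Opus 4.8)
The plan is to follow the template of Theorem~\ref{thm:bqucl_in_dqc1}: identify the maximally mixed register of the $\DQCOne$ machine with the catalytic tape and run the catalytic computation coherently as a permutation. The one genuine new ingredient is that, unlike the quantum case where Theorem~\ref{thm:QCL_in_EQP} supplies a worst-case polynomial time bound, a classical $\CL$ machine may run for superpolynomially many steps on adversarial catalytic content, so we cannot just run it ``to completion'' inside a polynomial-size circuit.

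First I would reduce to a reversible machine: by the reversibility of classical catalytic space~\cite{Dulek15,CookLiMertzPyne25} we may assume the $\CL$ machine $M$ deciding $L$ is reversible, so that its one-step transition extends to an efficiently implementable permutation $\pi$ on the (polynomially many bits of the) configuration space --- the length-$c$ catalytic tape, the $O(\log n)$-bit work tape, the $O(\log n)$ bits of head positions, and the $O(1)$-bit finite control --- with the accept and reject states as fixed points (the machine idles once it halts). Let $U_\pi$ be the corresponding permutation unitary, realised by a polynomial-size reversible circuit. The $\DQCOne$ circuit then places the simulated catalytic tape on $c$ maximally mixed qubits and the remaining $O(\log n)$ configuration qubits on clean qubits initialised to the starting configuration --- using, exactly as in the proof of Theorem~\ref{thm:bqucl_in_dqc1}, that $\DQCOne$ is insensitive to being handed $O(\log n)$ rather than a single clean qubit. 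Apply $U_\pi^{p(n)}$, where $p(n)$ is twice the bound on the \emph{average} (over the catalytic content $\tau\in\{0,1\}^c$) running time of $M$ from~\cite{buhrman2014computing}, and let the output qubit indicate whether the finite control is in the accept state. By linearity the acceptance probability equals $2^{-c}$ times the number of $\tau$ for which $M$, started with catalytic content $\tau$, sits in the accept state after $p(n)$ steps.

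For correctness, note that since $M$ decides $L$, for every $\tau$ it halts in the accept state if $x\in L$ and in the reject state if $x\notin L$; as halting states are fixed points of $\pi$, after $p(n)$ steps the control is in the accept state iff $M$ accepted within $p(n)$ steps. If $x\notin L$ this never occurs for any $\tau$, so the acceptance probability is exactly $0$; if $x\in L$, Markov's inequality applied to the average running time guarantees that more than half the $\tau$ halt within $p(n)$ steps, so the acceptance probability is at least $1/2$. This one-sided $1/2$-versus-$0$ gap is comfortably within what $\DQCOne$ detects (and if a constant gap is required it survives the standard parallel-repetition amplification using additional mixed qubits).

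The main obstacle is precisely the running-time mismatch just described; the key points that need care are (i) checking that the reversible form of $M$ genuinely idles in its halting states, so that the final measurement reads a well-defined event regardless of $\tau$, (ii) verifying that the polynomial average-runtime bound of~\cite{buhrman2014computing} is preserved under the passage to a reversible machine, and (iii) the justification of the $O(\log n)$ clean qubits, which is inherited from the discussion surrounding Theorem~\ref{thm:bqucl_in_dqc1}. Everything else is routine: $U_\pi$ is a polynomial-size circuit, and since the $2^c$ starting configurations we superpose over are all legitimate initial configurations, the permutation dynamics never leave the ``real'' trajectories, so the arbitrary extension of $\pi$ off the reachable set is irrelevant.
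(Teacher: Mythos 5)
There is a genuine gap, and it sits exactly where you flagged ``needs care (i)'' and then asserted it away: a reversible machine cannot idle in its halting configurations. If the one-step map $\pi$ is a permutation of the configuration space and an accepting configuration $h$ is a fixed point, then $h$ has two preimages --- itself and the genuine computational predecessor that transitioned into $h$ --- contradicting injectivity. So the object you apply $p(n)$ times, ``$U_\pi$ with accept/reject as fixed points,'' does not exist for any reversible machine whose halting configurations are actually reached. Without idling, your correctness claim ``the control is in the accept state at time $p(n)$ iff $M$ accepted within $p(n)$ steps'' collapses: for each $\tau$ the machine halts at a different time, and whatever reversible dynamics you substitute for idling (e.g.\ Bennett-style uncomputation, or cycling) can move the accept indicator off its value by the fixed measurement time. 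This is precisely the difficulty the paper spends \Cref{lemma:cl_obliv_no_halt,lemma:cl_polynomial_halt} on: the machine is wrapped in $2^{s(n)}$ repeated ``cycles'' with a small counter, arranged so that after the first cycle the output cell (plus a determined/undetermined flag) is written once and then provably left untouched for at least $2\cdot 2^{s(n)}$ further steps, which is what guarantees that at the \emph{fixed} time $t(n)$ the output is correct-or-undetermined for every $\tau$, with the undetermined branches resolved by a coin flip. Your proposal needs this (or an equivalent stalling construction) and currently replaces it with an impossible assumption; your point (ii), that the average-runtime bound of~\cite{buhrman2014computing} survives reversibilization and the per-step head-position-controlled simulation overhead, is likewise left open, though it is the lesser issue.

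Two smaller quantitative points. With $p(n)$ equal to exactly twice the average runtime, Markov gives a halting fraction of at least $1/2$, not strictly more than $1/2$, so for yes-instances you only get $p_1 \geq 1/2$, which does not meet the paper's $\DQCOne$ promise $p_1 \geq 1/2 + 1/q(n)$; you would need a slightly larger cutoff (or the paper's random-bit-on-undetermined post-processing, which yields a two-sided $1/2 + 1/\mathrm{poly}(n)$ bias). Also, be cautious with the claimed ``standard parallel-repetition amplification'': the paper explicitly notes that generic amplification is not known to work in $\DQCOne$ since the machine runs only once; your one-sided setting makes an OR-of-copies plausible, but that is an extra argument, not a standard citation. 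Aside from these issues, your framing --- implementing one head-position-controlled transition as a polynomial-size permutation circuit applied $p(n)$ times, with the catalytic tape on the maximally mixed qubits and $O(\log n)$ clean qubits for the work configuration --- is the same route as the paper, just phrased at the circuit level rather than via an oblivious Turing machine.
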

\noindent
Note that we use a version of $\DQCOne$ defined using a logspace controller instead of a polynomial time controller as may also be done. These results show how much of the power of $\DQCOne$ comes from avoiding
the limitation of the resetting condition on the ``dirty'' work space.

\subsection{Open problems}

We identify a number of interesting avenues to further explore the power of
quantum catalytic space, and understand its relation to various (quantum) complexity classes.

\paragraph{$\QCL$ subroutines.} Remarkably, classical catalytic subroutines can already be used to achieve analogous space savings in $\QCL$. Is it possible to identify genuinely quantum subroutines to achieve savings beyond those attained by classical generalizations? This is not so straightforward because the subset of qubits being reused in a catalytic subroutine could become entangled with qubits that cannot be accessed by the subroutine. Therefore, there might be a non-trivial and inaccessible reference system with respect to which the catalytic property must hold. While we show the presence of such an inaccessible reference system does not change the model we define, designing quantum catalytic subroutines (cf.\ classical results in  \cite{Cook24,Williams25}) stands out as a fertile direction for future work.

\paragraph{$\CL$ vs $\QCL$.} While we have started investigating the question, we still have no simple answer as to the relationship between $\CL$ and $\QCL$; in fact we have not even ruled out that $\CL$ contains $\QCL$. The primary challenge is that while any $\CL$ machine runs in polynomial time in expectation over the catalytic tape, $\QCL$ machines \emph{always} run in polynomial time. We do not know how to fit in pathological cases where $\CL$ runs in exponential time, for example, into $\QCL$. Similarly, a problem or oracle that can separate $\QCL$ from $\CL$ would also be of interest\footnote{A candidate oracle for showing a separation between $\QCL$ and $\CL$ is the oracle relative to which $\CL$ and $\PSPACE$ are equivalent, as shown in \cite{buhrman2014computing}. This oracle uses the fact that the initial catalytic tapes of $\CL$ are either compressible or random, using the oracle differently for either situation. This type of adaptive usage of the oracle, based on the given catalytic state, seems not to translate to the quantum setting due to Theorem~\ref{thm:catalytic_independent_time}.}.

\paragraph{$\QNCo$ vs $\QCL$.} Starting with Barrington's Theorem~\cite{Barrington89}, a landmark result in space complexity, a classical line of work~\cite{BenorCleve92,buhrman2014computing} has shown that polynomial-size formulas over many different gatesets can be computed using only logarithmic space, using a reversible, algebraic characterization of computation. Such a result in the quantum case, i.e.\ $\QNCo \subseteq \QL$, appears far out of reach, as this would imply e.g.\ novel derandomizations in polynomial time. However, such techniques are also key to the study of catalytic computation, and so
perhaps we can show $\QNCo$ or a similar quantum circuit class is contained in $\QCL$. This would give a clear indication of the power of quantumness in catalytic computation.

\paragraph{$\QCL$ vs $\DQCOne$.} While we seem to find that $\QUCL$ or $\QCL$ without intermediate measurements is contained in $\DQCOne$, it is unclear if this still holds when we allow intermediate measurements.

\paragraph {$\QCL$} with errors. One aspect of our results which is discordant with the usual mode of quantum computation is that we require the catalytic tape be \textit{exactly} reset by the computation. On the other hand, many basic primitives in quantum computing, such as converting between gatesets, can introduce errors into the computation, and in practice even the ambient environment can be assumed to cause such issues. Thus it seems natural to study the power of $\QCL$ when we allow a small, potentially exponentially small, trace distance between the initial and final catalytic states. This model is well-understood in the classical world~\cite{GuptaJainSharmaTewari24,FolkertsmaMertzSpeelmanTupker25}, but it would be interesting to see whether our techniques can be made robust to this small error or, to the contrary, whether this slight relaxation is enough to overcome the barriers in our work, chiefly the inability to show $\CL \subseteq \QCL$.

\section{Preliminaries}

\subsection{Quantum computation}

For this work we will consider complex \emph{Hilbert} spaces $\mathcal H \cong \mathbb C^d$ of dimension $d$, that will form the state space for a quantum system. Multiple quantum systems are combined by taking the tensor product of their Hilbert spaces, such as $\mathcal H_1 \otimes \mathcal H_2$. We will often write $\mathcal H_s$ to denote the Hilbert space $\left(\mathbb{C}^2\right)^{\otimes s}$ of $s$ qubits, where the dimension is given by function $d(\mathcal H_s) = 2^s$. We will also often use the abbreviation $[n] = \{1,\dots,n\}$. Below, we recall some of the important background required for this article, referring the reader to \cite{NielsenChuang2010} for more details.

\begin{definition}[Quantum states]
A \emph{pure quantum states} is a unit vector of the Hilbert space $\ket{\psi} \in \mathcal H$, with the normalization condition $\braket{\psi} = 1$. 
We also make use of more general states represented by \emph{density matrices} $\rho$ which are positive semi definite operators on a Hilbert space with unit trace, $Tr[\rho] = 1$. Density matrices describe \emph{mixed states} which, beyond pure quantum states, can also capture classical uncertainty. In other words, they correspond to classical mixtures of pure quantum states. The density matrix of a pure state is $\rho = \ket{\psi}\bra{\psi}$. Given an ensemble of states $\{\ket{\psi_i}\}$ and corresponding probabilities $\{p_i\}$, with $p_i \geq 0$ and $\sum_i p_i = 1$, it can be represented by a mixed state of the form $\rho=\sum_i p_i \ket{\psi_i}\bra{\psi_i}$. We will denote the set of mixed states a Hilbert space $\mathcal H$ by $D(\mathcal H)$.
\end{definition}

\begin{definition}[Quantum channels]
A \emph{quantum channel} is a linear operator that maps density matrices to density matrices, $\Phi: D(\mathcal H_1) \rightarrow D(\mathcal H_2)$ (also known as superoperators or CPTP maps). It is also required to have two additional properties: 1) it must be completely positive; and 2) it must be trace preserving. We denote the set of channels from $D(\mathcal H)$ to itself by $\mathcal C(D(\mathcal H))$.
\end{definition}

We denote the identity channel on $d$ qubits by $\mathcal I_d$, or just $\mathcal I$ when $d$ is clear from context. The \emph{Choi matrix} of a channel $\Phi$ that acts on an input space $\mathcal H$ of dimension $d$ is defined by the action of $\Phi$ on the first register of a maximally entangled state in $\mathcal H\otimes\mathcal H$
\begin{equation*}
    J(\Phi)
    \coloneqq \left({\Phi \otimes \mathcal I_d}\right) \left(\frac1d \sum_{i,j=1}^d \ket{i}\bra{j} \otimes \ket{i}\bra{j}\right)
    = \frac1d \sum_{i,j=1}^d \Phi \biggl(\ket{i}\bra{j}\biggr) \otimes \ket{i}\bra{j}.
\end{equation*}

\begin{definition}
The trace distance between two density matrices $\rho,\sigma \in D(\mathcal H)$ is defined by:
\[
    ||\rho - \sigma||_1 = Tr[\sqrt{(\rho -\sigma)^\dagger(\rho - \sigma)}],
\]
where $A^\dagger$ denotes the conjugate transpose of the matrix $A^\dagger = \bar{A}^T$.  
\end{definition}

\noindent
It is well known that no physical process can increase the trace distance between two states:
\begin{lemma}[Contractivity under CPTP maps {\cite[Theorem 9.2]{NielsenChuang2010}}]
\label{lem:non-increasing trace}
Let $\Phi \in \mathcal C(D(\mathcal H))$ and $\rho, \sigma \in D(\mathcal H)$ then the trace distance between $\rho$ and $\sigma$ can not increase under application of $\Phi$:
\[
    ||\Phi(\rho) - \Phi(\sigma)||_1 \leq ||\rho - \sigma||_1 
\]
\end{lemma}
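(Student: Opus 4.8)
The plan is to prove the bound via the variational (Helstrom) characterization of the trace distance, combined with the elementary fact that the adjoint of a CPTP map is completely positive and unital. The key lemma I would first establish is: for any Hermitian operator $\Delta$ with $\mathrm{Tr}\,\Delta = 0$, one has $\|\Delta\|_1 = 2\max_{0 \preceq M \preceq I}\mathrm{Tr}(M\Delta)$, where the maximum ranges over all operators $M$ with $0 \preceq M \preceq I$. To see this, decompose $\Delta = \Delta_+ - \Delta_-$ into its positive and negative parts, which have orthogonal support, so that $\|\Delta\|_1 = \mathrm{Tr}\,\Delta_+ + \mathrm{Tr}\,\Delta_-$; the condition $\mathrm{Tr}\,\Delta = 0$ forces $\mathrm{Tr}\,\Delta_+ = \mathrm{Tr}\,\Delta_-$, so the projector $\Pi$ onto the support of $\Delta_+$ gives $2\mathrm{Tr}(\Pi\Delta) = 2\mathrm{Tr}\,\Delta_+ = \|\Delta\|_1$. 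Conversely, for any $0 \preceq M \preceq I$ we have $\mathrm{Tr}(M\Delta_+) \leq \mathrm{Tr}\,\Delta_+$ and $\mathrm{Tr}(M\Delta_-) \geq 0$, hence $\mathrm{Tr}(M\Delta) \leq \mathrm{Tr}\,\Delta_+ = \tfrac12\|\Delta\|_1$, which finishes the characterization.

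Next I would apply this to $\Delta := \Phi(\rho) - \Phi(\sigma) = \Phi(\rho - \sigma)$, which is Hermitian and traceless since $\Phi$ is trace-preserving. Letting $\Pi$ be the optimal projector from the characterization, we get $\|\Phi(\rho)-\Phi(\sigma)\|_1 = 2\mathrm{Tr}\bigl(\Pi\,\Phi(\rho-\sigma)\bigr) = 2\mathrm{Tr}\bigl(\Phi^\dagger(\Pi)\,(\rho-\sigma)\bigr)$, moving to the Hilbert--Schmidt adjoint $\Phi^\dagger$. Now $\Phi$ completely positive implies $\Phi^\dagger$ positive, so $\Phi^\dagger(\Pi) \succeq 0$, and $\Phi$ trace-preserving implies $\Phi^\dagger$ unital, so $\Phi^\dagger(\Pi) \preceq \Phi^\dagger(I) = I$. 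Thus $M := \Phi^\dagger(\Pi)$ is an admissible test operator for $\rho - \sigma$, and applying the variational formula in the other direction yields $\|\Phi(\rho)-\Phi(\sigma)\|_1 = 2\mathrm{Tr}(M(\rho-\sigma)) \leq 2\max_{0 \preceq M' \preceq I}\mathrm{Tr}(M'(\rho-\sigma)) = \|\rho - \sigma\|_1$.

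An alternative route is through the Stinespring dilation: write $\Phi(\cdot) = \mathrm{Tr}_E(V \cdot V^\dagger)$ for an isometry $V$, observe that conjugation by $V$ preserves the trace norm exactly, and show separately that the partial trace is trace-norm nonincreasing (which again follows from the variational characterization, lifting any $0 \preceq M_A \preceq I_A$ to $M_A \otimes I_E$). Either way, the step I would be most careful about is the dictionary between $\Phi$ and $\Phi^\dagger$ --- namely that complete positivity plus trace preservation of $\Phi$ are exactly equivalent to positivity plus unitality of $\Phi^\dagger$ --- since once that is in hand the rest is a short computation. As this is a standard textbook fact it would in practice simply be cited, but the argument above is the self-contained version.
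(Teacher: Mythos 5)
Your proof is correct; the paper itself gives no proof of this lemma, citing it as Theorem 9.2 of Nielsen--Chuang, and your argument is essentially that standard one: the variational characterization $\|\Delta\|_1 = 2\max_{0\preceq M\preceq I}\mathrm{Tr}(M\Delta)$ for traceless Hermitian $\Delta$, with the optimal projector transported through the channel. The only cosmetic difference is that you pass to the adjoint $\Phi^\dagger$ (positive and unital), whereas the textbook argument stays in the Schr\"odinger picture, writing $\rho-\sigma = \Delta_+-\Delta_-$ and bounding $\mathrm{Tr}\bigl(\Pi\,\Phi(\Delta_+)\bigr)\leq \mathrm{Tr}\,\Phi(\Delta_+)=\mathrm{Tr}\,\Delta_+$ directly from positivity and trace preservation of $\Phi$; the two are equivalent and both steps you flag (positivity and unitality of $\Phi^\dagger$) are handled correctly.
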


\subsubsection{Quantum Turing machines}

Our fundamental computation model in quantum computing will be the quantum analogue
of Turing machines \cite{deutsch1985quantum,Bernstein1997}, which we define informally below.
\begin{definition}[Quantum Turing machine]
    A \emph{quantum Turing machine} is a classical Turing machine with an additional quantum tape and quantum register. The quantum register does not affect the classical part of the machine in any way, except in that the qubits in the quantum register can be measured in the computational basis. On doing so, the values read from the measurement are copied into the classical registry, from where they can be used to affect the operation of the machine. The quantum Turing machine can perform any gate from its quantum gate set on its quantum registry. We assume this gate set is fixed and universal. Finally, the tape head on the quantum tape can swap qubits between the quantum registry and the position that the quantum tape head is located at. This applies a two-qubit $\mathrm{SWAP}$ gate.
\end{definition}

We define a number of complexity classes with respect to efficient computation
by quantum Turing machines \cite{Bernstein1997,Nishino2002}\footnote{We do not attempt to provide an exhaustive list of references to the vast literature on this topic, and refer the interested reader to the \href{https://complexityzoo.net/Complexity_Zoo}{Complexity Zoo} for such a list.}.

\begin{definition}[$\BQP$] $\BQP$ is the set of all languages $L = (L_{\text{yes}},L_{\text{no}})  \subset \{ 0,1\}^* \times \{0, 1\}^*$ for which there exists a quantum Turing machine $M$ using $t=\poly(n)$ time such that for every input $x\in L$ of length $n=|x|$,
\begin{itemize}
    \item if $x \in L_{\text{yes}} $ then the probability that $M$ accepts input $x$ is $\geq c$,
    \item if $x \in L_{\text{no}}$ then the probability that $M$ accepts input $x$ is $\leq s$.
\end{itemize}
\end{definition}

\begin{definition}[$\BQL$] $\BQL$ is the set of all languages $L = (L_{\text{yes}},L_{\text{no}})  \subset \{ 0,1\}^* \times \{0, 1\}^*$ for which there exists a quantum Turing machine $M$ using $r=O(\log(n))$ quantum and classical space such that for every input $x\in L$ of length $n=|x|$,
\begin{itemize}
    \item if $x \in L_{\text{yes}} $ then the probability that $M$ accepts input $x$ is $\geq c$,
    \item if $x \in L_{\text{no}}$ then the probability that $M$ accepts input $x$ is $\leq s$.
\end{itemize}
\end{definition}
The completeness and soundness parameters in both the above definitions can be chosen to be $c=2/3$ and $s=1/3$ without affecting the set of languages.
\begin{definition}[$\EQP$] $\EQP$ is the set of all languages $L = (L_{\text{yes}},L_{\text{no}})  \subset \{ 0,1\}^* \times \{0, 1\}^*$ for which there exists a quantum Turing machine $M$ using $t=\poly(n)$ time such that for every input $x\in L$ of length $n=|x|$,
\begin{itemize}
    \item if $x \in L_{\text{yes}} $ then $M$ outputs one with certainty on measurement,
    \item if $x \in L_{\text{no}}$ then $M$ output zero with certainty on measurement.
\end{itemize}
\end{definition}

\begin{remark}
Note that the definition of $\EQP$ is gateset dependent; this is due to the fact that quantum gatesets only allow universality up to approximation, which means that if a quantum complexity class requires perfect soundness and completeness, as does $\EQP$, it also has to be gateset dependent. 
\end{remark}

\subsubsection{Quantum circuits}

We may also define quantum complexity classes using uniform quantum circuits.
For this we use similar definitions to those provided by \cite{fefferman_remscrim},
which readers may refer to for more details.

\begin{definition}
\label{def:q-ckts}
    Let $s := s(n), t := t(n), k := k(n)$, let $\mathcal{K}$ be a family of machines,
    and let $\mathcal{G}$ be a set of $k$-local operators.
    A $\mathcal{K}$-uniform space-$s$ time-$t$ family of quantum circuits
    over $\mathcal{G}$ is a set $\{Q_x\}_{x \in \{0,1\}^n}$, where
    each $Q_x$ is a sequence of tuples
    $\langle i,g,j_1 \ldots j_k \rangle \in [t] \times \mathcal{G} \times [s]^k$
    such that there is a deterministic TM $M \in \mathcal{K}$ which,
    on input $x \in \mathcal{X}$, outputs a description of $Q_x$.
    
    The execution of $Q_x$ consists of initializing a vector $\ket{\psi}$ to
    $\ket{0^s}$ within $\mathcal{H}_s$ and applying, for each step $i \in [t]$ in order,
    each gate $g$ to qubits $j_1 \ldots j_k$ such that
    $\langle i,g,j_1 \ldots j_k \rangle \in Q_x$. The output of $Q_x$ is the
    value obtained by measuring the first qubit at the end of the computation.
    
    If $\mathcal{G}$ consists of unitary operators, we call these \textit{unitary circuits}
    and call each $g$ a \textit{gate}.
    If $\mathcal{G}$ additionally consists of measurements together with postprocessing
    and feed forward by (classical) $\mathcal{K}$-machines, we call these
    \textit{general circuits} and call each $g$ a \textit{channel}.
\end{definition}

It is known that polynomial-time uniform general quantum circuits over $n$ qubits with $\poly(n)$ gates can be used to provide an alternative definition of $\BQP$ \cite{Yao93}. Similarly, logspace uniform general quantum circuits of logarithmic width can be used as an alternative to define classes such as $\BQL$ \cite{fefferman_remscrim}.

\subsection{Catalytic computation}

We finally recall the known classical definitions of catalytic classical
computation.

\begin{definition}[\cite{buhrman2014computing}]
\label{def:cat}
    A \textit{catalytic Turing Machine} with space $s := s(n)$
    and \textit{catalytic space} $c := c(n)$ is a Turing Machine $M$ with
    a work tape of length $s$ and a \textit{catalytic} tape of length $c$.
    We require that for any $\tau \in \{0,1\}^c$, if we initialize the catalytic
    tape to $\tau$, then on any given input $x$, the execution of $M$
    on $x$ halts with $\tau$ on the catalytic tape.
\end{definition}

This definition gives rise to a natural complexity class $\CSPACE[s,c]$,
which is a variant of the ordinary class $\SPACE[s]$.
The most well-studied variant is \textit{catalytic logspace},
where $s$ is logarithmic and $c$ is polynomial.

\begin{definition}
\label{def:cspace}
    We define $\CSPACE[s,c]$ to be the class of all functions $f$ for which there exists a catalytic Turing Machine $M$ with space $s$ and catalytic space $c$ such that on input $x$, $M(x) = f(x)$.
    We further define \textit{catalytic logspace} as
    $$\CL := \bigcup_{k \in \mathbb{N}} \CSPACE(k \log n, n^k)$$
\end{definition}

\section{Quantum catalytic space}
\label{sec:qcm_def}
The first goal of this paper is to find a proper definition of quantum catalytic space.
There are many choices that have to be made in the model, but we begin with our general
definition up front, leaving questions of machine model, uniformity, gateset, and initial
catalytic tapes. These will be discussed and clarified in the rest of this section.

\begin{definition}[Quantum catalytic machine] \label{def:qcm}
	A \textit{quantum catalytic machine} with work space $s := s(n)$, catalytic space
	$c := c(n)$, uniformity $\mathcal{K}$, gateset $\mathcal{G}$, and catalytic set
	$\mathcal{A}$ is a $\mathcal{K}$-uniform quantum machine $M$ with operations from
	$\mathcal{G}$ acting on two Hilbert spaces, $\mathcal H_{s}$ and $\mathcal{H}_c$,
	of dimensions $2^s$ and $2^c$ respectively.
	The latter space, called the \textit{catalytic tape},
	will be initialized to some $\rho \in \mathcal{A} \subseteq D(\mathcal{H}_c)$.
	We require that for any $\rho \in \mathcal{A}$,
	if we initialize the catalytic tape to state $\rho$,
	then on any given input $x\in\{0,1\}^n$,
	the execution of $M(x)$ halts with $\rho$ on the catalytic tape. Furthermore, we require that the output state on the worktape is independent of the catalytic state $\rho$.\footnote{We justify this final requirement in Lemma~\ref{lem:approx_out}.} The final action of the machine can be represented by a quantum channel $\Phi_{x}:\kb{0}\otimes \rho\mapsto \eta_{x} \otimes \rho$, for any catalytic state $\rho$ and input $x\in \{0,1\}^n$, and some output state $\eta$.
\end{definition}

\noindent
This gives rise to the following complexity classes:

\begin{definition}[Quantum catalytic complexity] \label{def:qcspace}
    $\QCSPACE[s,c]$ is the class of Boolean functions which can be decided
    with probability 1 by a quantum catalytic machine with work memory $s$ and
    catalytic memory $c$.
    
    $\BQCSPACE[s,c]$ is the class of Boolean functions which can be decided
    with probability 2/3 by a quantum catalytic machine with work memory $s$ and
    catalytic memory $c$.
\end{definition}

\noindent
We further specify to the case of quantum catalytic logspace:

\begin{definition}[Quantum catalytic logspace] \label{def:qcl}
    $$\QCL = \bigcup_{k \in \mathbb{N}} \QCSPACE[k \log n,n^{k}]$$
    $$\BQCL = \bigcup_{k \in \mathbb{N}} \BQCSPACE[k \log n,n^{k}]$$
\end{definition}

\subsection{Machine model}

We begin by defining the two natural choices of base model for
quantum catalytic machines, namely \textit{Turing machines} and
\textit{circuits}.

\begin{definition}[Quantum catalytic Turing machine] \label{def:qcm_tm}
	A \textit{quantum catalytic Turing machine} is defined as
	in Definition~\ref{def:qcm} with quantum Turing machines
	as our machine model. We write $\QCSPACEM$ (respectively
	$\BQCSPACEM$, $\QCLM$, and $\BQCLM$) to refer to $\QCSPACE$
	with quantum Turing machines.
\end{definition}

\begin{definition}[Quantum catalytic circuits] \label{def:qcm_ckt}
	A \textit{quantum catalytic circuit} is defined as
	in Definition~\ref{def:qcm} with time-$2^{O(s)}$ quantum
	circuits as our machine model. We write $\QCSPACEC$ (respectively
	$\BQCSPACEC$, $\QCLC$, and $\BQCLC$) to refer to $\QCSPACE$
	with quantum catalytic circuits.
\end{definition}

Given that $\CL$ and related classes are defined in terms of (classical) Turing
machines, the option of circuits seems surprising and perhaps unnatural. For example,
Definition~\ref{def:qcm_ckt} imposes a time bound as part of its definition,
while for $\CL$ there is no known containment in polynomial time.
For quantum circuits and Turing machines without access to the catalytic tape,
a simple equivalence has been known for a long time
\cite{Yao93};
however, Definition~\ref{def:qcm_ckt} only allows for circuits of length
$2^{O(s)}$, while a generic transformation on $s+c$ qubit registers would
give a circuit of length $2^{O(s+c)}$, i.e. requiring an exponential overhead. 

The main result of this paper is to show that these
models are in fact equivalent:

\begin{theorem} \label{thm:cqtm_cqckt}
    For $s = \Omega(\log n), c = 2^{O(s)}$
    $$\QCSPACEM[O(s),O(c)] = \QCSPACEC[O(s),O(c)]$$
	$$\BQCSPACEM[O(s),O(c)] = \BQCSPACEC[O(s),O(c)]$$
\end{theorem}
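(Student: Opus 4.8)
The plan is to establish the two inclusions $\QCSPACEC[O(s),O(c)] \subseteq \QCSPACEM[O(s),O(c)]$ and $\QCSPACEM[O(s),O(c)] \subseteq \QCSPACEC[O(s),O(c)]$ separately, the argument being identical for the bounded-error classes. The first direction is routine. Given a $\mathcal{K}$-uniform general quantum catalytic circuit of length $T = 2^{O(s)}$, with its work register $\mathcal{H}_s$ taken to be the first $O(s)$ qubits and its catalytic register $\mathcal{H}_c$ the remaining $O(c)$ qubits, I would build a quantum catalytic Turing machine that keeps on its classical work tape a step counter $i \in [T]$ (of $O(s)$ bits, as there are $2^{O(s)}$ gates), scratch space to run the uniformity machine, and scratch space for any feed-forward machines; all of this is $O(s)$-space and hence fits in work space $O(s)$, which since $s = \Omega(\log n)$ is in particular enough to read the input. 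For $i = 1, \dots, T$ in order the machine lists, via the uniformity machine, the gates scheduled at step $i$, and for each $k$-local gate it walks the quantum tape head to the relevant positions, $\mathrm{SWAP}$s those qubits into a $k$-qubit quantum register, applies the gate --- performing the measurement and running the feed-forward machine on the classical work tape when the gate is a measurement channel --- and $\mathrm{SWAP}$s them back; finally it measures the first qubit and outputs. Since the circuit returns its catalytic register to its initial state on every input and for every initial catalytic state, so does the simulating machine, and the output is unchanged.

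For the reverse inclusion the obstacle is that a quantum catalytic Turing machine carries no a priori time bound, and the naive bound from counting configurations is $2^{O(s+c)}$, doubly exponential in $s$ and far larger than the $2^{O(s)}$ circuit length available. \textbf{The key step, where essentially all the difficulty lies, is to show that the machine halts within $T = 2^{O(s)}$ steps, in the worst case over both the initial catalytic state and the intermediate measurement outcomes.} This is exactly Theorem~\ref{thm:QCL_in_EQP} --- together with the fact that this halting time does not depend on the catalytic state --- rescaled from $s = \log n$ and $c = \poly(n)$ to $s = \Omega(\log n)$ and $c = 2^{O(s)}$, and I would re-run that argument in this regime: the heart of it is that the catalytic register is returned \emph{exactly}, even relative to an inaccessible reference system, which forces the computation to behave reversibly with respect to that $2^{O(c)}$-dimensional register and hence leaves the halting behaviour governed by the classical control, whose configuration space has size only $2^{O(s)}$. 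Checking that this argument survives the rescaling --- in particular that the hypothesis $c = 2^{O(s)}$ is precisely what the bound requires --- is the one nontrivial point, and the main obstacle of the proof.

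Granting the time bound $T = 2^{O(s)}$, the rest is a space-faithful Turing-machine-to-circuit compilation. I would use a general quantum catalytic circuit whose work register carries the machine's $O(s)$-qubit quantum work tape, its quantum register, and $O(s)$ ancilla qubits encoding the machine's \emph{classical} configuration --- classical work tape, finite control, and all tape-head positions, the last taking $O(\log(s+c)) = O(s)$ bits because $c = 2^{O(s)}$ --- and whose catalytic register is the machine's $O(c)$-qubit catalytic tape. The circuit consists of $T$ macro-steps, each simulating one machine step: a macro-step measures the configuration ancillas (and, on a measurement step, the relevant register qubit), the feed-forward $\mathcal{K}$-machine computes the machine's transition from this classical data, and then the prescribed operation is applied and the configuration ancillas updated, with a halting configuration making the macro-step the identity. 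The one subtlety is that the head position --- hence which tape qubit a $\mathrm{SWAP}$ step touches --- is data-dependent while circuit gates have fixed support; this is handled by replacing each such $\mathrm{SWAP}$ with a feed-forward-controlled $\mathrm{SWAP}$ network over the $O(s+c) = O(c)$ tape positions, so each macro-step uses $O(c)$ gates and the whole circuit has length $T \cdot O(c) = 2^{O(s)}$, as required. Because the machine resets its catalytic tape exactly at halting and the circuit merely idles thereafter, the catalytic register is back in its initial state after all macro-steps for every initial state, and designating the ancilla holding the machine's output bit as the first qubit makes the circuit's output coincide with the machine's. The construction preserves the output distribution exactly, so it applies verbatim to $\BQCSPACE$, yielding $\BQCSPACEM[O(s),O(c)] = \BQCSPACEC[O(s),O(c)]$.
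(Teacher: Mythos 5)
Your proposal is correct and follows the paper's skeleton: the routine circuit-to-machine direction, plus the identification of the worst-case halting bound $2^{O(s)}$, uniform over initial catalytic states (\Cref{thm:QCL_polytime}, obtained from \Cref{lem:average} together with \Cref{thm:catalytic_independent_time}), as the crux; you are also right that the hypothesis $c = 2^{O(s)}$ is exactly what makes the orthogonality counting in \Cref{lem:average} give $2^{O(s)}$ rather than $2^{O(s+c)}$. Where you genuinely diverge is in the machine-to-circuit compilation. The paper first makes the quantum Turing machine oblivious (by the method of \Cref{lemma:cl_obliv_no_halt}), so that head movements are fixed, and invokes \Cref{thm:catalytic_independent_transitions} --- the operation applied at each timestep is the same for every catalytic state --- so the circuit is simply the fixed transition channel laid down at the oblivious head positions, one per timestep. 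You instead carry the classical configuration (finite control, classical work tape, head positions) in measured ancillas and route the head-dependent swaps through feed-forward-controlled $\mathrm{SWAP}$ networks over all $O(c)$ tape positions. Your route avoids both the obliviousness transformation and the per-step state-independence lemma, at the cost of an extra $O(c)$ factor in circuit length (harmless, since $c = 2^{O(s)}$) and of leaning on the measurement-plus-feed-forward power of general circuits; in particular, unlike the paper's compilation, it does not directly yield a measurement-free simulation, which would matter for unitary variants such as $\QUCL$, though it is perfectly adequate for the general-circuit classes $\QCSPACEC$ and $\BQCSPACEC$ being equated here.
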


\noindent
For the rest of this section we will deal with all auxiliary issues, namely
the choice of catalytic tapes and gateset, for quantum circuits alone; while
all proofs can be made to hold for quantum Turing machines without much issue,
this is also obviated by Theorem~\ref{thm:cqtm_cqckt}, which we will prove in
Section~\ref{sec:tm-ckt}.

\subsection{Catalytic tapes}

We now move to discussing the choice of initial catalytic tapes $\mathcal{A}$.
Perhaps the most immediate choice would be to put no restrictions on $\mathcal{A}$
and allow our catalytic tapes to come from the set of all density matrices in
$D(\mathcal{H}_c)$; this will ultimately be our definition.

\begin{definition}
    We fix the catalytic set in Definition~\ref{def:qcm} to be
    $\mathcal{A} = D(\mathcal{H}_c)$.
\end{definition}

While this is a natural option, encompassing every possible state on $c$ qubits,
there are other choices one can make.
We propose four natural options---density matrices and three others---and
show that all four are equivalent, thus justifying our choice.

\begin{definition}
    We define the following catalytic sets:
    \begin{itemize}
    \item $\mathsf{Density}$ is the set of all density matrices $\rho \in D(\mathcal{H}_c)$.
    \item $\mathsf{Pure}$ is the set of all pure states $\ket{\psi} \in \mathcal H_c$.
    \item $\mathsf{PauliProd} = \{\ket{\mathrm{PP}} : \ket{\mathrm{PP}} = \displaystyle\bigotimes_{i=1}^c\ket{\phi}_i\}$ is the set of tensor products of eigenstates of the single-qubit Pauli operators, where $\ket{\phi}_i \in \{\ket{0},\ket{1}, \ket{+}, \ket{-},\allowbreak \ket{\circlearrowright} , \ket{\circlearrowleft}\}\subset \mathcal H_2$. 
    \item $\mathsf{EPR} = \{\frac{1}{\sqrt{2^{c}}}\sum_{i=0}^{2^c-1}\ket{i}\ket{i}\} \subset \mathcal H_c \otimes \mathcal H_c$ is the unique state of $c$ EPR pairs, where the catalytic tape will be formed of one half of each EPR pair; the other halves are retained as a reference system which cannot be operated on by the quantum circuit. the quantum circuit is of the form $Q_x = \tilde{Q}_x\otimes \mathcal I_c$, acting as the Identity on the second set of halves of the EPR pairs that is inaccessible to the circuit.
    \end{itemize}
\end{definition}

\begin{remark}
We briefly comment on the fourth set, i.e. $\mathsf{EPR}$. Using classical catalytic techniques as a subroutine has proven to be very useful, for instance in giving an algorithm for tree evaluation in $\mO(\log n \log(\log n))$ space \cite{Cook24}. One can also consider using analogous quantum catalytic techniques as subroutines for quantum computations, albeit this does not appear straightforward due to inherent quantum limitations. We will see that this complication can be effectively modeled by considering the initial state of the catalytic tape to be the halves of $c$ EPR pairs.
\end{remark}

We will now prove that the four classes of quantum catalytic circuits with initial catalytic states restricted to one of the four sets $D(\mathcal H_c)$, $\mathcal H_c$, $\mathsf{PauliProd}$, and $\mathsf{EPR}$ respectively, are all equivalent. For this we first require the following lemma.

\begin{fact}
\label{lem:RB_basis}
Any $2^d\times 2^d$ complex matrix can be written as a linear combination of rank-1 outer products of states from $\mathsf{PauliProd}$ over $d$ qubits. In other words, the complex span of the set of $d$-qubit tensor products of Pauli eigenstates equals the set of $2^d\times 2^d$ complex matrices.
\end{fact}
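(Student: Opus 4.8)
The plan is to prove Fact~\ref{lem:RB_basis} by a straightforward dimension-counting argument combined with the single-qubit base case. First I would establish the base case $d=1$: the four operators $\kb{0}$, $\kb{1}$, $\kb{+}$, $\kb{-}$ already span the real-dimension-... more carefully, I want to show that $\ket{0}\bra{0}, \ket{1}\bra{1}, \ket{+}\bra{+}, \ket{\circlearrowright}\bra{\circlearrowright}$ (the eigenstates of $Z$, $Z$, $X$, $Y$ respectively, picking one from each pair plus the two $Z$-eigenstates) are linearly independent over $\mathbb{C}$. Indeed, writing each in the Pauli basis $\{I, X, Y, Z\}$, one gets $\kb{0} = \tfrac12(I+Z)$, $\kb{1} = \tfrac12(I-Z)$, $\kb{+} = \tfrac12(I+X)$, $\kb{\circlearrowright} = \tfrac12(I+Y)$, and the change-of-basis matrix from $\{I,X,Y,Z\}$ to this set is invertible. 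Hence these four rank-1 outer products span all of $\mathbb{C}^{2\times2}$, which has complex dimension $4$.

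Next I would lift to general $d$ by taking tensor products. Since the single-qubit Pauli eigenstate outer products span $\mathbb{C}^{2\times 2}$, the set of all $d$-fold tensor products $\bigotimes_{i=1}^d (\ket{\phi_i}\bra{\phi_i})$ with each $\ket{\phi_i}$ a Pauli eigenstate spans $\left(\mathbb{C}^{2\times2}\right)^{\otimes d} \cong \mathbb{C}^{2^d\times 2^d}$; this is the general fact that if $S_1$ spans $V_1$ and $S_2$ spans $V_2$ then $\{s_1\otimes s_2 : s_i\in S_i\}$ spans $V_1\otimes V_2$, applied inductively. Each such tensor product of rank-1 projectors is itself a rank-1 outer product $\ket{\mathrm{PP}}\bra{\mathrm{PP}}$ with $\ket{\mathrm{PP}}\in\mathsf{PauliProd}$, so we conclude that $\mathrm{span}_{\mathbb{C}}\{\kb{\mathrm{PP}} : \ket{\mathrm{PP}}\in\mathsf{PauliProd}\} = \mathbb{C}^{2^d\times 2^d}$, which is exactly the claim.

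I do not expect any real obstacle here; the only mild subtlety is being careful that we are working over $\mathbb{C}$ (not $\mathbb{R}$), so that even though the $\mathsf{PauliProd}$ states are just six fixed single-qubit states, their outer products span the full complex matrix algebra rather than only the Hermitian part — this works precisely because we allow complex (not just real) coefficients in the linear combination, which is what the statement permits. An alternative, equally short route would be to cite the well-known fact that the Pauli strings $\{P_1\otimes\cdots\otimes P_d : P_i\in\{I,X,Y,Z\}\}$ form a basis of $\mathbb{C}^{2^d\times 2^d}$ and then observe that each Pauli string lies in the span of the $\mathsf{PauliProd}$ outer products (again via the invertible single-qubit change of basis above), but the direct tensor-product argument is cleaner and self-contained.
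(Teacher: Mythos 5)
Your proof is correct and takes essentially the same approach as the paper: both reduce to the single-qubit fact that Pauli eigenstate projectors span $\mathbb{C}^{2\times 2}$ (the paper writes $I,X,Y,Z$ as sums/differences of the projectors, you invert the same change of basis) and then tensor up, with your ``alternative route'' via Pauli strings being precisely the paper's argument. No gaps; your explicit check that four projectors suffice and the remark about working over $\mathbb{C}$ are fine refinements of the same idea.
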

\begin{proof}
Note that all four Pauli matrices can be written as a linear combination of two of the Pauli eigenstates:
\begin{align*}
&I = \kb{0} + \kb{1},\quad X = \kb{+} - \kb{-}, \\
&Z = \kb{0} - \kb{1},\quad Y = \kb{\circlearrowright} - \kb{\circlearrowleft}.
\end{align*}
The four Pauli matrices form a basis for $2\times 2$ complex matrices. Consequently, Pauli strings of length $d$---i.e.,\ tensor products of $d$ Pauli matrices---form a basis for $2^d\times 2^d$ matrices.
\end{proof}

Now we can state the theorem:

\begin{theorem}
Let $\QCC_A$ denote quantum catalytic circuits with initial catalytic tapes
coming from $A$. Then
The following four classes of quantum catalytic circuits are equivalent:
\[
    \QCC_{\mathsf{Density}} = \QCC_{\mathsf{Pure}} = \QCC_{\mathsf{PauliProd}} = \QCC_{\mathsf{EPR}}
\]
\end{theorem}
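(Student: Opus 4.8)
The plan is to prove the chain of equalities by a sequence of containments, exploiting the fact that a more general catalytic set can only make the class larger, so that it suffices to show $\QCC_{\mathsf{Density}} \subseteq \QCC_{\mathsf{EPR}}$, $\QCC_{\mathsf{EPR}} \subseteq \QCC_{\mathsf{PauliProd}}$, and $\QCC_{\mathsf{PauliProd}} \subseteq \QCC_{\mathsf{Pure}}$ (the remaining inclusions $\QCC_{\mathsf{Pure}} \subseteq \QCC_{\mathsf{Density}}$ and $\QCC_{\mathsf{PauliProd}} \subseteq \QCC_{\mathsf{Density}}$ being immediate since $\mathsf{Pure}, \mathsf{PauliProd} \subseteq \mathsf{Density}$). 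The conceptual point throughout is that if a catalytic machine correctly resets the catalytic tape on \emph{every} state in a spanning (or purifying, or averaging) set, then by linearity/purification it resets it on every density matrix.

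First I would handle $\QCC_{\mathsf{Pure}} \subseteq$ everything and, more substantially, $\QCC_{\mathsf{PauliProd}} \subseteq \QCC_{\mathsf{Pure}}$: this is trivial since $\mathsf{PauliProd} \subseteq \mathsf{Pure}$, so a machine that works for all pure states in particular works for all Pauli-product states. The real content is the reverse: a machine $M$ promised only to reset catalytic tapes drawn from $\mathsf{PauliProd}$ in fact resets \emph{every} density matrix. For this I invoke Fact~\ref{lem:RB_basis}: the outer products $\ket{\mathrm{PP}}\bra{\mathrm{PP}}$ for $\ket{\mathrm{PP}} \in \mathsf{PauliProd}$ span the full matrix algebra on $\mathcal{H}_c$. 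The machine's overall evolution (including internal workspace, discarding ancillas, and the final channel) is linear on the space of operators on $\mathcal H_c$; since the reset condition $M: \kb{0}\otimes\rho \mapsto \eta_x \otimes \rho$ holds for every $\rho$ in a spanning set, it holds for every Hermitian $\rho$ with unit trace, i.e.\ every density matrix — so $\QCC_{\mathsf{PauliProd}} = \QCC_{\mathsf{Density}}$. One subtlety to state carefully: ``the same $\eta_x$'' must come out for every basis element, which follows from the independence-of-output requirement built into Definition~\ref{def:qcm}, so that the affine map $\rho \mapsto (\text{tracing out work and ancilla})$ is genuinely the identity on all of $D(\mathcal H_c)$.

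Next I would prove $\QCC_{\mathsf{Density}} \subseteq \QCC_{\mathsf{EPR}}$ and $\QCC_{\mathsf{EPR}} \subseteq \QCC_{\mathsf{Density}}$ (equivalently $\subseteq \QCC_{\mathsf{PauliProd}}$). For $\QCC_{\mathsf{Density}} \subseteq \QCC_{\mathsf{EPR}}$: the $\mathsf{EPR}$ catalytic tape is the half of $c$ EPR pairs whose reduced state is exactly the maximally mixed state $2^{-c}I \in D(\mathcal H_c)$, and the circuit acts as $\tilde Q_x \otimes \mathcal I_c$ on the reference; so a machine that resets every density matrix in particular resets $2^{-c}I$, and since it never touches the reference halves, the global EPR state is restored and the reference stays correlated as before — hence it is a valid $\mathsf{EPR}$ machine. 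The converse direction, $\QCC_{\mathsf{EPR}} \subseteq \QCC_{\mathsf{PauliProd}}$ (hence $\subseteq \QCC_{\mathsf{Density}}$), is the step I expect to be the main obstacle. Here I have a machine $M$ that resets the $\mathsf{EPR}$ tape, i.e.\ restores the global state on tape-plus-reference, and I want a machine that resets an arbitrary pure (or Pauli-product) catalytic state $\ket\psi$ with \emph{no} reference available. The idea is that resetting one half of a maximally entangled state while leaving the other half untouched is, by the Choi--Jamio\l{}kowski correspondence, equivalent to the channel enacted on the tape register being the identity channel: $M$'s action on the $\mathsf{EPR}$ tape has Choi matrix equal to (the Choi matrix of) $\mathcal I_c$, which forces the induced channel on the catalytic register to be exactly $\mathcal I_c$ — and an identity channel resets \emph{every} input state, in particular every $\ket\psi \in \mathsf{Pure}$ and every $\ket{\mathrm{PP}} \in \mathsf{PauliProd}$, with no reference needed. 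The care required is to argue that $M$'s overall action on the catalytic register, after tracing out workspace and measuring/output, is genuinely a fixed channel $\Phi_x$ independent of what is fed in (again using the built-in output-independence plus the structure $Q_x = \tilde Q_x \otimes \mathcal I_c$), so that "resets the maximally entangled state" $\Rightarrow$ "Choi matrix is that of $\mathcal I_c$" $\Rightarrow$ "$\Phi_x = \mathcal I_c$" $\Rightarrow$ "resets all inputs"; chaining these gives $\QCC_{\mathsf{EPR}} \subseteq \QCC_{\mathsf{PauliProd}}$, and combining all the inclusions closes the cycle and proves the four classes equal.
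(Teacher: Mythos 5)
Your overall route is essentially the paper's: the two substantive tools you invoke --- the spanning of the full matrix algebra by Pauli-product rank-one projectors (Fact~\ref{lem:RB_basis}) combined with linearity of the channel and the output-independence requirement of Definition~\ref{def:qcm}, and the Choi-matrix argument that a channel preserving half of a maximally entangled state must be the identity channel --- are exactly the ingredients of the paper's implications (1) and (2), just packaged in a slightly different order (you prove $\QCC_{\mathsf{PauliProd}} \subseteq \QCC_{\mathsf{Density}}$ directly by linearity rather than detouring through $\mathsf{EPR}$, which is fine). One bookkeeping problem: your ``immediate'' containments point the wrong way. Since $\mathsf{PauliProd}\subset\mathsf{Pure}\subset\mathsf{Density}$, what is immediate is $\QCC_{\mathsf{Density}}\subseteq\QCC_{\mathsf{Pure}}\subseteq\QCC_{\mathsf{PauliProd}}$ (a circuit catalytic for the larger set of tapes is in particular catalytic for the smaller); the inclusions you label immediate, such as $\QCC_{\mathsf{Pure}}\subseteq\QCC_{\mathsf{Density}}$ and $\QCC_{\mathsf{PauliProd}}\subseteq\QCC_{\mathsf{Pure}}$, are precisely the nontrivial ones. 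Your spanning-plus-linearity argument does subsume them, so this is repairable, but as written the chain you announce is not the chain your justifications support.

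The genuine gap is your argument for $\QCC_{\mathsf{Density}}\subseteq\QCC_{\mathsf{EPR}}$: ``the machine resets the reduced state $2^{-c}I$ and never touches the reference halves, hence the global EPR state is restored'' is an invalid inference. Preserving the maximally mixed marginal while acting trivially on the reference does not preserve the joint state: a completely dephasing (or depolarizing) channel on the tape is unital and touches no reference qubit, yet it destroys the correlations with the reference, so the EPR state is not returned. What saves the statement is that a $\QCC_{\mathsf{Density}}$ circuit resets \emph{every} density matrix, and density matrices span all operators on $\mathcal H_c$; hence by linearity (and output-independence) the induced map satisfies $\Phi\bigl(\kb{0}\otimes\ket{i}\bra{j}\bigr)=\eta_x\otimes\ket{i}\bra{j}$ for all $i,j$, so applying $\Phi\otimes\mathcal I_c$ to $\frac{1}{2^c}\sum_{i,j}\ket{i}\bra{j}\otimes\ket{i}\bra{j}$ restores the EPR state --- this is exactly the paper's implication (1), proved there from $\mathsf{PauliProd}$ via Fact~\ref{lem:RB_basis}. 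You already deploy this spanning/linearity argument and the Choi argument elsewhere in the proposal, so the fix is local, but the step as you wrote it would fail.
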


\begin{proof}
First note the obvious implications: for any quantum catalytic circuit $\Phi$,
\begin{align*}
\Phi \in \QCC_{\mathsf{Density}} &\implies \Phi \in \QCC_{\mathsf{Pure}} \\
\Phi \in \QCC_{\mathsf{Pure}} &\implies \Phi \in \QCC_{\mathsf{PauliProd}}
\end{align*}
these follow due to the fact that $\mathsf{PauliProd} \subset \mathsf{Pure} \subset \mathsf{Density}$.
To finish the proof, we will further show the following two implications.
\begin{align*}
&(1)\quad \Phi \in \QCC_{\mathsf{PauliProd}} \implies \Phi \otimes \mathcal{I}_c \in \QCC_{\mathsf{EPR}}\\
&(2)\quad \Phi \otimes \mathcal{I}_c \in \QCC_{\mathsf{EPR}} \implies \Phi \in \QCC_{\mathsf{Density}}
\end{align*}
We first prove implication (1). Let $\Phi$ be a circuit from $\QCC_{\mathsf{PauliProd}}$ and consider the action of $\Phi \otimes \mathcal I_{c}$ (where the Identity operator acts on the inaccessible halves of the EPR pairs) on the state $\frac{1}{2^c}\kb{0} \sum_{i,j} \ket{i}\bra{j} \otimes \ket{i}\bra{j}$:
\[
    \Phi \otimes \mathcal I_{c}\left(\frac{1}{2^c}\kb{0} \sum_{i,j} \ket{i}\bra{j} \otimes \ket{i}\bra{j}\right) = \frac{1}{2^c} \sum_{i,j} \Phi\biggl(\kb{0}\otimes \ket{i}\bra{j}\biggr)\otimes \ket{i}\bra{j},
\]
because $\Phi$ being a channel is a linear operator. By \cref{lem:RB_basis}, $\ket{i}\bra{j}$ can be written as a linear combination of rank-1 projectors onto $\mathsf{PauliProd}$ states. Since $\Phi$ is catalytic with respect to $\mathsf{PauliProd}$, it follows that 
\[
\frac{1}{2^c} \sum_{i,j} \Phi\biggl(\kb{0}\otimes \ket{i}\bra{j}\biggr)\otimes \ket{i}\bra{j} =  \eta \otimes \frac{1}{2^c} \sum_{i,j}\ket{i}\bra{j} \otimes \ket{i}\bra{j},
\]
for some state in $\eta \in D(\mathcal H_s)$. This shows that $\Phi \in \QCC_{\mathsf{EPR}}$.

Implication (2) requires a similar approach. Let $\tilde{\Phi} \in \QCC_{\mathsf{EPR}}$, then we can write $\tilde{\Phi} = \Phi \otimes \mathcal I_c$. For a given input state $\kb{0}\in \mathcal H_s$ the action of $\Phi \otimes \mathcal I_c$ must satisfy
\[
    \Phi\left(\frac{1}{2^c} \sum_{i,j}\ket{0}\bra{0}\otimes \ket{i}\bra{j}\right) \otimes \ket{i}\bra{j} = \eta \otimes \frac{1}{2^c} \sum_{i,j} \ket{i}\bra{j} \otimes \ket{i}\bra{j},
\]
for some state in $\eta \in D(\mathcal H_s)$. 
 Since the catalytic state of $c$ EPR pairs is returned perfectly unaffected for every choice of input state, the effective channel of $\Phi$ can also be written as a tensor product channel: $\Phi = \Gamma_s \otimes \Xi_c$\footnote{It seems that the catalyst does not offer any improvement, because we can write $\Phi$ as a tensor product of the action on the logspace clean qubits and the action of the catalyst, however this does not need to hold. Only the action as a whole is writable as a tensor product, it might actually consist of intermediate steps that are not of tensor product form, therefor $\Gamma_s$ might only have an efficient circuit description in the presence of a catalyst.}, with the action of $\Xi_c$ being
\[
\frac{1}{2^c}\sum_{i,j} \Xi_c\biggl(\ket{i}\bra{j}\biggr) \otimes \ket{i}\bra{j} = \frac{1}{2^c} \sum_{i,j} \ket{i}\bra{j} \otimes \ket{i}\bra{j}\;. 
\]
Note that although the effective channel factorises into a tensor product across the work and catalytic registers, without the catalytic tape much larger circuits may be required to implement $\Gamma_c$. Moving forward, this implies that the Choi matrix of $\Xi_c$ is 
\[
    J(\Xi_c) = \sum_{i,j} \Xi_c\biggl(\ket{i}\bra{j}\biggr) \otimes \ket{i}\bra{j} = \sum_{i,j} \ket{i}\bra{j} \otimes \ket{i}\bra{j} = J(\mathcal I),
\]
and therefore the effective channel $\Xi_c$ is the identity channel. This gives that for any state $\rho \in \mathcal H_c$ it must hold that on input $\ket{0}\bra{0}$, the channel $\Phi$ must act as follows:
\[
\Phi(\ket{0}\bra{0} \otimes \rho) = \eta \otimes \rho \qedhere
\]
\end{proof}
\begin{remark}
In the proof that these channel definitions are equivalent we actually showed that any channel under one definition also furnishes an instance of the other definitions. This means that they are also operationally equivalent. These equivalence proofs therefore have to hold for any type of machine model that has to adhere to the same restrictions of resetting the input state in the catalytic space. In particular it also holds for quantum Turing machines.
\end{remark}

\subsection{Gateset}
When discussing quantum circuits, a fundamental issue is the underlying gate set. Unlike the classical case, unitary operations form a continuous space, and finite-sized circuits over finite gate sets cannot implement arbitrary unitaries. However, there do exist finite gate sets of constant locality (that is, fan-in) which are quantum universal, in the sense that any $n$-qubit unitary may be approximated to any desired precision $\epsilon$ in $\ell_2$-distance by a product of $l=O(\poly\log\frac{1}{\epsilon})$ gates from the universal gate set; this is the celebrated Solovay-Kitaev theorem \cite{Kitaev1997,Dawson06SolovayKitaev,NielsenChuang2010}. From the standpoint of complexity classes, Nishimura and Ozawa~\cite{Nishimura_2009} also showed that polynomial-time quantum Turing machines are exactly equivalent to finitely generated uniform quantum circuits.

We note that \cref{def:qcm_ckt,def:q-ckts} do not make reference any fixed universal gate set. A potential issue that arises in this regard is that the complexity class being defined may depend in an intricate way on the chosen universal gate set, since it may not be possible to perfectly reset every initial catalytic state under our uniformity and resource constraints. If we relax the notion of catalyticity to mean that the initial catalytic state only has to be reset to within $\epsilon$ trace distance at the end of the computation, one can use the Solovay-Kitaev theorem to see that every choice of gate set leads to the same complexity class in \cref{def:qcm_ckt}. This interesting model resembles classical catalytic space classes with small errors in resetting, and we leave it as an open question to determine how it relates to the exact resetting model.

Returning to our setting that requires the quantum catalytic machine to perfectly reset the catalytic space to its initial state at the end of the computation, we will restrict out attention to the case of universal quantum gate sets that are infinite (for the complexity theoretic properties of circuit families over such gate sets, see e.g. \cite{Nishimura02infinitegates}). In this case, our definition is robust to the choice of gate set since any unitary may be implemented exactly by finite-sized circuits over such gate sets. Consequently changing the gate set does not change the set of catalytic states that can be reset exactly by the machine. This results in well-defined catalytic complexity classes independent of the specific choice of gate set.

\subsection{Uniformity}

Similar to gatesets,
the question of uniformity is quite subjective,
as different uniformity conditions will lead to different levels
of expressiveness for our machines.

\begin{definition}
    We fix the uniformity in Definition~\ref{def:qcm} to be
    $\mathcal{K} = \SPACE[O(s)]$.
\end{definition}

We choose $\SPACE[O(s)]$ as it is the largest class of classical machines
a $\QCSPACE[s,c]$ machine should seemingly contain by default.
Thus we believe the choice of $\SPACE[O(s)]$-uniformity is best
suited to removing classical uniformity considerations from taking the
forefront of the discussion regarding quantum catalytic space.

The question of how uniformity affects the power of $\QCSPACE$
is left to future work; we only comment briefly here on natural alternative choices.
Perhaps the most immediate would be to
consider $\CSPACE[s,c]$ uniformity, as it mirrors our quantum machine.
As we will see later, it is not clear how to prove $\QCSPACE[s,c]$ contains
$\CSPACE[s,c]$ directly, an interesting technical challenge that would be
rendered moot by building it into the uniformity. Similarly we avoid
$\P$-uniformity because it is not known, and even strongly
disbelieved, that $\CL$ contains $\P$.

\section{$\QCL$ upper bounds}
\label{sec:tm-ckt}

In this section we will finally return to the question of our quantum
machine model, showing that Turing machines and circuits are equivalent.
One major stepping stone is to show that quantum catalytic Turing machines adhere
to a polynomial runtime bound for \textit{all} possible
initializations of the catalytic tape.

Before all else, a remark is in order as to why such a restriction should hold for a seemingly stronger
model, i.e. $\QCLM$, when it is not in fact known for $\CL$.
While quantum catalytic space has access to more powerful computations, i.e.
quantum operations, it also has the much stronger restriction of resetting arbitrary
density matrices rather than arbitrary bit strings. This restriction gives rise to
a much stronger upper bound argument, and in fact rules out one of the main techniques
available to classical Turing machines, namely compression arguments (see c.f.
\cite{Dulek15,CookLiMertzPyne25}).

\subsection{Polynomial average runtime bound}

We begin by showing an analogue of the classical result of
\cite{buhrman2014computing},
i.e. the average runtime of a quantum catalytic machine for a random initial catalytic
state $\rho$ is polynomial in the number of work qubits.
We note that the runtime of a quantum Turing machine need not be a deterministic function of the input;
$M$ has access to quantum states and intermediate measurements,
from which it is possible to generate randomness which
might influence the time that machine takes to halt.

\begin{definition}
	Given a quantum catalytic Turing machine $M$, a fixed input $x \in \{0,1\}^n$,
	and an initial catalytic tape $\rho$, we denote by $T(M,x,\rho)$
	the distribution of runtimes of $M$ on input $x$ and initial catalytic tape $\rho$.
\end{definition}

For an averaging argument to hold, we need to have a quantum notion of non-overlapping configuration graphs. 

\begin{lemma}
\label{lem:orthogonality-average}
	Let $M$ be a quantum catalytic Turing machine, and let
	$\{\tau_i\}_i$ form an orthonormal basis for
	$D(\mathcal H_c)$. For all $i$ and $t$,
	let $\rho_{i,t}$ be the density matrix
	describing the state of the classical tape, quantum tape, and internal state
	of $M$ at time step $t$ on initial catalytic tape $\tau_i$.
	Then if $M$ is absolutely halting, all elements of the set
	$\{\rho_{i,t}\}_{i,t}$ are orthogonal.
\end{lemma}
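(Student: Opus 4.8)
The plan is to lift the classical configuration-graph argument of \cite{buhrman2014computing} --- distinct initial catalytic tapes produce vertex-disjoint trajectories, since these must terminate in distinct halting configurations --- to the quantum setting, replacing ``two trajectories share a configuration'' by ``two states are non-orthogonal'' and replacing the forward determinism of a classical machine by reverse-time reasoning through the channel implementing a single step, using contractivity of the trace distance (Lemma~\ref{lem:non-increasing trace}).

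Concretely, I would view $\rho_{i,t}$ as a density matrix on $\mathcal H_s \otimes \mathcal H_c$ (folding the classical tape, head positions, and control state into $\mathcal H_s$), and observe that a single step of $M$ --- reading the current classical configuration and applying the prescribed unitary, measurement, or classical update --- is a fixed CPTP map $\mathcal T$ independent of $i$ and $t$, so $\rho_{i,t'} = \mathcal T^{\,t'-t}(\rho_{i,t})$ for $t' \ge t$. Since $M$ is absolutely halting there is a uniform bound $T = T(x)$ on its running time over all catalytic initializations and all measurement branches, and by appending a step counter to the work tape and idling once $M$ would halt one may assume without loss of generality that $M$ runs for exactly $T$ steps --- this touches neither $\mathcal H_c$ nor the reset condition. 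Finally, being an orthonormal family of density matrices, the $\tau_i$ have pairwise orthogonal supports, so $\|\tau_i - \tau_j\|_1 = 2$ for $i \neq j$.

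Given this, fix $(i,t) \neq (j,t')$. In the case $t \neq t'$ the counter holds $t$ deterministically throughout $\rho_{i,t}$ and $t'$ throughout $\rho_{j,t'}$, so these states factor as $\sigma\otimes\kb{t}$ and $\sigma'\otimes\kb{t'}$ across the counter register and are orthogonal. In the case $t = t'$ (hence $i \neq j$), Definition~\ref{def:qcm} gives that at step $T$ the machine is in state $\rho_{i,T} = \eta_x \otimes \tau_i$, with $\eta_x$ (the worktape output) independent of the catalytic state and $\mathcal H_c$ reset exactly to $\tau_i$; therefore $\|\rho_{i,T} - \rho_{j,T}\|_1 = \|\tau_i - \tau_j\|_1 = 2$. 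Running this backwards via Lemma~\ref{lem:non-increasing trace} applied to the channel $\mathcal T^{\,T-t}$,
\[
    2 = \|\rho_{i,T} - \rho_{j,T}\|_1 = \bigl\| \mathcal T^{\,T-t}(\rho_{i,t}) - \mathcal T^{\,T-t}(\rho_{j,t}) \bigr\|_1 \le \|\rho_{i,t} - \rho_{j,t}\|_1 \le 2,
\]
so $\|\rho_{i,t} - \rho_{j,t}\|_1 = 2$, i.e.\ $\rho_{i,t}$ and $\rho_{j,t}$ have orthogonal supports. Together these two cases give pairwise orthogonality of all $\rho_{i,t}$.

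The step I expect to require the most care is the case of distinct times on a \emph{single} catalytic tape. Contractivity propagates orthogonality backwards from the halting (reset) state, which handles distinct tapes, but it yields nothing about $\rho_{i,t}$ versus $\rho_{i,t'}$: a state is trivially non-orthogonal to itself. Classically this costs nothing because a halting computation never revisits a configuration; quantumly $\rho_{i,t}$ is a genuine mixture over measurement branches that can overlap across time steps unless the configuration itself records the elapsed time, so clocking $M$ is essential, not cosmetic. For the same reason one must read ``absolutely halting'' as halting within a uniform step bound (over catalytic initializations \emph{and} measurement outcomes) rather than merely with probability $1$ --- a machine that flips a fair coin each step and halts on the first heads halts with probability $1$ but has $\rho_{i,t}$'s overlapping across times --- so a hypothesis of this strength is genuinely needed.
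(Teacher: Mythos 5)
Your handling of distinct initial tapes is essentially the paper's: the reset condition forces the final states $\eta_x\otimes\tau_i$ and $\eta_x\otimes\tau_j$ to be orthogonal, and contractivity (Lemma~\ref{lem:non-increasing trace}) propagates orthogonality backwards through the run. The gap is in the same-tape, different-times case. ``WLOG append a step counter and idle after halting'' is not without loss of generality here: the lemma is a statement about the density matrices $\rho_{i,t}$ of $M$ itself, and orthogonality of the clocked machine's states implies nothing about the unclocked ones, since marginals of orthogonal states need not be orthogonal --- $\sigma\otimes\kb{t}$ and $\sigma\otimes\kb{t'}$ are orthogonal while the factor $\sigma$ you actually care about is identical. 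What you prove is a statement about a different machine. Worse, the modification would defeat the only use of this lemma, the dimension count in Lemma~\ref{lem:average}: there the number of pairwise-orthogonal states is bounded by the dimension of $M$'s own configuration space, $2^{O(s+c)}$, and a clock register able to count to $T$ inflates that dimension by exactly the runtime one is trying to bound (note also that no bound on $T$ exists yet at this stage, so the counter cannot be assumed to fit in $O(s)$ work space). The across-time orthogonality of the clocked machine is pure bookkeeping and carries no information about $M$; with the clock removed, both the $i=j,\,t\neq t'$ case and the mixed $i\neq j,\,t\neq t'$ case are left unproven in your write-up.

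The paper argues this case directly on $M$: if $\rho_{i,t}$ and $\rho_{i,t'}$ overlap, then part of the state at the later time coincides with part of the state at the earlier time, i.e.\ there is a loop in the configuration graph; the weight trapped in that loop shrinks over time but never reaches zero, so some portion of the state never arrives at a halting configuration, contradicting the hypothesis that $M$ is absolutely halting. You are right to flag that this is the delicate step --- branches that halt early and then persist in their halting configuration would create exactly the kind of overlap you describe, which is presumably why you felt clocking was ``essential'' --- but the remedy has to be an argument on $M$'s own state space (for instance, reading $\rho_{i,t}$ as the not-yet-halted portion of the state, or arguing that halted weight cannot recur among the pre-halting configurations), not a change of machine. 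As it stands, the distinct-times half of the lemma, which is the half that does the real work in the averaging argument, is not established by your proposal.
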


\begin{proof}
	We first consider the states $\rho_{i,t}$ for a fixed $i$.
	Assume instead that there exists some times $t$ and $t'$ where the states are not orthogonal.
	This means that the state at time step $t$ can be written as a superposition between
	the state in time step $t'$ and the state $\rho_{i,t} = p \rho_{i,t'} + (1 - p) \eta$
	for some $p > 0$. This forms a loop in the configuration graph where part of the
	state is back at time step $t'$. The amplitude of the part of the state in this
	loop will shrink over time, but never go to zero. The part of the state that is
	stuck in the loop will never reach the halting state, therefore this is in
	contradiction with the assumption that the quantum Turing machine is absolutely halting.

	Next we consider the states $\rho_{i,t}$ for different $i$.
	By definition of a quantum Turing machine, the transformations $M$ can apply to
	the entire state of the machine is given by some quantum channel.
	By Lemma~\ref{lem:non-increasing trace} we know that the trace distance between the
	entire state of the machine for separate instances of the catalytic tape can only decrease
	by this quantum channel.
	Therefore we know that if two instances start out to be orthogonal and end to be orthogonal,
	they have to remain orthogonal through the entire calculation. 
\end{proof}

\begin{lemma}
\label{lem:average}
	Let $M$ be a quantum catalytic Turing machine with
	work space $s$ and catalytic space $c$,
	let $\{\rho_i\}_i$ form an orthonormal basis for
	$D(\mathcal H_c)$, and define $T_{max}(M,x,\rho)$ to be the maximum
	runtime of machine $M$ on input $x$ on starting catalytic tape $\rho$.
	Then
	$$\mathbb{E}_i[T_{max}(M,x,\rho_i)] \leq 2^{O(s)}$$
\end{lemma}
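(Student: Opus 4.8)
The plan is to mirror the classical argument of Buhrman et al.~\cite{buhrman2014computing}: since the configuration space of $M$ together with its catalytic tape is not too large, and since distinct basis initializations of the catalytic tape trace out \emph{mutually orthogonal} trajectories (this is precisely the content of Lemma~\ref{lem:orthogonality-average}), no single configuration can be visited by too many of these trajectories for too long. Averaging the running time over an orthonormal basis $\{\rho_i\}_i$ then forces the average to be polynomial in $2^s$.

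First I would count the total number of ``machine configurations'' available to $M$, excluding the catalytic tape: the internal control state, the head positions, and the contents of the classical work tape contribute a factor $2^{O(s)}$, and the quantum work tape contributes a $2^s$-dimensional Hilbert space. Crucially, by the requirement built into Definition~\ref{def:qcm} (and used in Lemma~\ref{lem:orthogonality-average}), the joint state of control/classical/quantum work registers at each time step, for catalytic initialization $\tau_i$, is a well-defined density matrix $\rho_{i,t}$ living in a fixed space of dimension $D = 2^{O(s)}$. Next I would invoke Lemma~\ref{lem:orthogonality-average}: the set $\{\rho_{i,t}\}_{i,t}$ is pairwise orthogonal. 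Since a space of dimension $D$ contains at most $D$ pairwise-orthogonal density matrices, we get $\sum_i |\{t : t < T_{max}(M,x,\rho_i)\}| \le D = 2^{O(s)}$ — that is, the \emph{total} number of (initialization, timestep) pairs over the whole orthonormal basis is bounded by $2^{O(s)}$. But the left-hand side is exactly $\sum_i T_{max}(M,x,\rho_i)$ (up to an additive $O(1)$ per basis element for the halting step). Dividing by the number of basis elements, which is $2^{\Theta(c)} \ge 1$, we obtain $\mathbb{E}_i[T_{max}(M,x,\rho_i)] \le 2^{O(s)}$ as claimed. (If one prefers not to lose the $2^{\Theta(c)}$ factor in the denominator, note the bound $\sum_i T_{max} \le 2^{O(s)}$ already gives the average bound directly.)

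A subtlety I would address carefully is what ``orthogonal density matrices'' buys us quantitatively: two orthogonal density matrices $\rho, \sigma$ satisfy $\mathrm{Tr}[\rho\sigma] = 0$, and any collection of such matrices on a $D$-dimensional space has size at most $D$, since their supports can be chosen to be mutually orthogonal subspaces (or more cleanly, they are linearly independent as $D\times D$ Hermitian matrices, of which there are at most $D^2$ — either bound suffices, changing only the constant hidden in $O(s)$). I would also note that the trajectories $\{\rho_{i,t}\}_t$ for a \emph{fixed} $i$ are themselves pairwise orthogonal by the first half of Lemma~\ref{lem:orthogonality-average}, so there is no double-counting issue within a single run, and the cross-$i$ orthogonality handles different runs.

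The main obstacle I anticipate is not the counting but making precise that the runtime distribution is well-defined and that ``$T_{max}$'' is finite for every $\rho_i$ in the basis — i.e. that $M$ is absolutely halting on every basis initialization, which is what licenses applying Lemma~\ref{lem:orthogonality-average}. This follows from the defining property of a quantum catalytic machine (it must halt and reset for \emph{every} catalytic state, in particular every basis state $\rho_i$), but one must be slightly careful that intermediate measurements, which can branch the computation, do not create a run with no finite maximum halting time; here the hypothesis that $M$ is catalytic — hence absolutely halting — on all of $D(\mathcal H_c)$ is exactly the needed input, and I would state this dependence explicitly rather than leave it implicit.
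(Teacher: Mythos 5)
There is a genuine gap, and it sits exactly where you placed the configuration count. You define $\rho_{i,t}$ as the state of the control/classical/quantum \emph{work} registers only, living in a space of dimension $2^{O(s)}$, and then invoke \Cref{lem:orthogonality-average} to conclude that all $\rho_{i,t}$ over all $i$ and $t$ are pairwise orthogonal there. But the orthogonality in \Cref{lem:orthogonality-average} is a statement about the \emph{entire} machine state, catalytic register included: for distinct $i$ it is inherited, via contractivity of trace distance (\Cref{lem:non-increasing trace}), from the fact that the initial global states differ only in the catalytic register, where the basis elements $\tau_i$ are orthogonal, and that the final global states are again orthogonal because the catalyst is restored. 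Once you trace out the catalytic tape, there is no reason the reduced work-register states for different initializations are orthogonal --- indeed \Cref{def:qcm} forces the final work-register state to be \emph{identical} for every catalytic state, the opposite of orthogonal. Even the fixed-$i$, varying-$t$ half of the argument needs the full state: two times at which the reduced work state coincides do not create a loop in the configuration graph if the catalytic contents differ, so no contradiction with absolute halting arises at the reduced level.

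As a symptom, your intermediate claim $\sum_i T_{max}(M,x,\rho_i) \le 2^{O(s)}$ cannot be true: the basis of $D(\mathcal H_c)$ has $2^{\Theta(c)}$ elements, each contributing runtime at least $1$, so the sum is at least $2^{\Theta(c)} \gg 2^{O(s)}$ when $c$ is polynomial and $s$ logarithmic. The correct accounting (as in the paper) counts orthogonal states in the full configuration space of dimension $2^{O(s)}\cdot 2^{c}$ (work Hilbert space, catalytic Hilbert space, classical work tape, internal state, head positions), so the orthogonal family $\{\rho_{i,t}\}_{i,t}$ has size at most on the order of the square of that dimension, i.e.\ $2^{O(s)}\cdot 2^{2c}$, and the division by the $2^{\Theta(c)}$ basis elements --- the step you marked as optional --- is precisely where the $c$-dependence cancels to yield the $2^{O(s)}$ bound on the \emph{average}. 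Your counting skeleton and your closing remarks (why orthogonal density matrices in dimension $D$ number at most $D$ or $D^2$, and why absolute halting on every basis state is needed) are fine; the fix is to keep the catalytic register inside $\rho_{i,t}$ and to treat the averaging over the basis as essential rather than dispensable.
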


\begin{proof}
	Our catalytic machine is defined by a $\SPACE[O(s)]$ machine,
	defined by a tape of length $O(s)$ and an internal machine of size $O(1)$,
	which acts on $\mathcal{H}_s$ and $\mathcal{H}_c$, which
	can be addressed into using $\log s$ and $\log c$ bits respectively.
	Since these quantities plus the Hilbert spaces $\mathcal{H}_s$ and $\mathcal{H}_c$
	define the dimensionality of our machine, by Lemma~\ref{lem:orthogonality-average} we have that
	$$\sum_{\rho \in \{\rho_i\}} T_{max}(M,x,\rho) \leq \mathcal O(2^{ 2(s + c + O(s) + O(1) + \log s + \log c)})$$
	  and therefore the lemma follows because $|\{\rho_i\}| \leq 2^{2c}$ and
	$2 (s + O(s) + O(1) + \log s + \log c )= O(s)$.
\end{proof}

\noindent
This already gives us a nice containment for our $\QCSPACE[s,c]$ classes.

\begin{corollary}
\label{thm:QCL_in_ZQP}
  $\QCLM \subseteq \mathsf{ZQP}$
\end{corollary}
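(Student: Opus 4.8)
The plan is to leverage Lemma~\ref{lem:average} — which bounds the *expected* maximum runtime over a random basis catalytic state — together with the catalytic-independence of the output guaranteed by Definition~\ref{def:qcm}, to build a zero-error quantum polynomial-time machine. First I would recall what $\mathsf{ZQP}$ requires: a quantum machine that always outputs either the correct answer or ``don't know'', never errs, and runs in expected polynomial time (equivalently, runs in worst-case polynomial time but is allowed to output ``don't know'' with probability at most, say, $1/2$). Since $\QCLM$ decides languages with probability $1$, the correctness half is automatic as long as we faithfully simulate the catalytic machine; the only issue is the runtime, because a $\QCLM$ machine has no *a priori* time bound and could in principle run for a very long time on adversarially chosen catalytic tapes.

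The key idea is that a $\mathsf{ZQP}$ machine does not get to see a worst-case catalytic tape — it may *sample* one. Concretely, on input $x$ with $|x| = n$, the simulator allocates $O(\log n)$ work qubits and $c = n^{O(1)}$ ``catalytic'' qubits, initializes the catalytic register to a uniformly random element $\rho_i$ of the fixed orthonormal basis $\{\rho_i\}_i$ of $D(\mathcal{H}_c)$ used in Lemma~\ref{lem:average} (this is just a random computational-basis state, or more precisely a random basis element, which can be prepared with $O(c)$ classical random bits), and runs the $\QCLM$ machine $M$ on input $x$ with this catalytic initialization for at most $\Theta(2^{s} \cdot \mathrm{poly}(n)) = \mathrm{poly}(n)$ steps. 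By Lemma~\ref{lem:average}, $\mathbb{E}_i[T_{\max}(M,x,\rho_i)] \le 2^{O(s)} = \mathrm{poly}(n)$, so by Markov's inequality the probability (over the choice of $i$) that $M$ fails to halt within, say, $4 \cdot 2^{O(s)}$ steps is at most $1/4$. If $M$ halts in time, the simulator outputs $M$'s answer, which is correct with certainty because $M$ decides the language with probability $1$ and — crucially — by the final clause of Definition~\ref{def:qcm} the output distribution of $M$ on the worktape is independent of the catalytic state, so using a random $\rho_i$ rather than a fixed one does not change the answer. If $M$ runs past the time bound, the simulator halts it and outputs ``don't know''. This gives a zero-error machine whose ``don't know'' probability is at most $1/4$ and whose worst-case running time is polynomial, which is exactly $\mathsf{ZQP}$.

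One subtlety I would address: Lemma~\ref{lem:average} is stated with $\{\rho_i\}_i$ ranging over an orthonormal basis of $D(\mathcal{H}_c)$ (of size $\le 2^{2c}$, i.e.\ the Hilbert–Schmidt-orthonormal matrix units $\ket{a}\bra{b}$), not over pure product states, so I should make sure the simulator is actually able to prepare — and the machine is actually able to reset — such ``states''. But by the equivalence theorem ($\QCC_{\mathsf{Density}} = \QCC_{\mathsf{Pure}} = \QCC_{\mathsf{PauliProd}}$) and the fact that the catalytic condition extends by linearity to the span, it suffices to feed $M$ random elements of a basis that *does* consist of genuine states, e.g.\ a random $\mathsf{PauliProd}$ string over $c$ qubits; the averaging bound of Lemma~\ref{lem:average} goes through for any orthonormal (w.r.t.\ Hilbert–Schmidt) basis, and one consisting of product states is both preparable with $O(c)$ random classical bits and exactly resettable by $M$. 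I expect this bookkeeping — reconciling the ``orthonormal basis of $D(\mathcal{H}_c)$'' used in the counting argument with an honestly preparable ensemble of catalytic states — to be the only real obstacle; the rest is a direct Markov-inequality argument on top of Lemma~\ref{lem:average}.
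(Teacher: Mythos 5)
Your argument is essentially the paper's intended one: the corollary is stated as an immediate consequence of Lemma~\ref{lem:average}, i.e.\ run $M$ on a uniformly random (preparable) basis catalytic state, invoke the average runtime bound plus Markov to truncate at polynomial time, and rely on the probability-$1$, catalytic-state-independent output of Definition~\ref{def:qcm} for zero error. Your extra care about replacing the Hilbert--Schmidt ``orthonormal basis of $D(\mathcal{H}_c)$'' by a basis of genuine product states is a reasonable tightening of a point the paper leaves informal, and does not change the route.
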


\begin{corollary}
\label{thm:BQCL_in_BQP}
  $\BQCLM \subseteq \mathsf{BQP}$
\end{corollary}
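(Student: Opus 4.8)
The plan is to piggyback directly on Lemma~\ref{lem:average}, which gives a polynomial average-case runtime bound (in fact a $2^{O(s)}$ bound, which for $s = O(\log n)$ is polynomial), and turn this average-case statement into a worst-case $\BQP$ algorithm by the standard catalytic-computing trick: simulate the catalytic machine on a \emph{fresh, randomly chosen} catalytic tape rather than the adversarial one, and use the resetting guarantee to argue that the output is unchanged. Concretely, given a language $L \in \BQCLM$ witnessed by a quantum catalytic Turing machine $M$ with work space $s = O(\log n)$ and catalytic space $c = \poly(n)$, I would have the $\BQP$ machine do the following on input $x$: initialize $c$ fresh qubits to $\ket{0^c}$ (or, equivalently, sample a uniformly random classical string $\tau \in \{0,1\}^c$ and write it down — both are valid catalytic initializations, and in fact by Lemma~\ref{lem:average} the \emph{expected} runtime over a Haar-random / uniformly-random-basis-element choice of $\rho_i$ is $2^{O(s)} = \poly(n)$), run $M(x)$ with this as its catalytic tape, and output whatever $M$ outputs.

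The key steps, in order, are: (1) Observe that by Definition~\ref{def:qcm} the output distribution of $M(x)$ is required to be \emph{independent} of the catalytic state $\rho$ — this is exactly the ``output state on the worktape is independent of the catalytic state'' clause, so whatever catalytic tape we feed in, the acceptance probability is the same as for the adversarial $\tau$, hence $\geq 2/3$ on yes-instances and $\leq 1/3$ on no-instances. (2) Bound the runtime: $\BQP$ needs a worst-case polynomial time bound, but Lemma~\ref{lem:average} only gives a polynomial \emph{average} over the orthonormal basis $\{\rho_i\}$. To fix this, run $M(x)$ for at most $t_0 := C \cdot 2^{O(s)} = \poly(n)$ steps for a suitably large constant $C$; by Markov's inequality, for a uniformly random basis element $\rho_i$ the machine halts within $t_0$ steps with probability $\geq 1 - 1/(3 \cdot \text{something})$, and if it has not halted we simply output a fixed default answer. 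This introduces an additional error of at most, say, $1/10$ over the random choice of catalytic tape. (3) Combine the two error sources — the inherent $1/3$ error of $M$ and the $1/10$ truncation error — and, if needed, amplify by independent repetition with fresh random catalytic tapes and majority vote, so that the total error drops below $1/3$ (indeed below any constant). Since $t_0 = \poly(n)$, sampling $\tau \in \{0,1\}^c$ costs $c = \poly(n)$ bits, and each repetition runs in polynomial time, the whole procedure is in $\BQP$. The argument for Corollary~\ref{thm:QCL_in_ZQP} is the same but cleaner: for $\QCLM$ the machine is exactly correct for \emph{every} catalytic tape, so once it halts the answer is certainly right; we keep trying fresh random catalytic tapes until the machine halts within $t_0$ steps (which happens with probability $\geq 2/3$ each attempt by Markov), giving a zero-error quantum algorithm with polynomial expected time, i.e. $\mathsf{ZQP}$.

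The main obstacle — and the only place that requires genuine care rather than bookkeeping — is the passage from the average-case bound of Lemma~\ref{lem:average} to a usable statement about a \emph{single} random catalytic initialization that the $\BQP$/$\mathsf{ZQP}$ machine can actually prepare. Lemma~\ref{lem:average} quantifies over an orthonormal basis $\{\rho_i\}$ of $D(\mathcal H_c)$; I need to check that ``sample a uniformly random computational-basis string $\tau$ and use $\kb{\tau}$ as the catalytic tape'' corresponds to sampling from such a basis (the computational basis states $\{\kb{\tau}\}_{\tau \in \{0,1\}^c}$ are indeed an orthonormal family, though not a full operator basis — one should instead invoke that the bound in the proof of Lemma~\ref{lem:average} counts \emph{configurations}, so the same counting applies to the $2^c$ classical catalytic strings), so that Markov's inequality can be applied to the uniform distribution over $\tau$. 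A secondary subtlety is that the runtime of a quantum Turing machine is itself a random variable (from intermediate measurements), so ``$T_{max}(M,x,\tau) \leq t_0$'' must be read correctly and the truncation at $t_0$ steps must be defined as a hard cutoff; but this is handled by the definition of $T_{max}$ already introduced in Lemma~\ref{lem:average}, and the rest is routine.
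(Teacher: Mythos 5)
Your proposal is correct and takes essentially the same route as the paper, which presents this corollary as an immediate consequence of Lemma~\ref{lem:average}: simulate $M$ on a uniformly random computational-basis catalytic tape, cut the run off at a polynomial time bound justified by Markov's inequality over the orthogonal initial tapes, and invoke the output-independence clause of Definition~\ref{def:qcm} to keep the acceptance probabilities intact (with routine amplification). The one slip is the parenthetical claim that initializing to $\ket{0^c}$ is ``equivalent'' to a random tape --- Lemma~\ref{lem:average} only bounds the average runtime, so a fixed basis state is not justified at this point (that requires the later Theorem~\ref{thm:catalytic_independent_time}) --- but since your actual argument uses the random tape, this does not affect the proof.
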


\subsection{Equal running times}

We now take a further leap, showing that the initial catalytic tape does not
affect the (distribution of the) runtime of our machine $M$ for a fixed input $x$.

We can first show that given $M$ and only one single copy of a state
$\eta \in \mathcal H_c$, this probability distribution can be approximated up
to arbitrary precision for any $x$.

\begin{lemma}
\label{lem:approx_T}
	Given catalytic Turing machine $M$ and a single copy of a quantum state
	$\eta \in \mathcal H_c$, $T(M, x, \eta)$ can be approximated up to arbitrary
	precision for any $x$.
\end{lemma}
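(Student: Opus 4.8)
The plan is to use the single available copy of $\eta$ as a source of \emph{arbitrarily many} independent, identically distributed samples of the random variable $T(M,x,\eta)$, and then to approximate the distribution from those samples by elementary statistics. The reason this is possible from one copy --- and the reason the lemma is not a triviality --- is the defining property of a quantum catalytic machine: after $M$ halts, the catalytic register is again in exactly the state $\eta$, and (as argued below) it is left \emph{uncorrelated} with everything else in the machine. So a single run of $M$ consumes no copies of $\eta$ whatsoever, and we may simply run $M$ over and over on the same physical register, resetting only the clean part of the machine between runs.

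The first step is to establish that successive runs are genuinely i.i.d. Purify $\eta$ by adjoining an untouched reference register $R$, so that the catalytic register $C$ is one half of a pure state $\ket{\eta}_{CR}$, and let $W'$ denote the work space together with the classical registers and the Turing machine's internal state. The exact resetting requirement of Definition~\ref{def:qcm} says that for \emph{every} density matrix $\rho$ on $\mathcal H_c$, tracing $W'$ out of the final state leaves $\rho$ on $C$; a channel that agrees with the identity on every density matrix \emph{is} the identity channel, so the channel induced on $C$ is $\mathcal I$ --- this is exactly the argument used in the proof that $\QCC_{\mathsf{EPR}} = \QCC_{\mathsf{Density}}$. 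Applying the full final channel (tensored with $\mathcal I_R$) to $\ket{0}_{W'}\otimes\ket{\eta}_{CR}$ and tracing out $W'$ therefore returns $\ket{\eta}_{CR}$, which being pure forces the post-run global state to be a product $\zeta_{W'}\otimes\ket{\eta}_{CR}$. In particular $C$ ends in state $\eta$, uncorrelated with $W'$ and hence with the number of steps that run took. Thus, if between runs we reset the work space, classical registers, and head positions to their starting configuration (all clean, and so freely under our control), each new run begins from exactly the initial global configuration with catalytic tape $\eta$, and its runtime is an independent draw from $T(M,x,\eta)$.

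The second step is routine: since $M$ is absolutely halting, each run terminates in finite time with probability $1$, so ``run $M$ to completion, record its runtime, reset the clean registers, repeat $N$ times'' yields i.i.d.\ samples $t_1,\dots,t_N\sim T(M,x,\eta)$ from the one physical copy of $\eta$; output the empirical distribution (equivalently the empirical CDF $\hat F_N$). Since $M$ halts with probability $1$, the true CDF $F$ satisfies $F(t)\to 1$, and by standard concentration of empirical measures (e.g.\ the Dvoretzky--Kiefer--Wolfowitz inequality) $\sup_t|\hat F_N(t)-F(t)|\le\epsilon$ with probability $\ge 1-\delta$ once $N=O(\epsilon^{-2}\log(1/\delta))$, so for any target precision we obtain a valid $\epsilon$-approximation of $T(M,x,\eta)$. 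The only delicate point in the whole argument is the independence claim of the first step: a priori, conditioning on a run's length could leave $C$ slightly perturbed, or classically correlated with that length, breaking the i.i.d.\ structure and potentially biasing later samples. This is precisely where \emph{exact} (rather than $\epsilon$-approximate) resetting is essential --- it is exactly the hypothesis that makes the induced channel on $C$ the identity, and the identity channel creates no correlation with any reference --- and it is the step I would take the most care to write out; everything downstream is standard.
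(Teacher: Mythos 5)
Your proposal is correct and takes essentially the same route as the paper: run $M$ repeatedly on the one physical copy of $\eta$, which the exact-reset property returns to the catalytic tape after every run, and estimate the runtime distribution empirically from the recorded halting times. The paper states exactly this repetition argument informally; your purification/identity-channel step establishing that successive runs are genuinely i.i.d.\ (and the explicit concentration bound) only makes rigorous what the paper leaves implicit.
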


\begin{proof}
	Because $M$ is a quantum catalytic Turing machine it has to reset the quantum state initialized in its catalytic tape perfectly. Therefore we can use the following approach: first fix some input $x$, then run the catalytic machine given $x$ as input and $\eta$ on its catalytic tape and record the running time. When the machine halts, $\eta$ should be returned in the catalytic tape. This means the test can be performed again given the same inputs. This test can be run arbitrarily often giving an arbitrary approximation to $T(M,x,\eta)$. 
\end{proof}

This gives us the following observation about states with different halting times:

\begin{lemma}
\label{lem:orthogonality}
	Let $M$ be a quantum catalytic Turing machine, and let $\rho_1,\rho_2 \in D(\mathcal H_c)$.
	Assume there exists $x\in \{0,1\}^n$ such that $T(M, x, \rho_1)\neq T(M, x, \rho_2)$.
	Then $||\rho_1 - \rho_2||_1 = 1$, where $||\cdot||_1$ is the trace distance.
\end{lemma}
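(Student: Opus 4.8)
The plan is to prove the contrapositive, or rather its natural reformulation: if two catalytic states $\rho_1, \rho_2$ are \emph{not} perfectly distinguishable, i.e.\ $\|\rho_1 - \rho_2\|_1 < 1$, then they must induce the same runtime distribution $T(M,x,\rho)$ for every input $x$. The key idea is to exploit the reusability guaranteed by the catalytic condition together with Lemma~\ref{lem:approx_T}: since a single copy of $\eta$ suffices to obtain an arbitrarily good estimate of $T(M,x,\eta)$ — because after the machine halts the state $\eta$ is restored and the experiment may be repeated — the runtime distribution is, in effect, a quantity that can be \emph{measured} from one copy of the catalytic state with arbitrary precision.

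First I would make precise the sense in which $T(M,x,\cdot)$ is ``measurable.'' Fix $x$ and consider the procedure: load $\eta$ into the catalytic tape, run $M(x)$, record the halting time $t$, and (since $\eta$ is restored) repeat. For a fixed time horizon $N$, the tuple of empirical frequencies of the events $\{\text{halt at step }t\}_{t \le N}$ together with $\{\text{halt after step }N\}$ can be read off from $M$ run on input $x$ with $\eta$ on its tape, using only that one physical copy of $\eta$. In operational terms, there is a (generalized) measurement / POVM-like procedure $\mathcal{E}_{x,N}$ whose outcome statistics on state $\eta$ reproduce the truncated distribution of $T(M,x,\eta)$ to within any desired accuracy as the number of repetitions grows.

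Next I would invoke the basic fact that the trace distance characterizes optimal distinguishability: for any two-outcome measurement $\{E, \mathbb{1}-E\}$, $|\Tr[E\rho_1] - \Tr[E\rho_2]| \le \|\rho_1 - \rho_2\|_1$, and more generally any measurement applied to $\rho_1$ versus $\rho_2$ produces outcome distributions at total variation distance at most $\|\rho_1 - \rho_2\|_1$ (this is just Lemma~\ref{lem:non-increasing trace} applied to the measurement channel, or the standard Holevo--Helstrom bound). Now suppose toward contradiction that $T(M,x,\rho_1) \neq T(M,x,\rho_2)$ for some $x$; then there is a time $t$ (or the ``never halts before $N$'' event) on which the two distributions differ by some fixed $\delta > 0$. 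Applying the estimation procedure $\mathcal{E}_{x,N}$ above to $\rho_1$ and to $\rho_2$ yields outcome statistics whose total variation distance can be made arbitrarily close to $\delta$ (by taking enough repetitions, the empirical estimate concentrates around the true value). But since $\mathcal{E}_{x,N}$ is a physical process, this total variation distance is at most $\|\rho_1 - \rho_2\|_1$, forcing $\|\rho_1 - \rho_2\|_1 \ge \delta$. Pushing the repetition count to infinity gives $\|\rho_1 - \rho_2\|_1 \ge \delta > 0$; and in fact, because the \emph{exact} distributions differ and the estimate is arbitrarily precise, the only way this is consistent with $\|\rho_1 - \rho_2\|_1 \le 1$ in every truncation is if $\|\rho_1 - \rho_2\|_1 = 1$. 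More carefully: I would argue that if $\|\rho_1-\rho_2\|_1 = 1 - \epsilon$ for some $\epsilon>0$, then no measurement can separate them by more than $1-\epsilon$, yet the runtime-estimation measurement separates them by $\delta$, so we need $\delta \le 1-\epsilon < 1$ — which does not yet give a contradiction. The resolution is that the catalytic reusability lets us repeat the \emph{whole} experiment on the \emph{same} copy, so the effective distinguishing measurement is $\mathcal{E}_{x,N}$ repeated $k$ times on a single copy, and its statistical power does not saturate at $\|\rho_1 - \rho_2\|_1$ unless the states are already perfectly distinguishable; for $\|\rho_1-\rho_2\|_1 < 1$ the states have overlapping support, and no deterministic classical postprocessing of finitely many outcomes drawn from a \emph{fixed single copy} can do better than a single optimal measurement — wait, this is exactly the subtlety.

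Here is where the real argument must live, and it is the step I expect to be the main obstacle: reconciling ``I can repeat the experiment arbitrarily often on one copy'' with ``one copy of $\rho_1$ and one copy of $\rho_2$ can only be told apart with advantage $\|\rho_1-\rho_2\|_1$.'' The clean way out is to observe that the runtime of the \emph{first} run is already a classical random variable read out from the machine, and that this readout is a fixed measurement channel $\mathcal{M}_x$ applied to $\rho$ (the catalytic condition guarantees $\mathcal{M}_x$ outputs the runtime and restores $\rho$, so $\mathcal{M}_x$ is a channel whose classical output marginal is $T(M,x,\cdot)$). Then $T(M,x,\rho_1) \neq T(M,x,\rho_2)$ means the classical distributions $\mathcal{M}_x(\rho_1)$ and $\mathcal{M}_x(\rho_2)$ differ, with total variation distance $d_{\mathrm{TV}} > 0$; by Lemma~\ref{lem:non-increasing trace}, $d_{\mathrm{TV}} \le \|\rho_1-\rho_2\|_1 \le 1$. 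This alone gives only $\|\rho_1-\rho_2\|_1 > 0$. To upgrade to $=1$ I would use the \emph{reusability more cleverly}: the map $\rho \mapsto (\text{runtime}_1, \rho) \mapsto (\text{runtime}_1, \text{runtime}_2, \rho) \mapsto \cdots$ is a valid channel producing, from one copy of $\rho$, an arbitrarily long i.i.d.\ sample from $T(M,x,\rho)$; its classical output distributions for $\rho_1$ vs.\ $\rho_2$ have total variation distance tending to $1$ as the sample length grows (distinguishing two distinct distributions from i.i.d.\ samples succeeds with probability $\to 1$). By Lemma~\ref{lem:non-increasing trace} every one of these total variation distances is $\le \|\rho_1 - \rho_2\|_1$, hence $\|\rho_1-\rho_2\|_1 \ge \sup_k (1 - o(1)) = 1$, so $\|\rho_1 - \rho_2\|_1 = 1$. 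I would double-check that the ``sample $k$ runtimes'' channel is genuinely CPTP — which is exactly the content of $M$ being a quantum catalytic Turing machine, since each run restores $\rho$ losslessly — and that the samples are genuinely i.i.d.\ given the restored state, which again follows from perfect resetting. That reconciliation of perfect-restoration with repeated sampling is the crux; everything else is bookkeeping with Lemma~\ref{lem:non-increasing trace}.
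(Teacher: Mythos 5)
Your proposal is correct and rests on exactly the same idea as the paper's proof: the catalytic resetting condition (via Lemma~\ref{lem:approx_T}) lets you estimate $T(M,x,\rho)$ to arbitrary precision from a \emph{single} copy, and combining this with the trace-distance bound on distinguishability forces $\|\rho_1-\rho_2\|_1=1$. The only difference is cosmetic: the paper invokes the Helstrom bound directly ("perfect discrimination with one copy implies trace distance $1$"), whereas you formalize the repeated runs as one CPTP sampling channel and use contractivity (Lemma~\ref{lem:non-increasing trace}) together with the fact that the total variation distance of $k$-fold i.i.d.\ samples from distinct distributions tends to $1$ — a slightly more explicit route to the same conclusion, which also cleanly resolves the single-copy-versus-repetition tension you flag.
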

\begin{proof}
	The Helstrom bound states that the optimal success probability of any state discrimination protocol given one copy of an unnown state is:
	\[
		P_{success} = \frac{1}{2} + \frac{1}{2} \cdot ||\rho_1 - \rho_2||_1 
	\]
	By Lemma~\ref{lem:approx_T}, we know that $T(M, x, \rho)$ can be approximated to any precision with only one copy of $\rho$. Given a copy of either $\rho_1$ or $\rho_2$ at random, one can estimate $T(M,x,\rho)$ and perfectly discriminate between the cases $\rho=\rho_1$ and $\rho=\rho_2$ giving a protocol with $P_{success} = 1$. Therefore it follows that
	$$\frac{1}{2} + \frac{1}{2}||\rho_1 - \rho_2||_1 = 1$$
	and hence $||\rho_1 - \rho_2||_1 = 1$.
\end{proof}
Lemma~\ref{lem:orthogonality} is sufficient to show that the halting time of a
quantum catalytic Turing machine is independent of the initial state in
the catalytic tape:

\begin{theorem}
\label{thm:catalytic_independent_time}
	Let $M$ be a quantum catalytic Turing machine with $s$-qubit work space and $c$-qubit catalytic space, and let $x \in \{0,1\}^n$.
	Then there exists some value $t := t(n)$ such that $T(M, x, \rho) = t$
	for all $\rho \in D(\mathcal H_c)$.
\end{theorem}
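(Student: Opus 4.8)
The plan is to derive the theorem almost immediately from Lemma~\ref{lem:orthogonality}, which already carries all the analytic weight: it says that if two initial catalytic states induce different runtime distributions on the same input, then they are perfectly distinguishable, i.e.\ $\|\rho_1-\rho_2\|_1=1$, and this happens precisely when $\rho_1$ and $\rho_2$ have orthogonal supports ($\rho_1\rho_2=0$). So the entire task reduces to ruling out a nontrivial partition of $D(\mathcal H_c)$ into ``runtime classes'' whose members would be forced to be pairwise orthogonal.

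First I would fix the input $x$ and single out the maximally mixed state $\rho_\ast := I/2^c$ as a reference point, setting $t := T(M,x,\rho_\ast)$. For an arbitrary $\rho \in D(\mathcal H_c)$, suppose toward a contradiction that $T(M,x,\rho) \neq T(M,x,\rho_\ast)$. Lemma~\ref{lem:orthogonality} then forces $\|\rho-\rho_\ast\|_1 = 1$, hence $\rho\,\rho_\ast = 0$. But $\rho_\ast$ has full support, so $\rho\,\rho_\ast = \rho/2^c$, which is nonzero for every density matrix $\rho$ --- a contradiction. Therefore $T(M,x,\rho) = t$ for all $\rho$, which is exactly the statement. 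A slightly different packaging, not privileging the maximally mixed state, is to take any two states $\rho_1 \neq \rho_2$ with differing runtimes, form the midpoint $\rho_m := \tfrac12\rho_1 + \tfrac12\rho_2$, note it must differ in runtime from at least one $\rho_i$ --- say $\rho_1$ --- and observe $\|\rho_m - \rho_1\|_1 = \tfrac12\|\rho_1-\rho_2\|_1 \le \tfrac12 < 1$, again contradicting Lemma~\ref{lem:orthogonality}; here it is convexity of $D(\mathcal H_c)$ that does the work.

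I expect the only genuine subtlety --- more a bookkeeping matter than a real obstacle --- to concern the meaning of the symbol $t$: a priori $T(M,x,\rho)$ is a \emph{distribution} over halting times, since intermediate measurements can randomize when $M$ stops, so what the argument above establishes directly is that this distribution is independent of $\rho$. To literally extract a single integer $t(n)$ one should first observe that $M$ may be assumed, without loss of generality, to always run for its worst-case number of steps (pad with identity operations up to $T_{\max}(M,x,\cdot)$, which is well defined and at most $2^{O(s)}$ by Lemma~\ref{lem:average}); the argument then applies verbatim to this now-deterministic runtime. Beyond this, nothing new is required: the heavy lifting --- Helstrom's bound combined with the fact that perfect resetting lets one re-run $M$ and sample $T(M,x,\rho)$ to arbitrary precision from a single copy of $\rho$ --- has already been carried out in Lemmas~\ref{lem:approx_T} and~\ref{lem:orthogonality}.
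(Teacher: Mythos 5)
Your proposal is correct and rests on the same key ingredient as the paper, namely Lemma~\ref{lem:orthogonality}: your ``alternative packaging'' via the midpoint $\rho_m=\tfrac12\rho_1+\tfrac12\rho_2$ is exactly the paper's proof, and your primary route through the maximally mixed state $I/2^c$ (full support, hence never orthogonal to any state) is only a cosmetic variant of the same convexity argument. Your remark about padding to the worst-case runtime so that $T(M,x,\rho)$ is a single value rather than a distribution is a reasonable extra care the paper glosses over, but it does not change the substance.
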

\begin{proof}
	Assume for contradiction that there exist $\rho_1, \rho_2$ such that
	$T(M, x, \rho_1)\neq T(M, x, \rho_2)$. By Lemma~\ref{lem:orthogonality}
	it holds that $||\rho_1 - \rho_2||_1 = 1$.
	Consider the state $\rho' = \frac{1}{2} \rho_1 + \frac{1}{2} \rho_2$, and
	note that only one of $T(M, x, \rho') = T(M, x, \rho_1)$ or
	$T(M, x, \rho') = T(M, x, \rho_2)$ can hold, by transitivity.
	Without loss of generality, let us assume $T(M, x, \rho')= T(M, x, \rho_2)$, thereby $T(M,x,\rho') \neq T(M,x,\rho_1)$
	and so $||\rho' - \rho_1||_1 = 1$ by Lemma~\ref{lem:orthogonality}.
	However, by definition we have that
	$$||\rho' - \rho_1||_1 = ||(\frac{1}{2} \rho_1 +\frac{1}{2} \rho_2) - \rho_1||_1 = \frac{1}{2}$$
	which is a contradiction.
\end{proof}

Putting \cref{lem:average} and \cref{thm:catalytic_independent_time} together
immediately shows that the runtime of $M$ is bounded by a polynomial in $n$
for every input $x$ and initial catalytic state $\rho$:
\begin{theorem}
\label{thm:QCL_polytime}
	Let $M$ be a quantum catalytic Turing machine with work space $s$ and
	catalytic space $c$.
	Then the maximum halting time is bounded by $2^{\mathcal O(s)}$.
\end{theorem}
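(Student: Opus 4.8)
The plan is to simply combine the two main results of this subsection, \cref{lem:average} and \cref{thm:catalytic_independent_time}. Fix an arbitrary input $x \in \{0,1\}^n$ and let $\{\rho_i\}_i$ be an orthonormal basis for $D(\mathcal H_c)$, as in the statement of \cref{lem:average}. By \cref{thm:catalytic_independent_time} there is a single value $t := t(n)$, not depending on the catalytic state, such that $T(M,x,\rho) = t$ for every $\rho \in D(\mathcal H_c)$. In particular the runtime distribution, and hence its maximum $T_{max}(M,x,\rho)$, is the same for every basis state $\rho_i$ and equal to $t$.

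Next I would observe that since $T_{max}(M,x,\rho_i) = t$ for all $i$, the average over the basis is trivially $\mathbb{E}_i[T_{max}(M,x,\rho_i)] = t$. Applying \cref{lem:average} to this average yields $t \leq 2^{O(s)}$. Since $x$ was arbitrary and $t$ depends only on $n = |x|$, the maximum halting time of $M$ over all inputs of length $n$ and all initial catalytic states is bounded by $2^{O(s)}$, which is the claim. In the logspace regime $s = O(\log n)$ this specialises to $\poly(n)$, the bound invoked in the introduction and used to define quantum catalytic circuits of length $2^{O(s)}$.

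Since this is a one-step corollary, there is no real obstacle remaining here: all the content lies in the already-established \cref{lem:average} (the averaging argument, which bounds the sum of worst-case runtimes by the dimension of the machine's configuration space via the orthogonality of non-overlapping branches) and \cref{thm:catalytic_independent_time} (the Helstrom-bound argument showing the runtime cannot depend on the catalytic state). The only points needing a little care are that ``the runtime distribution is independent of $\rho$'' immediately gives ``the maximum runtime $T_{max}$ is independent of $\rho$'', since $T_{max}$ is a functional of that distribution, and that $M$ is absolutely halting, so that $t < \infty$ and \cref{lem:average} can legitimately be invoked.
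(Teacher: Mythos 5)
Your proposal is correct and is essentially the paper's own argument: the paper proves this result exactly by combining \cref{lem:average} with \cref{thm:catalytic_independent_time}, noting that the catalytic-state-independent runtime $t$ must equal the average over an orthonormal basis of catalytic states and hence is at most $2^{O(s)}$. Your added remarks about $T_{max}$ being determined by the runtime distribution and about $M$ being absolutely halting are harmless clarifications, not deviations.
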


This strengthens Corollary~\ref{thm:QCL_in_ZQP} to
remove the randomness in the output probability; this is the
quantum equivalent of showing $\CL \in \P$, considered the
holy grail of open problems in classical catalytic computing:

\begin{corollary}
\label{thm:QCLM_in_EQP}
  $\QCLM \subseteq \mathsf{EQP}$
\end{corollary}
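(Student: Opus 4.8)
The plan is to observe that this corollary is now essentially immediate from Theorem~\ref{thm:QCL_polytime} together with the probability-$1$ correctness built into Definition~\ref{def:qcspace}. Recall from Corollary~\ref{thm:QCL_in_ZQP} that $\QCLM \subseteq \ZQP$; the only gap between $\ZQP$ and $\EQP$ is that a $\ZQP$ machine is merely guaranteed \emph{expected} polynomial running time, whereas an $\EQP$ machine must produce the correct answer, with certainty, within a \emph{worst-case} polynomial time bound. Theorem~\ref{thm:QCL_polytime} closes exactly this gap: it furnishes a fixed bound $t(n) = 2^{O(s)}$ on the halting time of the catalytic Turing machine that holds for \emph{every} input $x$ and \emph{every} initial catalytic state $\rho$, the latter fact ultimately coming from Lemma~\ref{lem:average} and Theorem~\ref{thm:catalytic_independent_time}.

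Concretely, let $L \in \QCLM$, witnessed by a quantum catalytic Turing machine $M$ with $s = O(\log n)$ work qubits and $c = \poly(n)$ catalytic qubits that decides $L$ with probability $1$. I would build an $\EQP$ machine $M'$ as follows. $M'$ has no catalytic tape, so it allocates $c$ fresh qubits initialized to $\ket{0^c}$ --- a legitimate element of the catalytic set $\mathcal{A} = D(\mathcal{H}_c)$ --- to play the role of the catalytic tape, together with the $O(\log n)$ work qubits, and then simulates $M$ for exactly $t(n) = 2^{O(s)} = \poly(n)$ steps, padding with the identity if $M$ halts earlier. By Theorem~\ref{thm:QCL_polytime}, $M$ is guaranteed to have halted within $t(n)$ steps, even accounting for the randomness in the halting time arising from intermediate measurements. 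Finally $M'$ outputs the measurement of $M$'s output qubit. Since $M$ decides $L$ with probability $1$, and since Definition~\ref{def:qcm} requires the output of the machine to be independent of the initial catalytic state --- so that starting the ``catalytic'' register in $\ket{0^c}$ yields the same output as any other initial state --- the machine $M'$ outputs the correct bit with certainty in time $\poly(n)$. Hence $L \in \EQP$, where $\EQP$ is taken over the same infinite universal gateset used to define $\QCL$ (recall from the remark after Definition of $\EQP$ that $\EQP$ is gateset-dependent).

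I do not expect a genuine obstacle here: the substantive work has already been done in Lemma~\ref{lem:average}, Theorem~\ref{thm:catalytic_independent_time}, and Theorem~\ref{thm:QCL_polytime}, and what remains is the bookkeeping of hard-coding the catalytic tape to $\ket{0^c}$ and padding the running time to a deterministic bound. The one point that requires minor care is that the halting time of $M$ may itself be a random variable; this is handled by running for the worst-case bound $T_{\max}$ rather than an expected or typical value, which is precisely the quantity Theorem~\ref{thm:QCL_polytime} controls.
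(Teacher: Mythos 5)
Your proposal is correct and follows essentially the same route as the paper, which treats this corollary as an immediate consequence of Theorem~\ref{thm:QCL_polytime} (the worst-case $2^{O(s)} = \poly(n)$ runtime bound holding for every input and every initial catalytic state) combined with the probability-$1$ correctness and catalytic-state-independence of the output built into Definitions~\ref{def:qcm} and~\ref{def:qcspace}. Your added bookkeeping---fixing the catalytic register to $\ket{0^c}$, padding to the worst-case time bound, and noting the gateset dependence of $\EQP$---is consistent with the paper's intent and introduces no gap.
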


\subsection{Turing machines and circuits}

We finally prove Theorem~\ref{thm:cqtm_cqckt} and show the
equivalence of our two definitions of quantum catalytic machines.
To do this, we observe, without proof, that
Theorem~\ref{thm:catalytic_independent_time} extends to any
\textit{classical observable feature} of the initial catalytic state by the same proof.
We will apply this to one other aspect, namely the transition applied at
a given timestep $t$:

\begin{lemma}
\label{thm:catalytic_independent_transitions}
	Let $M$ be a quantum catalytic Turing machine, and let $x \in \{0,1\}^n$. Then
	for every time $t$, there exists a fixed operation $g$ applied by $M$
	at time $t$ for every $\rho \in \mathcal H_c$.
\end{lemma}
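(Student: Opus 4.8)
The plan is to mimic the proof of Theorem~\ref{thm:catalytic_independent_time} almost verbatim, replacing the ``classical observable feature'' \emph{halting time} by the feature \emph{which operation is applied at step $t$}. First I would fix an input $x$ and a timestep $t$, and define, for a catalytic state $\rho$, the quantity $g(M,x,\rho,t)$ to be the operation (gate, measurement, SWAP, head move, etc.) that $M$ applies at time $t$ when started with $\rho$ on the catalytic tape. By Theorem~\ref{thm:catalytic_independent_time} the runtime $t(n)$ is already independent of $\rho$, so for $t \le t(n)$ this operation is always well-defined (the machine has not yet halted), and for $t > t(n)$ there is nothing to prove. The goal is to show $g(M,x,\rho,t)$ does not depend on $\rho$.

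The key step is an analogue of Lemma~\ref{lem:approx_T}: given a single copy of $\eta \in \mathcal H_c$, one can determine $g(M,x,\eta,t)$ exactly (not merely approximately) with arbitrary confidence. The reason is the same as in Lemma~\ref{lem:approx_T}: since $M$ is catalytic it returns $\eta$ perfectly to the catalytic tape, so we may run $M$ on $x$ with $\eta$ on the catalytic tape, observe which operation is applied at step $t$, and—because $\eta$ is restored—repeat the experiment as often as we like. If $g(M,x,\eta,t)$ were itself a random variable depending on intermediate measurement outcomes, the repeated runs give its distribution to arbitrary precision; but what we really need is only that this distribution is a classical observable feature of $\eta$ recoverable from one copy. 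Then the Helstrom-bound argument of Lemma~\ref{lem:orthogonality} applies: if $\rho_1,\rho_2$ give different distributions over the step-$t$ operation, they can be perfectly discriminated from one copy, hence $\|\rho_1-\rho_2\|_1 = 1$. Finally the convexity trick of Theorem~\ref{thm:catalytic_independent_time}—consider $\rho' = \tfrac12\rho_1 + \tfrac12\rho_2$, note $g(M,x,\rho',t)$ must ``agree'' with one of $g(M,x,\rho_1,t)$ or $g(M,x,\rho_2,t)$, deduce a trace distance of $1$ from the disagreeing one, and contradict $\|\rho'-\rho_i\|_1 = \tfrac12$—forces the operation to be independent of $\rho$.

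One subtlety I would be careful about is the notion of ``agreement'' used in the convexity step. When the step-$t$ operation can depend on intermediate measurement outcomes, $g(M,x,\rho,t)$ is a distribution rather than a fixed operation, and linearity of the underlying channels guarantees that the distribution arising from $\rho' = \tfrac12\rho_1+\tfrac12\rho_2$ is exactly $\tfrac12$ of that from $\rho_1$ plus $\tfrac12$ of that from $\rho_2$; so if the two distributions differ, $\rho'$'s distribution differs from at least one of them, which is all the argument needs. The statement of the lemma—``there exists a fixed operation $g$ applied by $M$ at time $t$''—should therefore be read as: the (distribution over the) step-$t$ transition is independent of $\rho$, and this independence is exactly what is needed in the Turing-machine-to-circuit translation, since it lets the circuit-generating machine write down the step-$t$ gate without knowing the catalytic state.

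The main obstacle is not any single calculation but rather pinning down precisely what ``classical observable feature'' means and checking that the step-$t$ operation qualifies: one must be sure that the act of observing which gate is applied at step $t$ does not itself disturb the catalytic register in a way that prevents the perfect restoration that Lemma~\ref{lem:approx_T}-style repetition relies on. Since the machine's choice of classical operation is, by the definition of a quantum Turing machine, recorded in the classical control and driven only by previously measured bits (the quantum register ``does not affect the classical part of the machine in any way, except in that the qubits\ldots can be measured''), reading off the step-$t$ transition is just reading classical control state, which is non-destructive; and whatever measurements were performed at steps $\le t$ are part of $M$'s own operation, which is catalytic by hypothesis. Hence the restoration property survives and the induction goes through exactly as for halting time. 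This is why the excerpt is comfortable asserting the result ``without proof, \ldots by the same proof.''
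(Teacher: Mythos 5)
Your proposal is correct and is exactly the route the paper intends: the paper states this lemma without proof, noting only that the argument of Theorem~\ref{thm:catalytic_independent_time} extends verbatim to any classical observable feature of the catalytic state, which is precisely your adaptation of Lemma~\ref{lem:approx_T}, the Helstrom-bound argument of Lemma~\ref{lem:orthogonality}, and the convexity contradiction to the step-$t$ transition. Your added care about the step-$t$ operation being a distribution (and using linearity of the channel to justify the convexity step) is a reasonable tightening of the same argument, not a departure from it.
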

\noindent
This is sufficient to prove Theorem~\ref{thm:cqtm_cqckt}:

\begin{proof}[Proof of Theorem~\ref{thm:cqtm_cqckt}]
    We only prove the equivalence between $\QCSPACEC$ and $\QCSPACEM$;
    the same proof applies to $\BQCSPACEC$ and $\BQCSPACEM$.
    Certainly $\QCSPACEC[s,c]$ is contained in $\QCSPACEM[O(s),O(c)]$, since
    $\QCSPACEC$ circuits are $\SPACE[O(s)]$ uniform and
    can be directly simulated by a $\QCSPACEM$ machine. 

    Conversely, given a $\QCSPACEM[s,c]$ machine $M$, we wish to find an equivalent quantum catalytic circuit in $\QCSPACEC[O(s),O(c)]$. For this, we transform the transition function of the quantum Turing machine into a quantum channel; since the transition only takes a finite number of (qu)bits as input, this can be always be done, and we have our transitions act on the same space $\mathcal{H}_s \otimes \mathcal{H}_c$ as $M$. Then, by using a method similar to that from the proof of Lemma \ref{lemma:cl_obliv_no_halt}, to make the machine oblivious, the tape head movement of the quantum Turing machine will be fixed. If our circuit is the transition function channel copied to all locations where the tape heads end up, we completely simulate the quantum Turing machine. We know that $T_{max}(M, x, \rho)$ is always at most $2^{O(s)}$ for a machine $M$ by \Cref{thm:QCL_polytime}, and so the number of such transition function channels is also at most $2^{O(s)}$. Therefore, we can simulate $M$ using a quantum circuit of length $2^{O(s)}$ as claimed.
\end{proof}

As an afterword, we also resolve one other aspect of our initial definition of
quantum catalytic space, namely the requirement that the output state be the
same for every initial catalytic state. As mentioned above, Lemma~\ref{lem:approx_T}
extends to all classically observable characteristics, but a similar argument
clearly holds for approximating the output state as well:

\begin{lemma}
\label{lem:approx_out}
	Given catalytic Turing machine $M$ and a single copy of a quantum state
	$\eta \in \mathcal H_c$, the output qubit $\ket{\phi_{out}}$ can be approximated
    up to arbitrary precision for any $x$.
\end{lemma}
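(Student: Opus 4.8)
The plan is to mirror the argument used for Lemma~\ref{lem:approx_T}, but now tracking the output qubit of the machine rather than its runtime. The key observation is once again that a quantum catalytic Turing machine $M$ is guaranteed to restore the state $\eta$ exactly onto its catalytic tape when it halts. This is what makes a single copy of $\eta$ reusable: after one run of $M(x)$ with $\eta$ initialized on the catalytic tape, we can read off (and store, or tally) the measured output qubit, and we are left with $\eta$ back on the catalytic tape, ready for another run.

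First I would fix the input $x$ and run $M(x)$ with $\eta$ on the catalytic tape, measuring the designated output qubit at the end of the computation; record the outcome. By Theorem~\ref{thm:QCL_polytime} (or its Turing-machine form, Theorem~\ref{thm:catalytic_independent_time}) this halts in a fixed, bounded time $t(n)$, independent of the catalytic state, so the procedure terminates and the catalytic tape is returned in state $\eta$. Next I would repeat this run $N$ times in succession, each time reusing the restored copy of $\eta$; the empirical frequency of outcome $1$ over the $N$ runs converges, by a standard Chernoff/Hoeffding bound, to $\Pr[\text{output} = 1]$ up to any desired additive precision $\varepsilon$ with failure probability exponentially small in $N$. (If one wants to approximate the full output density matrix $\ket{\phi_{\mathrm{out}}}\bra{\phi_{\mathrm{out}}}$ rather than just the Boolean output probability, one instead performs tomography on the single output qubit, again reusing $\eta$ across the polynomially many measurement settings and repetitions needed for state tomography of a one-qubit state.) Since each individual run only ever uses the one copy of $\eta$ that is handed to us — consumed at the start of a run and regenerated by the catalytic guarantee by the end — the whole estimation procedure consumes only a single copy of $\eta$ overall.

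There is essentially no serious obstacle here: the entire content is the reuse trick already exploited in Lemma~\ref{lem:approx_T}, combined with the fact that the machine's output state, like its runtime, is produced by running the (catalyticity-preserving) machine to completion. The only mild subtlety worth a sentence is that the output qubit, unlike the runtime, is genuinely quantum, so a single measurement gives only one classical bit of information; this is handled by the standard fact that $O(1/\varepsilon^2)$ repetitions suffice to estimate a single-qubit observable to precision $\varepsilon$, and all those repetitions are affordable because $\eta$ is restored each time. This lemma then justifies the final clause of Definition~\ref{def:qcm}: were two initial catalytic states $\rho_1, \rho_2$ to yield distinguishable output states, the same Helstrom-bound argument as in Lemma~\ref{lem:orthogonality} would force $\|\rho_1 - \rho_2\|_1 = 1$, and the averaging argument of Theorem~\ref{thm:catalytic_independent_time} then rules this out, so the output is in fact independent of $\rho$.
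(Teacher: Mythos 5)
Your proposal is correct and is exactly the argument the paper intends: the paper gives no explicit proof, merely noting that ``a similar argument'' to Lemma~\ref{lem:approx_T} applies, and your write-up spells out precisely that reuse trick --- run $M(x)$ repeatedly on the single copy of $\eta$ (restored exactly after each run), measure the output qubit each time, and use repetition/tomography to estimate it to any precision. Your closing remark about combining this with the Helstrom-bound argument to force output-independence from $\rho$ also matches the role the paper assigns to this lemma.
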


Thus again we can appeal to the instistinguishability of nearby catalytic states
to claim that $\ket{\phi_{out}}$ must be equal for all inital $\ket{\tau}$.

\section{Simulation of $\TCone$}

\newcommand{\pluseq}{\mathrel{+}=}

In this section we show that $\QCL$ can simulate Boolean threshold circuits.
As in the classical world, the ability to simulate $\TCone$ is also a reason
to believe that catalytic logspace is strictly more powerful than logspace.
This follows from the fact that $\QL = \PL$~\cite{watrous1998space}, which
is itself contained in $\TCone$:

\begin{lemma}
    $\QL \subseteq \TCone$
\end{lemma}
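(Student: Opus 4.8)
The plan is to chain together two well-known facts: first, Watrous's characterization $\QL = \PL$, where $\PL$ denotes \emph{probabilistic logspace} (the class of languages decided by logspace probabilistic Turing machines with unbounded two-sided error, equivalently whose acceptance probability comparison to $1/2$ decides membership); and second, the classical containment $\PL \subseteq \TCone$. Since the first equality is quoted directly from \cite{watrous1998space}, the real content is the second containment, so that is what I would spell out.

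The key steps, in order, would be as follows. First, invoke $\QL = \PL$ to reduce to showing $\PL \subseteq \TCone$. Second, recall that a language in $\PL$ is decided by a logspace probabilistic machine whose configuration graph has polynomially many configurations; the acceptance probability on input $x$ is a ratio of path counts (or, after the standard trick, is exactly $1/2 \pm 2^{-\poly(n)}$ away from the threshold), and these quantities are entries of powers of the (polynomially-sized, polynomially-bit-bounded) stochastic transition matrix $M_x$. Third, observe that computing $M_x^{T}$ for $T = 2^{\poly(n)}$ steps can be done by repeated squaring, i.e.\ $\log T = \poly(n)$ matrix multiplications, each of which is a $\TCone$ operation on integers of $\poly(n)$ bits (iterated addition and multiplication are in $\TC^0$, and composing polynomially many such layers stays within $\TCone$ since the depth is logarithmic in the matrix dimension times a constant per multiplication — more carefully, one uses that $\TCone$ is closed under composing $\poly$-many $\TC^0$-computable matrix products arranged in a balanced binary tree of logarithmic depth). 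Finally, extracting the relevant entry and comparing it to the threshold $1/2$ is a single $\TC^0$ comparison, so the whole computation lies in $\TCone$.

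The step I expect to be the main obstacle — or at least the one requiring the most care — is the bookkeeping that iterated matrix powering of a $\poly(n) \times \poly(n)$ stochastic matrix with $\poly(n)$-bit entries sits inside $\TCone$ rather than some larger class. One has to be careful that the bit-length of the entries stays polynomially bounded throughout the repeated-squaring process (it grows, but only polynomially, since we square $\poly(n)$ times and each squaring at most doubles the bit-length plus a $\log(\dim)$ additive term), and that the uniformity of the resulting threshold circuit family is logspace-uniform, matching the standard definition of $\TCone$. An alternative and cleaner route for this step is to cite the known characterization that $\PL$ equals the class of languages logspace-reducible to computing the sign of an integer determinant (or, equivalently, Boolean matrix powering), together with the fact that these linear-algebraic problems are in $\TCone$; this sidesteps the hand-rolled repeated-squaring analysis entirely. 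Either way, no genuinely new idea is needed — the lemma is a repackaging of classical space-vs-circuit facts — so the write-up should be short, primarily a pointer to \cite{watrous1998space} for $\QL = \PL$ and to the standard literature for $\PL \subseteq \TCone$.
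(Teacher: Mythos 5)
Your overall route is exactly the paper's: the paper proves this lemma purely by citation, writing $\QL = \PL$ (Watrous) and then invoking the known containment $\PL \subseteq \TCone$, so no new idea is expected. The issue is in the one step you chose to spell out. Your analysis of $\PL$ allows the machine to run for $T = 2^{\poly(n)}$ steps and then claims $M_x^T$ is computable in $\TCone$ by repeated squaring. Repeated squaring is inherently \emph{sequential}: the $\poly(n)$ squarings form a chain of $\TC^0$ layers of total depth $\poly(n)$, not $O(\log n)$, and the ``balanced binary tree'' remark does not rescue it --- a balanced tree computes an iterated product of polynomially many \emph{given} factors, whereas writing $M_x^{2^{\poly(n)}}$ as such a product would need exponentially many leaves. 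Your bit-length bookkeeping breaks for the same reason: if each squaring can double the entry length, then $\poly(n)$ successive squarings yield entries of length $2^{\poly(n)}$, not $\poly(n)$.

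The missing ingredient is the standard normalization of $\PL$: by Jung's theorem, unbounded-error probabilistic logspace with arbitrary (exponential) running time equals unbounded-error probabilistic logspace with a \emph{polynomial} time bound. Once $T = \poly(n)$, your argument goes through as you intended: $M_x^T$ is an iterated product of $T$ copies of a $\poly(n)\times\poly(n)$ matrix with $\poly(n)$-bit entries, arranged in a balanced tree of depth $O(\log n)$ whose nodes are $\TC^0$ matrix multiplications (integer multiplication and iterated addition being in $\TC^0$), entry sizes stay polynomial, and the final threshold comparison is $\TC^0$. Alternatively, the clean route you yourself mention --- citing the characterization of $\PL$ via the sign of an integer determinant (equivalently, polynomial-power integer matrix powering) together with the known $\TCone$ upper bound for those problems --- avoids the issue entirely and matches the spirit of the paper, which treats $\PL \subseteq \TCone$ as a black-box fact.
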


Since $\TCone$ can compute powerful functions such as determinant,
this containment is largely believed to be strict. Thus \Cref{thm:tcone}
gives us a candidate class of problems for separating $\QL$ from $\QCL$.

\subsection{Reversibility and obliviousness}

In \cite{buhrman2014computing} the authors
showed that $\TCone$ can be simulated by \textit{transparent register programs}, which
themselves are computable in $\CL$; thus our goal is to extend the $\CL$ simulation of
transparent programs to $\QCL$. More broadly, we show that \textit{reversible, oblivious,
time-bounded} $\CL$ is enough to simulate transparent programs, and such a model
is structured enough that, while we cannot show that all of $\CL$ is in $\QCL$,
we can at least prove the containment for this small fragment.

We first make the following definitions which we use for our simulations. 
We begin by recalling a result of Dulek~\cite{Dulek15} which shows
that catalytic Turing machines can be made \textit{reversible}
(see c.f.~\cite{CookLiMertzPyne25} for a proof)
\begin{theorem}
    \label{theorem:reversible_CTM}
    For every catalytic machine $M$ with space $s$ and catalytic space $c$, there exist catalytic machines $M_{\rightarrow}$, $M_{\leftarrow}$ with space $s + 1$ and catalytic space $c$ such that for any pair of configurations $(\tau_1,v_1)$, $(\tau_2,v_2)$ of $M_{\rightarrow}$ and $M_{\leftarrow}$, if $M_{\rightarrow}$ transitions from $(\tau_1,v_1)$ to $(\tau_2,v_2)$ on input $x$, then $M_{\leftarrow}$ transitions from $(\tau_2,v_2)$ to $(\tau_1,v_1)$ on input $x$.
\end{theorem}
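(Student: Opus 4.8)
The plan is to realize $M_\rightarrow$ as a \emph{reversible} Turing machine that faithfully simulates $M$ step for step: concretely, a machine whose single-step transition function $\delta_\rightarrow$, viewed as a partial map on configurations $(\tau,v)$ (with $\tau$ the catalytic contents and $v$ the work tape, internal state, and head positions), is \emph{injective}. Once we have such an $M_\rightarrow$, the machine $M_\leftarrow$ is simply the one that applies $\delta_\rightarrow^{-1}$, and the claimed transition relationship is immediate. The reason one cannot just take $M_\rightarrow = M$ (and $M_\leftarrow$ = ``run $M$ backwards'') is that $\delta$ is generally far from injective — a transition can overwrite a work- or catalytic-tape cell or forget the prior state, so many configurations collapse to one. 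The true predecessor of a configuration $c$ along the actual computation path is the unique predecessor of $c$ \emph{that is reachable from the start configuration}, but reachability from the start is a global property that cannot be tested within our space budget, so the naive reversal is not locally well defined.

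The main tool is the Lange--McKenzie--Tapp reversible simulation, which shows $\mathsf{SPACE}[s]$ computations can be carried out by machines with injective transition functions at essentially no space cost. The mechanism: fix the input $x$, so the configuration graph of $M$ has out-degree at most $1$; anchor attention at the (canonical, $\tau$-dependent) halting configuration $c_h = (\tau, v_{\mathrm{halt}})$ and consider the tree $\mathcal T$ of configurations that eventually reach $c_h$, with edges pointing toward $c_h$ under $\delta$ and with the children of a node being its $\delta$-predecessors. Since each transition of $M$ is $O(1)$-local, every configuration has only $O(|\Gamma|\cdot|Q|)=O(1)$ syntactically possible predecessors, so $\mathcal T$ has constant branching. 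LMT's key point is that one can traverse $\mathcal T$ in Euler-tour order \emph{reversibly and statelessly}: the current configuration, together with an $O(1)$-bit token recording whether the last move was a descent or an ascent and the index of the child last visited, determines the next move. The crucial fit with the catalytic setting is that $M_\rightarrow$, initialized with catalytic contents exactly $\tau$ and a clean work tape, can set itself up at the root $c_h$ — whose catalytic half is precisely the given $\tau$ — and then run this reversible traversal, which in particular passes through the real computation path $c_0 \leadsto c_h$ and lets $M_\rightarrow$ read off $M$'s output.

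It then remains to verify the two catalytic-specific points. For \emph{catalyticity}: $M_\rightarrow$ touches the catalytic tape only through faithful simulation of $M$'s moves (together with a reserved block of the catalytic tape used as auxiliary scratch for the traversal's bookkeeping, which the reversible traversal restores before halting); hence an initial catalytic string $\tau$ that $M$ returns unchanged is returned unchanged by $M_\rightarrow$ as well, so $M_\rightarrow$ is again a catalytic machine, and so is $M_\leftarrow$, the reverse traversal, which between the matching endpoints resets the tape for the same reason. For the \emph{space bound}: the current configuration of $M$ is already representable within $M$'s own $s$ work cells and $c$ catalytic cells (with the catalytic contents residing on the catalytic tape itself and the head-position counters inside the $s$ cells), so the only genuinely new data is the direction-and-index token; encoding that token with a single fresh work cell and offloading the remainder onto the restored catalytic block yields work space $s+1$ and catalytic space $c$.

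The step I expect to be the real obstacle is this last quantitative one: compressing the LMT bookkeeping down to a single extra work-tape cell rather than the $O(1)$ (or, in a careless implementation, $O(s)$) cells one would use by default, and — more delicately — arranging that the portion of the catalytic tape borrowed for that bookkeeping, along with all head-position counters, is \emph{exactly} restored at the end, so that the catalytic condition continues to hold on the nose. Making the descent/ascent-plus-child token live almost entirely, and reversibly, on the catalytic tape is the subtle part; everything else is the standard LMT analysis. Full details are in~\cite{CookLiMertzPyne25}.
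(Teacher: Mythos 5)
There is a genuine gap, and it is not the one you flagged. First, the construction is not well defined as written: $M_{\rightarrow}$ cannot ``set itself up at the root $c_h=(\tau,v_{\mathrm{halt}})$,'' because the halting configuration encodes precisely the information (the output, and in general the whole run) that the machine is supposed to compute; the Lange--McKenzie--Tapp traversal has to start from the initial configuration $c_0$ and tour the tree that happens to contain it, discovering the root along the way. Second, and more substantively, your opening claim that $M_{\rightarrow}$ ``faithfully simulates $M$ step for step'' is false for the construction you then give: an Euler tour walks the \emph{entire} tree of configurations whose forward orbit merges into $c_h$, and every lossy overwrite of a catalytic cell multiplies the number of such ancestors, so this tree generically has size $2^{\Theta(c)}$ and consists almost entirely of configurations unreachable from any legal initial configuration. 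Hence the running time of your $M_{\rightarrow}$ bears no relation to that of $M$. The bare statement of \cref{theorem:reversible_CTM} does not mention time, but the construction the paper imports from \cite{Dulek15,CookLiMertzPyne25} is a local, step-by-step reversibilization whose computation path mirrors that of $M$; this is exactly what is used downstream (\cref{lemma:cl_obliv_no_halt,lemma:cl_polynomial_halt} need the reversible machine to ``pass through the same intermediate states'' and to inherit the polynomial \emph{expected} runtime of \cite{buhrman2014computing}). An LMT-style tour, even if patched, proves a qualitatively weaker statement that would not support those applications.

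Finally, the part you identified as the main obstacle is a non-issue, and your proposed fix is circular. The Euler-tour bookkeeping (ascent/descent flag plus a child index over the $O(1)$ syntactic predecessors) is constantly many bits and can live in the finite control, so no extra tape cell beyond the output bit is needed; there is no reason to borrow a ``reserved block of the catalytic tape.'' Conversely, asserting that such a borrowed block ``is restored before halting'' presupposes exactly the exact-restoration property the theorem is meant to establish, and the catalyticity and well-definedness of $M_{\leftarrow}$ at the boundary configurations (where the tour begins and where $M_{\rightarrow}$ halts) are likewise asserted rather than argued. Since you also defer ``full details'' to \cite{CookLiMertzPyne25}, what remains is a sketch of a different, time-unfaithful construction with an ill-defined starting step, not a proof of the cited theorem.
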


We will also need to consider \textit{oblivious} machines, i.e. ones
where the tape head movement is solely a function of the input length $|x|$
and does not depend at all on the content of the catalytic tape c.
While any Turing machine can be made oblivious, it requires relaxing the
definition of obliviousness to not forcing the machine to halt at the
same time on every input; we simply require that every machine that continues
to run carries out its execution in an oblivious manner. We will bar
this restriction in this section.

\begin{definition}
    We say that a $\CL$ machine is \textit{totally oblivious} if
    the following holds. Let $t,q,h$ be special registers on the free work
    tape, all initialized to 0, representing the time, state,
    and tape heads of the machine.
    At each point in time our machine consist of one \textit{mega-step}:
    for every setting of $t,q,h$ there is a fixed transformation,
    computable in logspace,
    which the machine applies to the catalytic tape and to $q,h$, and a mega-step consists
    of applying each of these operations, conditioned on the values of $t,q,h$
    on the free work tape, in order. At the end of every mega-step we increment $t$,
    and our machine halts iff $t$ reaches a predetermined step $T$.
\end{definition}

Totally oblivious machines are ones that in essence apply the same bundle
of transformations at every time step, with the information about
which one to to actually apply being written on the free work tape,
and the halting behavior being determined only by the clock.

Such machines are clearly in poly-time bounded $\CL$
(see c.f.~\cite{CookLiMertzPyne25} for a discussion of this class),
since the clock must fit on the free work tape. This causes issues when
we seek total obliviousness in tandem with reversibility;
in general it is not known, and is highly unlikely, that a polynomially
time-bounded Turing machine can be made reversible while remaining
polynomially time-bounded.

However, there is an important class of algorithms which is
both reversible and totally oblivious: \textit{clean register programs}.
For our purposes we will use a very restricted version of clean register
programs (see c.f. \cite{Mertz23} for a discussion).

\begin{definition}
    A \textit{register program} $\mathcal{P}$ is a list of instructions
    $P_1 \ldots P_t$ where each $P_i$ either has the form $R_j \pluseq x_k$
    for some input variable $x_k$ or has the form $R_j \pluseq q_i(R_1 \ldots R_m)$
    for some polynomial $q_i$. A register program
    \textit{cleanly computes} a value $v$ if for any
    initial values $\tau_1 \ldots \tau_m$, the net result
    of running $\mathcal{P}$ on the registers $R_1 \ldots R_m$,
    where each $R_j$ is initialized to the value $\tau_j$,
    is that $R_1 = \tau_1 + v$ and $R_j = \tau_j$ for all $j \neq 1$.
\end{definition}

If we think of these registers as being written on the catalytic tape,
it is clear that clean register programs are totally oblivious,
as the instruction at every moment in time is based only on the
timestep. This is nearly immediate, although we note a few minor complications here.
We need to preprocess the catalytic tape to ensure our registers
have values over the same ring as our register program; for example,
if we represent numbers mod $p$ using $\lceil \log p \rceil$ bits,
some initial values will exceed $p$. This can be handled obliviously by observing that
for either $\tau$ or $\overline{\tau}$, half the registers are already correct,
and so we take one full pass over $\tau$ to keep a count of which case we are in,
store this as a bit $b$ (1 iff we need to flip $\tau$),
and XOR $\tau$ with $b$ at the beginning and end of the computation.
We subsequently ignore all blocks which are initialized
to improper values; when we go to operate on register $R_j$, say, as we
obliviously pass over the whole catalytic tape we will count how many \textit{valid}
registers we have seen, and act only when we see the counter reach $j$.

Besides being totally oblivious, however, such programs are also \textit{reversible},
as every step of the form $R_j \pluseq c$ can be inverted by a step of the form
$R_j -= c$. Thus such programs appear highly constrained in terms of what they can
and cannot achieve. Nevertheless, such programs are sufficient to compute $\TCone$.

\begin{lemma}[\cite{buhrman2014computing}] \label{lem:register-program}
    Let $L$ be a language in $\TCone$. Then $L$ can be decided by
    a clean register program, and, hence, by a totally oblivious
    reversible $\CL$ machine.
\end{lemma}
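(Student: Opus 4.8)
The plan is to invoke the known $\TCone$-to-register-program reduction from \cite{buhrman2014computing} in a black-box fashion, and then verify that the resulting program fits into the restrictive model of a \emph{totally oblivious reversible} $\CL$ machine as set up above. The first step is to recall the structure of the \cite{buhrman2014computing} construction: given a log-depth threshold circuit $C$ for $L$, one processes $C$ gate by gate in a topological order and, for each gate $g$, produces a sequence of register instructions that cleanly adds the value computed at $g$ (over a suitable ring $\mathbb{Z}_m$, with $m$ chosen large enough that threshold gates can be read off by comparing a sum against a hard-coded threshold) into a fresh-looking register, using the recursively computed values at the inputs of $g$. Because threshold (and hence $\mathsf{AND}$, $\mathsf{OR}$, $\mathsf{NOT}$, and majority) gates can be expressed using bounded-degree polynomials in the catalytic registers once the inputs are available as clean increments, each such block has the clean-computation form $R_j \pluseq q_i(R_1 \ldots R_m)$ required by Definition of a register program, and the whole program has length $t = \poly(n)$ — this is exactly where the log-depth hypothesis is used, since the recursive unfolding of the register program blows up only polynomially for depth $O(\log n)$.

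The second step is to observe that the instruction issued at each timestep depends only on the index of the step within the program, which itself is a fixed function of $n = |x|$ (the circuit $C$, and hence the program $\mathcal{P}$, is uniform), and can be generated in logspace. Hence the program is totally oblivious in the sense defined above: the ``mega-step'' at time $t$ reads $t$ off the free work tape, computes in logspace which register $R_j$ to touch and which polynomial $q_i$ (or which input variable $x_k$) to use, and performs a single oblivious pass over the catalytic tape to apply $R_j \pluseq q_i(\cdot)$, halting exactly when the clock reaches the predetermined length $T = t$. The preprocessing needed to make the catalytic tape's blocks represent valid ring elements (the $\tau$-versus-$\overline\tau$ trick, the valid-register counting) is carried out obliviously as already described in the paragraph preceding the lemma, so it does not break total obliviousness. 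Reversibility is immediate from the clean-computation property: each instruction $R_j \pluseq c$ is undone by $R_j \mathrel{-}= c$, and since clean computation guarantees the registers are restored to $\tau_1 + v, \tau_2, \ldots, \tau_m$ regardless of the initial $\tau$, running the program forward and then the reversed program (equivalently, reading the instruction list in reverse and negating each increment) restores the catalytic tape exactly, so the $\CL$ resetting condition is met; to read off the answer we simply look at whether the final value in $R_1 - \tau_1$ encodes acceptance before undoing.

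I expect the main obstacle — or rather, the only real content beyond bookkeeping — to be arguing that the polynomial-length bound survives the combination of three constraints simultaneously: cleanness of each gate gadget, total obliviousness (the clock must fit in logspace, forcing $T = \poly(n)$), and reversibility. In general a reversible polynomially time-bounded computation need not stay polynomially time-bounded, as noted just before the lemma, so one cannot simply appeal to a generic reversibility transformation; instead the point is that the specific \cite{buhrman2014computing} construction is \emph{natively} reversible (each step is an additive register update) and \emph{natively} oblivious (the step schedule is fixed), so no lossy transformation is applied and the $\poly(n)$ length is inherited directly from the log-depth circuit. Everything else — the logspace uniformity of the instruction generator, the ring-arithmetic details, the valid-register handling — is routine and can be cited to \cite{buhrman2014computing} and the discussion above.
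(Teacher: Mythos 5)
Your proposal is correct and takes essentially the same route as the paper: the $\TCone$-to-clean-register-program reduction is invoked as a black box from \cite{buhrman2014computing}, and the remaining content is exactly the observation (made in the paper's discussion preceding the lemma) that clean register programs are natively totally oblivious---with the ring-element preprocessing handled by oblivious passes over the catalytic tape---and natively reversible, since each instruction $R_j \mathrel{+}= c$ is undone by $R_j \mathrel{-}= c$, so no generic (and potentially superpolynomial) reversibility transformation is needed and the polynomial length is inherited directly. Your closing point, that the combination of cleanness, obliviousness, and reversibility survives precisely because the construction is additive and has a fixed schedule rather than being post-processed, is the same justification the paper relies on.
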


\subsection{Simulation by $\QCL$ machines}
We now show that reversibility plus total obliviousness is
sufficient for simulation by $\QCL$.

\begin{lemma}
\label{lem:oblivousCL_in_QCL}
    Let $L$ be a language which can be computed be a totally
    oblivious reversible $\CL$ machine.
    Then $L \subseteq \QCL$.
\end{lemma}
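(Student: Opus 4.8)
The plan is to simulate the classical machine directly by a quantum catalytic circuit (equivalently, by \Cref{thm:cqtm_cqckt}, a quantum catalytic Turing machine), exploiting the fact that a \emph{reversible} classical computation acts on the catalytic register as a permutation unitary and therefore preserves an arbitrary superposed or mixed catalytic state coherently, rather than dephasing it. Let $M$ be the totally oblivious reversible $\CL$ machine for $L$, with work space $s = O(\log n)$ and catalytic space $c = \poly(n)$, running for $T$ mega-steps; here $T = 2^{O(s)} = \poly(n)$ since the clock $t$ must fit on the $O(s)$-bit free work tape. First I would set up the quantum catalytic machine: the catalytic register $\mathcal H_c$ holds the quantum analogue of $M$'s catalytic tape, and the work register of size $O(s)$ holds the clock/state/head registers $t,q,h$, the valid-register bookkeeping, and $O(s)$ reusable scratch qubits.

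Next I would implement one mega-step of $M$ as a fixed unitary. Each transformation in $M$'s bundle is a reversible map acting on a chunk of the catalytic tape together with $q,h$, and both it and its inverse are logspace-computable (the latter via the reverse machine of \Cref{theorem:reversible_CTM}); hence it is a permutation of basis states and can be realized \emph{exactly} as a permutation unitary using $O(s)$ scratch qubits restored to $\ket{0}$ by compute--copy--uncompute. Conditioning each such unitary on the current values of $t,q,h$ as the definition of totally oblivious prescribes, and composing over all settings, turns one mega-step into a fixed unitary $U_t$ on the joint register. Because $M$ is totally oblivious, the description of each $U_t$ depends only on $x$ (and $|x|$), so the family $\{U_t\}_{t<T}$ is $\SPACE[O(s)]$-uniform and the circuit $U := U_{T-1}\cdots U_0$ has length $2^{O(s)}$, meeting \Cref{def:qcm_ckt}.

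I would then run $U$ on $\ket{0}\bra{0}_{\mathrm{work}} \otimes \rho$ for an arbitrary catalytic state $\rho$. Since $M$ is a genuine $\CL$ machine, on every basis state $\ket{\tau}$ of its catalytic tape it halts with $\tau$ restored and the decision bit $b(x)$ in a designated output cell; for the register-program machine of \Cref{lem:register-program} all other work cells are also returned to $0$, so $U$ maps $\ket{0,\tau}\mapsto \ket{w_f}\otimes\ket{\tau}$ with $w_f$ independent of $\tau$, and by linearity $U(\ket{0}\bra{0}\otimes\rho)U^\dagger = \ket{w_f}\bra{w_f}\otimes\rho$ --- the catalytic register is returned to $\rho$ exactly, with no residual entanglement. (If one does not want to assume this clean form, append after $U$ a CNOT copying the $\tau$-independent output bit to a fresh qubit --- which creates no entanglement --- and then run $M$ backward via \Cref{theorem:reversible_CTM}, which likewise restores the work register to $\ket{0}$ and the catalytic register to $\rho$.) Measuring the output qubit yields $b(x)$ with probability $1$, so $M$ is simulated in $\QCSPACEC[O(s),O(c)]$ and hence $L\in\QCL$.

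The main obstacle is exactly the exact-reset guarantee for the catalytic register when it is in superposition: reversibility of $M$ is what makes the simulation a permutation unitary (so the off-diagonal coherences of $\rho$ survive rather than being measured away), and one must additionally ensure that no $\tau$-dependent garbage is left on the work tape, since such garbage would remain entangled with the catalytic register and destroy the catalytic property --- this is why the clean form of $M$, or the explicit backward run, is needed. Obliviousness and the polynomial time bound then play the supporting role of letting the fixed operation sequence be written as a uniform circuit of the right length; without obliviousness the sequence of operations would depend on the (superposed) catalytic content and could not be hard-wired, and without the time bound we could not bound the circuit size by $2^{O(s)}$.
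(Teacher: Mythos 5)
Your proposal is correct and follows essentially the same route as the paper: treat the catalytic register as a (super)mixture of computational basis states and simulate the fixed, reversible, oblivious transition bundle by a fixed permutation-unitary (Toffoli-style) circuit of length $2^{O(s)}$ built by the logspace controller, with linearity giving the exact reset. Your additional care about $\tau$-dependent work-tape garbage (via compute--copy--uncompute or the backward machine of \Cref{theorem:reversible_CTM}) is a worthwhile fleshing-out of a point the paper's proof treats implicitly, but it is not a different argument.
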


\begin{proof}
    Let $M$ be a totally oblivious reversible $\CL$ machine.
    We will treat our quantum catalytic tape as a superposition
    over classical catalytic tapes, i.e. a superposition over
    computational basis states. It is thus sufficient to show
    that the operation of machine $M$ can be simulated by a
    fixed quantum circuit containing Toffoli gates, as such
    a circuit will correctly operate on each of our catalytic
    basis states in each branch of the superposition.

    By total obliviousness, every step that $M$ takes is
    a fixed transformation conditioned on the value of $t$, $q$,
    and $h$; since we additionally know that such a step is
    reversible, it must be isomorphic to a Toffoli gate applied
    to a fixed position of the catalytic tape conditioned on
    some fixed mask applied to $t$, $q$, and $h$, and furthermore
    each transformation can be computed by our logspace
    controlling machine.
    Since these operations are fixed for each
    timestep, we can move $t$ to our space controlling machine
    and have it construct a circuit, comprised of Toffoli gates
    on $q$, $h$, and the catalytic tape, of polynomial length.
\end{proof}

This is sufficient to prove our main result for this section:
\begin{proof}[Proof of \Cref{thm:tcone}]
    Combine \Cref{lem:register-program} with \Cref{lem:oblivousCL_in_QCL}.
\end{proof}

\section{Simulating catalytic space in $\DQCOne$}
Lastly we will discuss the relationship between catalytic computing
and a pre-existing yet closely related quantum model, namely the
one clean qubit setting.
We will introduce the model and then prove that it can simulate unitary $\QCL$.
Finally we will show that classical $\CL$ is also contained in the one clean qubit model.

\subsection{One clean qubit model}
In the one-clean qubit model, first introduced by Knill and Laflamme~\cite{Knill_1998}, a quantum machine is given a single input qubit initialized in the zero state and $n$ qubits initialized in the maximally mixed state.
We will formalize the definition of this computational model:

\begin{definition}[One clean qubit]
Let $\{Q_x\}_x$ be a log-space uniform family of unitary quantum circuits.
The \emph{one clean qubit model} is a model of computation in which $Q_x$
is applied to the $n+1$-qubit input state
\[
    \rho = \ketbra{0}\otimes \frac{I_n}{2^n},
\]
where $n = |x|$ and $I_n$ operator is the identity on $n$ qubits.
After execution of $Q_x$ the first qubit is measured, giving output probabilities:
\begin{align*}
    p_0 &= 2^{-n}\Tr[(\ketbra{0}\otimes I)Q_x(\ketbra{0}\otimes I)Q_x^\dagger ],\\
    p_1 &= 1 - p_0
\end{align*}
\end{definition}
\begin{remark}
Two points stand out in this definition.
First, note that $Q_x$ are unitary circuits, and hence do not allow
intermediate measurements; such measurements would allow for resetting the
qubits initialized in the maximally mixed state, making the model significantly stronger.
Second, in this paper we consider log-space uniform families of unitary circuits,
rather than the more common deterministic polynomial-time uniform families,
in order to align more closely with the $\QCL$ model that we study. 
\end{remark}

The one-clean qubit model is a probabilistic model of computation, and hence we typically
talk about computing a function $f(x)$ in terms of success probability
for computing $f(x)$ being bounded away from $1/2$. The exact bound on the error probability does
not matter; while we often use $2/3$ in defining e.g. $\BQP$, even a $1/\poly(n)$
gap is sufficient as there we can employ standard error-correction to boost
our success, namely by running the algorithm multiple times. However, this
is not known to be possible in the one-clean qubit model, as such a machine
can only reliably run once.

\begin{definition}[\cite{Knill_1998,shepherd2006computationunitariespurequbit}]
$\DQCOne$ is the set of all languages $L = (L_{yes}, L_{no})\subset \{0,1\}^* \times \{0,1\}^*$ for which there exists a one-clean qubit machine $M$ and a polynomial $q(n)$ that on input $x \in L$ of length $n = |x|$,
\begin{itemize}
    \item if $x \in L_{yes}$ then the output probability $p_1 \geq \frac{1}{2} + \frac{1}{q(n)}$
    \item if $x \in L_{no}$ then the output probability $p_0 \geq \frac{1}{2} + \frac{1}{q(n)}$
\end{itemize}
\end{definition}

On the other hand, somewhat surprisingly the one-clean qubit model is robust to
the number of clean qubits allowed, up to a logarithmic number:

\begin{lemma}[\cite{shor2008estimating}]\label{lem:dqck}
$\mathsf{DQC}_k = \mathsf{DQC}_1$ for $k = \mO(log(n))$,
where $\mathsf{DQC}_k$ means having access to $k$ clean qubits instead of one.
\end{lemma}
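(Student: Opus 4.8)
The plan is to prove the two inclusions separately, using the equivalent \emph{trace‑estimation} (Hadamard test) view of $\DQCOne$: a one‑clean‑qubit machine whose unitary circuit is $V$ on $m$ qubits produces clean‑qubit outcome‑$0$ probability $\tfrac12 + \tfrac12\,\mathrm{Re}\!\left(2^{-m}\Tr[V]\right)$, and any log‑space‑uniform poly‑size $V$ on polynomially many qubits is admissible in this form. With this in hand, the inclusion $\DQCOne \subseteq \mathsf{DQC}_k$ is immediate: a $\mathsf{DQC}_k$ machine runs the given $\DQCOne$ circuit on one of its clean qubits together with its maximally mixed register and leaves the remaining $k-1$ clean qubits untouched, so these are returned to $\ket{0}$ trivially and do not affect the measured qubit.

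The content is $\mathsf{DQC}_k \subseteq \DQCOne$ for $k = \mO(\log n)$. Fix a $\mathsf{DQC}_k$ machine with log‑space‑uniform circuit $Q := Q_x$ on $n$ qubits, the first $k$ clean and the remaining $n-k$ maximally mixed, and let $Z$ denote Pauli‑$Z$ on the output qubit. Writing the accept projector as $\tfrac12(I-Z)$ and using $\Tr[\ketbra{0^k}\otimes I_{n-k}] = 2^{n-k}$, a short computation gives
\[
    p_1 - \tfrac12 \;=\; -\,\frac{1}{2^{\,n-k+1}}\,\Tr\!\Big[Z\, Q\,\big(\ketbra{0^k}\otimes I_{n-k}\big)\,Q^\dagger\Big].
\]
The key step is to eliminate the non‑unitary projector: substituting the reflection $R := 2\ketbra{0^k} - I$ on the first $k$ qubits, i.e. $\ketbra{0^k} = \tfrac12(R+I)$, the term coming from $I$ equals $\tfrac12\Tr[Z] = 0$, leaving
\[
    p_1 - \tfrac12 \;=\; -\,2^{\,k-2}\cdot\frac{1}{2^{\,n}}\,\Tr[V], \qquad V \,:=\, Z\, Q\,\big(R\otimes I_{n-k}\big)\,Q^\dagger .
\]
(Equivalently, expand $\ketbra{0^k} = 2^{-k}\sum_{P\in\{I,Z\}^{\otimes k}}P$; each of the $2^k$ terms yields a poly‑size circuit $Z\,Q\,(P\otimes I_{n-k})\,Q^\dagger$ whose normalized trace $\DQCOne$ can read off, and $R$ merely packages these $2^k$ contributions into a single circuit.) Since $R$ is a multiply‑controlled phase on $k = \mO(\log n)$ qubits it has a $\poly(n)$‑size, log‑space‑uniform circuit, and $Q^\dagger$ is obtained from $Q$ by reversing and inverting its gate list; hence $V$ is a log‑space‑uniform poly‑size circuit on $n$ qubits, and $2^{-n}\Tr[V] = -2^{\,2-k}(p_1-\tfrac12)$ is real (it is a real multiple of a probability difference).

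To finish, run the $\DQCOne$ Hadamard test on $V$ (one clean qubit plus polynomially many maximally mixed qubits) and take outcome $1$ as ``accept''; its acceptance probability is then $\tfrac12 + 2^{\,1-k}(p_1 - \tfrac12)$, which has the same sign as $p_1 - \tfrac12$. Thus if the $\mathsf{DQC}_k$ machine has bias $|p_1 - \tfrac12| \ge 1/q(n)$, the constructed $\DQCOne$ machine has bias $2^{\,1-k}/q(n)$, and this is still $1/\poly(n)$ precisely because $k = \mO(\log n)$, so the two machines accept exactly the same inputs. The main obstacle — and the reason $k$ must be only logarithmic — is that $\DQCOne$ has a single clean qubit and a single final measurement and so admits no amplification by repetition: the reduction must be carried out inside one circuit and must natively produce a $1/\poly$ gap, which the reflection identity delivers at the unavoidable cost of a $2^{-k}$ factor in the bias (for $k=n$ this factor is $2^{-n}$, correctly making the $\mathsf{DQC}_n$ gap invisible to $\DQCOne$).
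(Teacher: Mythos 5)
Your proof is correct, and it is worth noting that the paper itself gives no argument for \cref{lem:dqck} at all --- it is imported from \cite{shor2008estimating} --- so your derivation is a genuinely self-contained replacement rather than a retracing. Your route packages the whole $\mathsf{DQC}_k$ computation into a \emph{single} unitary: write $p_1-\tfrac12$ as a normalized trace against $Z$, substitute $\kb{0^k}=\tfrac12(R+I)$ with the reflection $R=2\kb{0^k}-I$, kill the $I$-term via $\Tr[Z]=0$, and run one Hadamard test on $V=ZQ(R\otimes I_{n-k})Q^\dagger$, accepting a bias loss of $2^{1-k}$, which is still $1/\poly(n)$ precisely when $k=\mO(\log n)$. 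The argument usually given for this equivalence instead expands $\kb{0^k}=2^{-k}\sum_{P\in\{I,Z\}^{\otimes k}}P$ and estimates each of the $2^k$ normalized unitary traces by repeated trace-estimation experiments, recombining them classically; that version presupposes a model in which the one-clean-qubit device may be run polynomially many times with classical postprocessing. Under the definition used in this paper --- a single logspace-uniform unitary circuit, a single terminal measurement, acceptance decided by a $1/\poly$ bias, with the paper explicitly remarking that such a machine ``can only reliably run once'' --- your single-circuit formulation is exactly what is needed, so if anything your proof is better matched to the definitions in force here than the standard repetition-based one.

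Two implementation details are worth stating explicitly, though neither is a gap: (i) $R$ equals an $X^{\otimes k}$-conjugated multiply-controlled $Z$ only up to a global sign, and inside controlled-$V$ a global phase of $V$ becomes a relative phase, so either synthesize the sign exactly (a single phase gate on the control qubit suffices) or swap which measurement outcome you call ``accept''; (ii) you need controlled versions of the gates of $Q$, $Q^\dagger$, and $R$ in a logspace-uniform way over the chosen gateset, which is unproblematic since only a $1/\poly(n)$ bias must survive (so Solovay--Kitaev-level approximation is enough), and under the paper's infinite-gateset convention exact synthesis is available anyway.
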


\subsection{Containment of unitary $\QCL$ in $\DQCOne$}
We now move on to establishing a formal connection between $\QCL$ and $\DQCOne$.
A $\mathsf{QCL}$ machine is allowed to apply intermediate measurements
to its quantum tape as well as its catalytic tape, which is not possible
in $\DQCOne$; however, if we restrict the $\mathsf{QCL}$ machine
to not make any intermediate measurements we can show that such a machine
can in fact be simulated by the one-clean qubit model.

\begin{definition}[$\mathsf{Q_{U}CL}$]
A $\mathsf{Q_{U}CL}$ machine is a $\QCL$ machine in which the quantum circuit is unitary.
In the final step of the unitary the $\mathsf{Q_{U}CL}$ machine measures the first qubit,
which then gives the outcome of the calculation.
Similarly we define $\mathsf{BQ_{U}CL}$ to be $\BQCL$ with the unitary restriction.
\end{definition}

Using this definition we can give the following proof of containment:
\begin{proof}[Proof of \Cref{thm:bqucl_in_dqc1}]
Let $C$ be a log-space uniform $\mathsf{BQ_{U}CL}$ quantum channel.
Since $C$ is unitary up until the last measurement step, it preserves all possible
density matrices from the catalytic tape, and in particular it preserves
the maximally mixed state $I_n$. Let $U$ be the unitary part of $C$.
The action of $U$ on the work-tape and the catalytic tape,
with the catalytic tape initialized in $I_n$, is:
\[
    U \ket{0}\bra{0}_w \otimes \frac{I_n}{2^n} U^\dagger = (\sqrt{p_0}\ket{0}\bra{0}_{w_0}\ket{\psi_0}\bra{\psi_0}_w + \sqrt{p_1} \ket{1}\bra{1}_{w_0}\ket{\psi_1}\bra{\psi_1}_w) \otimes \frac{I_n}{2^n}
\]
with $|p_1| \geq 2/3$ in a 'yes' instance and $|p_0|\geq 2/3$ in a 'no' instance.
Note that this calculation is of the exact form of a $\log(n)$-clean qubit machine
and that the output probabilities are a constant bounded away from $1/2$;
hence this problem is in $\mathsf{DQC_k}$, and by Lemma~\ref{lem:dqck}
is therefore in $\mathsf{DQC_1}$ 
\end{proof}

\subsection{Containment of $\CL$ in $\DQCOne$}

We aim to show that $\CL \subseteq \DQCOne$.
The idea is that $\CL$, as per \Cref{theorem:reversible_CTM}, can always be made reversible.
While as discussed before we cannot maintain reversibility and total obliviousness,
a $\CL$ machine can also always be made `almost oblivious' while maintaining reversibility;
the tape head movements are independent of the input, but the machine does not know when
to halt. Instead, after any given amount of time, we know that the machine has halted on
a fraction $1/\poly(n)$ of possible initial catalytic states.
Since the $\DQCOne$ model can be interpreted as sampling from a uniform distribution of computational basis states, this shows the probability of finding the correct output is $1/2 + 1/\poly(n)$, which is sufficient for the proof.

\begin{definition}
    A \emph{non-halting reversible oblivious} catalytic Turing machine is a reversible oblivious catalytic Turing machine that need not halt absolutely. In particular, for every input $x$ and initial catalytic state $c$ there exists a time $t(x, c)$ where the correct output has been written to the output tape and the catalytic tape has been reset to its initial state. In addition, the output state has an additional binary cell that indicates whether or not the output has been determined yet, or is still `unknown' by the machine.
\end{definition}

\begin{definition}
    We say a reversible oblivious catalytic Turing machine \emph{halts with polynomial success probability} if there exists polynomials $p, q$ such that for any valid input $x$ to a promise problem, after time $p(|x|)$ the output tape of the catalytic Turing machine contains the correct output to the problem on a fraction of at least $1 / q(|x|)$ when the initial catalytic tapes are taken uniformily at random. After time $p(|x|)$, the output tape of the catalytic Turing machine never contains the wrong answer, but it may leave the output undetermined.
\end{definition}

We show that any $\CL$ machine can be transformed into a reversible oblivious catalytic Turing machine that halts with polynomial success probability. 

\begin{lemma}
    \label{lemma:cl_obliv_no_halt}
    Any catalytic Turing machine $M$ that has a logarithmic clean space and polynomial size catalytic tape can be turned into a non-halting oblivious reversible catalytic Turing machine $M^o$ with a logarithmic clean tape and polynomial catalytic tape.
\end{lemma}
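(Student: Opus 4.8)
The plan is to transform $M$ in three stages: first make it reversible, then make its tape-head motion oblivious without destroying reversibility, and finally absorb the fact that different initial catalytic tapes make $M$ halt at different times by letting the machine idle harmlessly once it is done. To begin I would invoke Theorem~\ref{theorem:reversible_CTM} on $M$ to obtain a reversible catalytic machine $M_\rightarrow$ (with inverse $M_\leftarrow$) whose clean space is still $O(\log n)$ and whose catalytic space is still $\poly(n)$. From that point on the only properties I need are that $M_\rightarrow$'s one-step transition is a logspace-computable bijection on configurations and that $M_\rightarrow$ halts on every input for every initial catalytic tape.

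Next I would make $M_\rightarrow$ oblivious by the standard ``sweeping'' trick. The new machine $M^o$ simulates $M_\rightarrow$ one step at a time, but realizes each step as a \emph{mega-step} in which both tape heads traverse their tapes in a fixed zig-zag pattern: the work-tape head sweeps over its $O(\log n)$ cells and back, and the catalytic head sweeps over all $\poly(n)$ catalytic cells and back. The true position of $M_\rightarrow$'s catalytic head is maintained as a pointer in the clean tape (only $O(\log c) = O(\log n)$ bits); during the sweep $M^o$ keeps a position counter and, at the unique moment at which the counter equals the pointer, applies $M_\rightarrow$'s transition to that cell, updates the finite control, and moves the pointer by $\pm 1$. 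To prevent the transition from firing a second time once the pointer has moved past the current sweep position (or again on the return leg), $M^o$ carries a one-bit flag flipped from $0$ to $1$ at the moment of firing and flipped back from $1$ to $0$ at the start of the next mega-step, which is legitimate because in the running phase exactly one transition fires per mega-step. After this transformation the motion of both heads is a function of $n$ and the timestep alone, and the resources are still $O(\log n)$ clean and $\poly(n)$ catalytic.

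I would then verify that reversibility survives, and handle non-halting. Each mega-step is a composition of manifestly reversible primitives --- the two fixed head sweeps, the increment/decrement of the counter, the bijective one-step transition of $M_\rightarrow$ applied at a single cell, and the flag flips --- hence a bijection on configurations, whose inverse is obtained by reversing the sweeps and substituting $M_\leftarrow$ for $M_\rightarrow$. To deal with the catalytic-tape-dependent halting time, I would have $M^o$, upon $M_\rightarrow$'s control entering its halting state (at which point the answer is ready and, by the catalytic property, the catalytic tape equals its initial value), copy the answer onto its output tape, flip the ``determined'' cell from \emph{unknown} to \emph{known}, and enter a \emph{quiescent loop} in which every subsequent mega-step still performs the fixed head sweeps --- so obliviousness is preserved --- but applies the identity to the catalytic tape and leaves the output and the ``determined'' cell untouched. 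The quiescent loop is reversible in isolation, and the transition into it is $M_\rightarrow$'s final step together with a single bit-flip, so $M^o$ remains globally reversible; and at the mega-step corresponding to $M_\rightarrow$'s halting, $M^o$ has written the correct output, restored the catalytic tape, and marked the output as determined, exactly as the definition of a non-halting oblivious reversible catalytic Turing machine requires.

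The step I expect to be the main obstacle is precisely reconciling reversibility with obliviousness. The sweep simulation forces in bookkeeping data --- the pointer, the position counter, the anti-refire flag, the ``determined'' cell --- and each must be updated reversibly, i.e.\ incremented, decremented, or flipped against a maintained invariant rather than overwritten, after which one has to check that the assembled mega-step and the running-to-quiescent transition are honest bijections of the entire configuration space. I am deliberately not equipping $M^o$ with a global clock, since it would not fit in $O(\log n)$ bits and is exactly the reason a reversible machine cannot in general be made \emph{totally} oblivious; this is why the construction yields a \emph{non-halting} oblivious machine rather than a genuinely halting one.
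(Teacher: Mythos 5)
Your first two stages are essentially the paper's own proof: invoke Theorem~\ref{theorem:reversible_CTM} to get a reversible machine, then make the head motion oblivious by sweeping both tapes once per simulated step, keeping the true head positions as $O(\log n)$-bit bookkeeping on the clean tape and firing the one-step transition only at the cell where the sweep counter matches the stored position (the paper phrases this as checking whether the tape head is ``really there'', and marks the intended clean-tape head position with paired cells rather than a binary pointer, but the two devices are interchangeable). The resource accounting and the reversibility of each individual mega-step are fine.

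The genuine gap is the quiescent loop, and the assertion that ``$M^o$ remains globally reversible''. Reversibility here must mean that the mega-step map is an injection on the whole configuration space --- that is exactly what the later use in \Cref{thm:cl_in_dqc1} needs, since the step is implemented as a unitary permutation of basis states. Your quiescent phase makes the post-halting configuration $q$ (answer copied, determined bit set, heads parked) a fixed point of the step map, while the halting configuration of $M_\rightarrow$ is also mapped to $q$ by the entry transition; hence $q$ has two preimages and the global map is not a bijection. ``Each piece is a bijection'' does not suffice when the image of the entry step lands inside the image of the idle map, and the defect is not repairable by any idle phase that is finite and closed under the step map: injectivity on such a closed finite set forces it to be a permutation of that set, so its entry configuration already has a preimage inside the set, and the external entry edge creates a second one. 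The paper sidesteps this not in this lemma (whose proof simply drops the halting and post-halting requirements and defers them) but in Lemma~\ref{lemma:cl_polynomial_halt}: rather than idling, the machine keeps doing reversible work, re-running and uncomputing $M$ in repeated cycles while incrementing a small cycle counter and leaving the output and determined cells untouched after the first cycle, so no two trajectories ever merge and the output stays intact for polynomially many steps. You should either defer the post-halting behavior in the same way or replace your identity loop by such a recompute/uncompute-with-counter stall; as written, that step fails.
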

\begin{proof}
    By \cite{Dulek15,CookLiMertzPyne25}, $M$ can always be assumed to be reversible. We claim we can also make $M$ oblivious by sacrificing the condition that $M$ is absolutely halting. This also interferes with what is meant by the machine being catalytic, but the new machine no longer needs to be catalytic. 
    
    To make the machine oblivious, we make two modifications. The first applies to operations on the clean tape. The second applies to operations on the catalytic tape. On the clean tape, we double the size of the clean tape of $M$, breaking it up into pairs. The first entry of the pair stores the original data while the second keeps track of the position of where the tape head is `supposed' to be.
    Then by iterating over all positions on the clean tape of the Turing machine in every step of the original Turing machine, operations on the clean tape of the Turing machine can be made oblivious. Similarly, for operations on the catalytic tape, we can maintain an additional part of the clean tape that keeps track of the position of the catalytic tape head position. By iterating over all possible positions of the catalytic tape head and checking if the tape head is `really there', we can make catalytic tape operations oblivious.
\end{proof}

We call the machine formed this way $M^o$ for oblivious $M$. 
Since the catalytic and clean tape are no more than polynomial length, this procedure adds at most a polynomial factor to the runtime. However, since the runtime of $M$ may be super-polynomial and an oblivious machine has the same runtime for all inputs $x$ of the same length and catalytic tapes $c$, the machine does not have enough clean space to keep a clock to know whether or not it has terminated. This means we cannot assume it to be halting. However, we can show that it is halting with sufficient probability:

\begin{lemma}
    \label{lemma:cl_polynomial_halt}
    For any language $L$ in $\CL$ that is recognized by a catalytic Turing machine $M$, there exists a reversible oblivious catalytic Turing machine $N$ that halts with polynomial probability 
    that also recognizes $L$. Furthermore, $N$ also uses $O(\log |x|)$ clean space and polynomial catalytic space.
\end{lemma}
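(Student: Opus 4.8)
The plan is to start from the machine $M^o$ produced by Lemma~\ref{lemma:cl_obliv_no_halt}, which is reversible and oblivious but may fail to halt absolutely because it has no room for a clock. The key observation is that, by Lemma~\ref{lem:average} (the classical version due to \cite{buhrman2014computing}, or equivalently the averaging argument over orthogonal configuration graphs), the machine $M$ — and hence $M^o$, up to a polynomial blowup in runtime — halts in polynomial time on a $1/\poly(n)$ fraction of the initial catalytic tapes. More precisely, $\mathbb{E}_{\tau}[T(M,x,\tau)] = 2^{O(s)} = \poly(n)$ for $s = O(\log n)$, so by Markov's inequality there is a polynomial $p(n)$ such that $M$ halts within $p(n)$ steps on at least a $1/2$ (say) fraction of the $\tau \in \{0,1\}^c$; after the oblivious transformation this becomes some fraction $1/q(n)$ within $p'(n) = \poly(n)$ steps.

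First I would have $N$ simulate $M^o$ while maintaining a single extra "status" cell on the output tape, initialized to `unknown'. Since $M^o$ is oblivious, its tape-head movements at each timestep depend only on $|x|$, so $N$ inherits obliviousness. The difficulty is that $M^o$ uses the catalytic tape and the absolute-halting condition to "know" it is done; once we drop absolute halting we must make sure that (a) the catalytic tape is genuinely restored on the fraction of inputs where $M$ halts within $p'(n)$ steps, and (b) $N$ detects this and writes the answer together with status `known', and otherwise leaves status `unknown' and never writes a wrong answer. For (b), note that in the reversible simulation the halting configuration of $M$ is syntactically recognizable (it is the designated halt state), so when $N$ observes $M^o$ enter that state at some step $\le p'(n)$ it copies the contents of the output region into its own output tape, sets the status cell to `known', and then — crucially, to preserve reversibility — continues running $M^o$'s dynamics for the remaining steps up to $p'(n)$ as a no-op on the frozen output/status cells (the reversible register-program structure of $M^o$ guarantees the catalytic tape, having been reset at the true halt time, simply returns to that reset state under further oblivious steps, or we pad with the identity). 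For (a), the catalytic-restoration guarantee is exactly what the catalytic property of $M$ gives us on each individual $\tau$; we only need it on the good fraction, and on the bad fraction $N$ is allowed to leave the catalytic tape arbitrary and the output `unknown'.

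The main obstacle I expect is reconciling reversibility with the early-stopping/freezing behavior: a naive "if halted, stop writing" branch is not reversible, and we cannot afford a clock to run everyone for exactly $p'(n)$ steps while remembering whether we have already halted. The resolution is to exploit that $M^o$ is built from a \emph{clean register program}-style reversible computation whose step function is a fixed reversible (Toffoli-type) operation at each timestep depending only on the timestep counter — but wait, there is no timestep counter available; so instead one runs $M^o$'s reversible transition exactly $p'(n)$ times where $p'(n)$ \emph{is} hard-coded into the controller (this fits in $O(\log n)$ space as it is the length of a counter up to a polynomial), and the "copy output" action is itself a reversible controlled-copy from $M^o$'s halt-state indicator into a fresh block of the clean tape, which is harmless because that block starts in a known all-zero state. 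Thus $N$'s runtime is the fixed polynomial $p'(n)$ for all $x$ of a given length and all catalytic tapes, $N$ is oblivious and reversible, uses $O(\log n)$ clean space and polynomial catalytic space, and on a $1/q(n)$ fraction of initial catalytic tapes it halts with the correct answer and status `known', never writing a wrong answer otherwise — which is exactly the definition of halting with polynomial success probability.
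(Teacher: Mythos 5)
Your overall framing (Markov's inequality applied to the average running time bound of \cite{buhrman2014computing}, an extra ``determined'' status cell written reversibly into zero-initialized clean cells, and an externally fixed polynomial observation time) matches the paper's, but the step where you handle what happens \emph{after} the simulated machine halts has a genuine gap. You assert that once $M^o$ reaches its halting configuration, $N$ can ``continue running $M^o$'s dynamics as a no-op on the frozen output/status cells,'' or alternatively ``pad with the identity.'' Neither option works as stated. Freezing conditioned on having halted---apply the identity in the halting configuration, the transition otherwise---is not injective (the halting configuration would acquire two predecessors: itself and the configuration that just halted), so it destroys reversibility; and if you instead let the reversible oblivious dynamics simply keep running past the halting time, there is no guarantee that they leave the output tape or the restored catalytic tape alone: a reversible machine cannot sit still, and its continued evolution may overwrite the output, possibly with a wrong answer. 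This is exactly the difficulty the paper identifies (``in making the machine reversible and oblivious, it may later change the value in the output tape, including incorrect values''). Your appeal to a ``clean register program''-style structure of $M^o$ is also misplaced: $M^o$ is the oblivious--reversible simulation of an arbitrary $\CL$ machine, not a register program, so no such algebraic guarantee is available. Note also that an internal clock cannot rescue this, since the worst-case runtime of $M$ may be exponential and a counter for it does not fit in $O(\log n)$ clean space; only the external observation time is polynomial.

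The paper closes this gap by modifying $M$ \emph{before} invoking Lemma~\ref{lemma:cl_obliv_no_halt}: the machine $M'$ runs $M$ for $l(n)=2^{s(n)}$ cycles, after each cycle uncomputes its clean space (leaving the output bit and the ``determined'' flag untouched after the first cycle) and increments a cycle counter of $O(\log n)$ bits. The counter is what makes the repeated stalling reversible (it breaks the symmetry that makes ``do nothing'' non-injective), and since each cycle takes at least one step, at the observation time $t=l(n)+1$ the output tape is guaranteed to be either undetermined or the correct value written during the first cycle; the stalling protects the output throughout the polynomial window. Markov's inequality then gives that a $1/(l(n)+1)=1/\poly(n)$ fraction of catalytic tapes complete the first cycle in time, yielding the polynomial success probability. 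If you replace your ``freeze / pad with identity'' step by this cycle-and-counter construction, the rest of your argument goes through essentially as you wrote it.
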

\begin{proof}
    We observe that $M^o$ in Lemma \ref{lemma:cl_obliv_no_halt} is simulated step-by-step, meaning that not only do we reach the same outcome, but up to a fixed transformation and a slower runtime, $M^o$ passes through the same intermediate states. If we consider a modified version of $M$ in the first place, we can ensure that the machine halts with polynomial success probability. We modify the original machine $M$ to form the machine $M'$ in the following ways:
    \begin{enumerate}
        \item The machine $M'$ repeats the original machine $M$ $l(n) = 2^{s(n)}$ times where $s(n)$ is the length of the clean tape.
        Each repetition is called a cycle.
        \item The space used for writing the output of $M$ originally is extended by one additional bit. This bit starts out in $0$, signifying the output is `undetermined'. These to bits together are called the output state.
        \item After each cycle, the machine cleans itself. This means that it resets the clean space to the all $0$ state, reversing the computation except the output state, which is left unaltered except for the first cycle.
        \item After the first cycle, the correct output of the computation is written to the output tape and the second bit of the output state is flipped, signifying the output is `determined'. For every subsequent cycle, a counter that counts up to $2l(n)$ is incremented by 1, but the output tape is left unchanged.
    \end{enumerate}

    We claim that the machine $M'^o$ halts with polynomial probability. Suppose that the runtime of a cycle of $M'$ on input $x$ and catalytic tape $c$ is $f(x, c)$. The expected runtime of a cycle of $M'$, $f(x)$, over a uniform distribution of catalytic tapes for fixed $x$ is at most $l(n)$ by close analysis of the polynomial expected time bound given in \cite{buhrman2014computing}. 
    Let us examine the output at time $t(x) = t(|x|) = l(n) + 1$.
    Then, by Markov's inequality on a uniform distribution of possible catalytic tapes
    \begin{align*}
        \mathbb{P}(\text{output undetermined}) 
        &\leq \frac{f(x)}{t(x)} \\
        &\leq \frac{l(n)}{t(x)} \\
        &= \frac{l(n)}{l(n) + 1} = 1 - \frac{1}{l(n) +1}
    \end{align*}
    This means that at time $t(n)$, the probability that an output is written to the clean tape is at least $1 / (l(n) + 1) = 1 / \poly(n)$. After the first cycle, the output must be correct. However, afterwards, we have no control over what is written onto the output tape. In making the machine reversible and oblivious, it may later change the value in the output tape, including incorrect values. This is why we repeat each cycle many times. This is $M'$ stalling to preserve the correctness of the output. Since the cycle is repeated $2l(n)$ times and each cycle uses time at least 1 to increment the counter, this means that at time $t(x) = l(n) + 1$ the value in the output tape is guaranteed to be correct or undetermined.

    Let $N = M'^o$. Then $N$ is reversible, oblivious and halts with polynomial probability. Since $t(x) = l(n) + 1$ and $l(n)$ or any upper bound on $l(n)$ (which is sufficient) is readily computable, this completes the proof.
\end{proof}

\noindent
This completes all technical components necessary to show that $\CL \subseteq \DQCOne$.

\begin{proof}[Proof of \Cref{thm:cl_in_dqc1}]
    The maximally mixed state of $\DQCOne$ can be interpreted as uniformly randomly sampling computational basis states. If we take these basis states to be the catalytic tape and use the fact that $\DQCOne$ is unchanged if we allow a logarithmic number of clean qubits, then we can run the machine $N$ from Lemma \ref{lemma:cl_polynomial_halt} by using unitary gates instead of reversible, oblivious operations. When we measure the output bit at the end, we get either an indeterminate state or the correct output with certainty. If we get an indeterminate state, we output a random bit and thus output the correct answer with probability $1/2$. If not, then we output the correct answer, which occurs with probability at least $1/\poly(n)$.
\end{proof}

\section*{Acknowledgements}

We thank \cite{mathoverflow_bound} for pointing us to \cite{alon_bound}. SS acknowledges support from the Royal Society University Research Fellowship.

\bibliographystyle{alpha}
\bibliography{bibliography}

\end{document}